\newcommand\reallywidehat[1]{
\savestack{\tmpbox}{\stretchto{
  \scaleto{
    \scalerel*[\widthof{\ensuremath{#1}}]{\kern-.6pt\bigwedge\kern-.6pt}
    {\rule[-\textheight/2]{1ex}{\textheight}}
  }{\textheight}
  }{0.5ex}}
\stackon[1pt]{#1}{\tmpbox}
}
\numberwithin{equation}{section}
\newtheorem{theo}{Theorem}
\newtheorem{lem}{Lemma}
\newtheorem{coro}{Corollary}
\newtheorem{definition}{Definition}
\newtheorem{example}{Example}
\newtheorem{rem}{Remark}
\title{New constant-dimension subspace codes from parallel cosets of optimal Ferrers diagram rank-metric codes and multilevel inserting constructions }
\author{Gang Wang\textsuperscript{1$^{\ast}$} \and Hong-Yang Yao\textsuperscript{1} \and Fang-Wei Fu\textsuperscript{2}}
\date{\small\textsuperscript{1}College of Science, Civil Aviation University of China, 300300, Tianjin, China. E-mail: gwang06080923@mail.nankai.edu.cn; hyyao0603@163.com.\\
\small\textsuperscript{2}Chern Institute of Mathematics and LPMC, Nankai University, 300071, Tianjin, China. E-mail: fwfu@nankai.edu.cn.
\\$^*$Corresponding author}
\begin{document}

\maketitle

\begin{abstract}
Constant-dimension subspace codes (CDCs), a special class of subspace codes, have attracted significant attention due to their applications in network coding. A fundamental research problem of CDCs is to determine the maximum number of codewords under the given parameters. The paper first proposes the construction of parallel cosets of optimal Ferrers diagram rank-metric codes (FDRMCs) by employing the list of CDCs and inverse list of CDCs. Then a new class of CDCs is obtained by combining the parallel cosets of optimal FDRMCs with parallel linkage construction. Next, we present a novel set of identifying vectors and provide a new construction of CDCs via the multilevel constuction. Finally, the coset construction is inserted into the multilevel construction and three classes of large CDCs are provided, one of which is constructed by using new optimal FDRMCs. Our results establish at least 65 new lower bounds for CDCs with larger sizes than the previously best known codes.
\end{abstract}

\noindent
 {\it Keywords.} Constant-dimension subspace codes, linkage construction, parallel cosets of optimal Ferrers diagram rank-metric codes, multilevel inserting construction, Ferrers diagram rank-metic codes. \\

{\bf Mathematics Subject Classification: 11R32, 12E10, 11G20 }

\section{Introduction}

Subspace codes, particularly constant-dimension subspace codes (CDCs), have attracted great research interest owing to their extensive applications in random network coding. Following the foundational work by Kötter and Kschischang \cite{1}, subspace codes have demonstrated significant advantages in both error-correction capability and transmission efficiency over operator channels.

Let \( q \) be a prime power, \( \mathbb{F}_q \) be the finite field of order \( q \) and \( \mathbb{F}_q^n \) be an \( n \)-dimensional vector space over \( \mathbb{F}_q \).  
The set of all subspaces of \( \mathbb{F}_q^n \) forms the projective space of order \( n \) over \( \mathbb{F}_q \), denoted by \( \mathcal{P}_q(n) \). 
The projective space \( \mathcal{P}_q(n) \), as metric space, is endowed with the natural measure $d_S(\mathcal{U}, \mathcal{V}) \triangleq \dim(\mathcal{U}+\mathcal{V}) - \dim(\mathcal{U} \cap \mathcal{V})$ for any two distinct subspaces \( \mathcal{U}, \mathcal{V} \in  \mathcal{P}_q(n) \).
For a non-negative integer \( k \leq n \), the collection of all \( k \)-dimensional subspaces of \( \mathbb{F}_q^n \) constitutes the Grassmannian \( \mathcal{G}_q(n, k) \), whose cardinality is given by the Gaussian binomial coefficient

\[
|\mathcal{G}_q(n, k)| = \binom{n}{k}_q \triangleq \prod_{i=0}^{k-1} \frac{q^{n-i} - 1}{q^{k-i} - 1}.
\]

For two subspaces \( \mathcal{U}, \mathcal{V} \in \mathcal{G}_q(n, k) \), the subspace distance is
\begin{align}
d_S(\mathcal{U}, \mathcal{V}) &\triangleq \dim(\mathcal{U}) + \dim(\mathcal{V}) - 2 \dim(\mathcal{U} \cap \mathcal{V}) \nonumber \\
&= 2k - 2 \dim(\mathcal{U} \cap \mathcal{V})\nonumber.
\end{align}
A non-empty subset \( \mathcal{C} \subseteq \mathcal{G}_q(n, k) \) is called an \((n, M, d, k)_q\)-CDC, if it satisfies \( |\mathcal{C}| = M \) and \( d_S(\mathcal{U}, \mathcal{V}) \geq d \) for any distinct subspaces \( \mathcal{U}, \mathcal{V} \in \mathcal{C} \), where \( |\mathcal{C}| \) represents the cardinality of \( \mathcal{C} \). Given the parameters \( n \), \( d \), \( k \) and \( q \), let \( A_q(n, d, k) \) denote the maximum size among all \((n, d, k)_q\)-CDCs. A code achieving \( A_q(n, d, k) \) is termed optimal. For CDCs, determining the exact value of \( A_q(n, d, k) \) remains a fundamental open problem.

In recent years, the significant progress has been made in constructing CDCs. Kötter and Kschischang \cite{1} pioneered the study of subspace codes in the context of error correction for random network coding. In \cite{15}, Silva et al. subsequently introduced the concept of lifting maximum rank distance (MRD) codes to construct CDCs. Nevertheless, this method often fails to achieve theoretical optimality under multiple perturbations. Given the persistent challenge of attaining optimality, more constructions for CDCs were provided subsequently to improve the lower bounds of \( A_q(n, d, k) \). 

Etzion and Silberstein \cite{2} introduced the multilevel construction, an effective method that improved the lower bound on the cardinality of CDCs by introducing identifying vectors and Ferrers diagram rank-metric codes (FDRMCs), which generalized the lifting MRD codes \cite{15}. Furthermore, the multilevel construction was extended by pending dots \cite{3,4} and pending blocks \cite{5}. Liu et al. \cite{6} then derived the parallel multilevel construction, which increased the cardinality of CDCs. Later, Liu and Ji \cite{7} presented the inverse multilevel construction and combined it with the multilevel construction to define the double multilevel construction, generalizing the parallel multilevel construction. Additionally, Yu et al. \cite{8} extended the double multilevel construction to the bilateral multilevel construction and Hong et al. \cite{26} further provided the generalized bilateral multilevel construction and presented an effective method to construct CDCs. Recently, Wang et al. \cite{wang} provided an effective construction by inserting the inverse bilateral multilevel construction into the double multilevel construction and bilateral multilevel construction.

The linkage construction is another method for constructing CDCs. Gluesing-Luerrssen and Trohn \cite{G-L} linked two shorter CDCs without compromising the subspace distance to construct longer CDCs. In \cite{9}, Cassidante et al. promoted the linkage construction based on rank-metric codes (RMCs), which improved several lower bounds on the cardinality of CDCs. The linkage construction was innovatively modified by using the coset construction by Heinlein and Kurz in \cite{co1}. Chen et al. \cite{11} further proposed parallel constructions of CDCs by using lifted MRD codes. Subsequent references \cite{ 17,25,insert1,insert2,insert3,insert4,insert5,insert6} have generalized above results, yielding enhanced lower bounds on the size of codewords of CDCs. Moreover, the connections between CDCs and mixed dimension codes (MDCs) were established in \cite{mix1}. Lao et al. utilized MDCs to construct large CDCs and He et al. \cite{mix2} further generalized this work to achieve even larger cardinality of CDCs. In addition, Li and Fu \cite{mix3} inserted more CDCs into MDCs and obtained some new lower bounds on CDCs. Recently, Xu and Song \cite{co2} proposed the list of CDCs constructed via the cosets of optimal FDRMCs, which further generalized the coset construction and increased the number of codewords for CDCs. Over the past decade, the significant research efforts have been devoted to establishing the lower bounds for cardinalities of CDCs, which are summarized in \cite{table}.

However, several current challenges persist in the study of lower bounds for CDCs: (1) insertion of additional codewords without compromising the subspace distance, (2) optimization of selection for identifying vectors set and (3) verification of optimality for FDRMC in the multilevel construction. Based on the aforementioned methodologies, the paper focuses on constructing large CDCs to address the above challenges. Firstly, we propose the construction of parallel cosets of optimal FDRMCs by employing the lists of CDCs and inverse lists of CDCs. By combining parallel cosets of optimal FDRMCs with parallel linkage construction, a new class of CDCs with large cardinality is obtained (see Theorem~\ref{thm:concatenated_cdc2}). Next, a novel set of identifying vectors is introduced, leading to a new construction of CDCs by the multilevel construction (see Theorem~\ref{th:1}). Then, under the specific conditions, we insert the coset construction into the constructions of Theorem~\ref{th:1} and propose the multilevel inserting constructions (see Theorem~\ref{th:2}). Additionally, a new class of optimal FDRMCs is proposed in Theorem~\ref{th:3} and the new optimal FDRMCs are applied in Theorem~\ref{th:1}, resulting in an alternative construction of CDCs (see Theorem~\ref{th:4}). Finally, when the parameters satisfy additional constraints, the coset construction can be inserted into the construction of Theorem~\ref{th:4}, yielding a new class of CDCs (see Theorem~\ref{th:5}). Through the above constructions, lots of lower bounds for specific CDCs can be improved.

The remainder of this paper is organized as follows. Section 2 reviews the fundamental definitions and previous results of CDCs. In Section 3, a new construction of CDCs is proposed by combining the parallel cosets of optimal FDRMCs with parallel linkage construction. Section 4 presents several new classes of CDCs based on multilevel inserting constructions. Section 5 concludes this paper.

\section{Preliminaries}
In this section, the basic definitions and known results required for our constructions will be reviewed.

\subsection{Rank Metric Codes}\label{subsec:rank_metric}
Let \( \mathbb{F}_q^{m \times n} \) denote the set of all \( m \times n \) matrices over the finite field \( \mathbb{F}_q \). For any two matrices \( A, B \in \mathbb{F}_q^{m \times n} \), the rank distance between them is defined as
\[
d_R(A, B) \triangleq \mathrm{rank}(A - B).
\]
An \emph{\((m \times n, M, \delta)_q\)} rank-metric code (RMC) is a subset \( \mathcal{R} \subseteq \mathbb{F}_q^{m \times n} \), where the rank distance between any two distinct codewords \( A, B \in \mathcal{R} \) satisfies \( d_R(A, B) \geq \delta \) and \( M = |\mathcal{R}| \). Furthermore, if \( \mathcal{R} \) forms a \( k \)-dimensional \( \mathbb{F}_q \)-linear subspace of \( \mathbb{F}_q^{m \times n} \) with minimum rank distance \( \delta \), it is denoted as a linear \([m \times n, k, \delta]_q\) RMC. 
For an \((m \times n, M, \delta)_q\) RMC, its Singleton-like upper bound is $ M \leq q^{\max\{m,n\}(\min\{m,n\} - \delta + 1)}$ \cite{13,14}. RMCs achieving this bound are called maximum rank distance (MRD) codes. The existence of linear MRD codes has been proven for all feasible parameters \cite{13,14}. 

The rank distribution of an MRD code is uniquely determined by its parameters, as stated below.

\begin{lem}[{\cite{14}}]\label{lemma:rank_dist}
Let \( m \), \( n \), \( \delta \), \( r \) be positive integers with \( \delta \leq r \leq \min\{m, n\} \). Let \( \mathcal{M} \) be a linear \([m \times n, \delta]_q\) MRD code. Define \( a(q, m, n, \delta, r) = | \{ M \in \mathcal{M} \mid \operatorname{rank}(M) = r \} | \). Then the rank distribution of \( \mathcal{M} \) is given by
\[
a(q, m, n, \delta, r) = 
\begin{bmatrix}
\min\{m, n\} \\
r
\end{bmatrix}_q 
\sum_{i=0}^{r-\delta} (-1)^i q^{\binom{i}{2}} 
\begin{bmatrix}
r \\
i
\end{bmatrix}_q 
\left( q^{\max\{m,n\}(r - i - \delta + 1)} - 1 \right).
\]
\end{lem}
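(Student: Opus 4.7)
The plan is to prove the formula in three stages: (i) count the codewords whose column space is contained in a fixed subspace of $\mathbb{F}_q^m$; (ii) recover the count of codewords with column space exactly equal to a fixed subspace via Möbius inversion on the subspace lattice; (iii) sum over all $r$-dimensional subspaces. Without loss of generality assume $m \leq n$, so that $\min\{m,n\} = m$, $\max\{m,n\} = n$, and $|\mathcal{M}| = q^{n(m - \delta + 1)}$; the opposite case is handled by transposing the code.

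For an $s$-dimensional subspace $U \subseteq \mathbb{F}_q^m$, let $\alpha_s$ denote the number of codewords $M \in \mathcal{M}$ with $\mathrm{Col}(M) \subseteq U$; by $\mathrm{GL}_m(\mathbb{F}_q)$-equivariance this depends only on $s$. The key identity to establish is
\[
\alpha_s \;=\; q^{\, n\, \max\{\, s - \delta + 1,\, 0\,\}}.
\]
If $s < \delta$, any such $M$ has rank at most $s < \delta$, forcing $M = 0$ and $\alpha_s = 1$. If $s \geq \delta$, choose a basis so that $U = \mathrm{span}(e_1, \dots, e_s)$; then $\alpha_s$ equals the size of the kernel of the projection $\pi \colon \mathcal{M} \to \mathbb{F}_q^{(m-s) \times n}$ onto the last $m - s$ rows, and the claim is equivalent to $\pi$ being surjective. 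This is the one genuinely nontrivial step and the main obstacle, and it is where the MRD hypothesis is essentially used. I would establish it via Delsarte's duality theorem: the dual code $\mathcal{M}^{\perp}$ (under the trace pairing) is itself MRD with minimum rank distance $m - \delta + 2$, and a nonzero element in the $\mathbb{F}_q$-orthogonal complement of $\pi(\mathcal{M})$ would lift, by padding with zeros on the top $s$ rows, to a nonzero matrix in $\mathcal{M}^{\perp}$ supported only on the bottom $m - s$ rows, forcing its rank to be at most $m - s \leq \delta - 1$ and contradicting the minimum distance of $\mathcal{M}^{\perp}$.

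Let $\beta_r$ denote the number of codewords with column space exactly equal to a fixed $r$-dimensional subspace. From $\alpha_r = \sum_{V \subseteq U}\beta_{\dim V}$ and the Möbius function $\mu(V,U) = (-1)^{\dim U - \dim V} q^{\binom{\dim U - \dim V}{2}}$ of the lattice of $\mathbb{F}_q$-subspaces, Möbius inversion yields
\[
\beta_r \;=\; \sum_{i=0}^{r} (-1)^i q^{\binom{i}{2}} \binom{r}{i}_q \alpha_{r - i}.
\]
Writing $\alpha_{r - i} = 1 + (q^{n(r - i - \delta + 1)} - 1)$ for $i \leq r - \delta$ and $\alpha_{r - i} = 1$ for $i > r - \delta$, and invoking the $q$-binomial identity $\sum_{i=0}^{r}(-1)^i q^{\binom{i}{2}}\binom{r}{i}_q = 0$ (valid for $r \geq 1$, a specialization of the $q$-binomial theorem) to kill the constant-$1$ contribution, one obtains
\[
\beta_r \;=\; \sum_{i=0}^{r - \delta} (-1)^i q^{\binom{i}{2}} \binom{r}{i}_q \bigl(q^{n(r - i - \delta + 1)} - 1\bigr),
\]
the $i = r - \delta + 1$ term being absent because the bracketed factor vanishes there. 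Multiplying by $\binom{m}{r}_q$, the number of $r$-dimensional subspaces of $\mathbb{F}_q^m$, recovers the stated formula. Beyond the surjectivity step, the remainder is purely combinatorial bookkeeping (Möbius inversion plus a single $q$-binomial cancellation), so essentially the whole weight of the proof rests on the appeal to MRD duality in the key counting step.
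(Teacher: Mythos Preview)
The paper does not give a proof of this lemma at all; it is quoted from Delsarte~\cite{14} and used as a black box. So there is no ``paper's own proof'' to compare against, and your proposal is supplying an argument where the paper has none.

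Your argument is correct and is in fact the standard modern route to the rank distribution of MRD codes: compute the cumulative counts $\alpha_s$ via the puncturing/surjectivity argument, then invert over the subspace lattice. Delsarte's original derivation proceeds instead through the association-scheme machinery and the rank-metric MacWilliams identity, solving for the weight enumerator of an MRD code from that of its dual; both approaches ultimately rest on the same duality fact (that $\mathcal{M}^{\perp}$ is again MRD, with minimum distance $m-\delta+2$), which you correctly identify as the crux. Your route is arguably more transparent, since it isolates exactly one nontrivial step and reduces everything else to lattice combinatorics.

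One small imprecision worth tightening: the phrase ``by $\mathrm{GL}_m(\mathbb{F}_q)$-equivariance this depends only on $s$'' is not literally justified, because $\mathcal{M}$ itself need not be $\mathrm{GL}_m$-invariant. What actually makes $\alpha_s$ independent of the choice of $U$ is that for any $g\in\mathrm{GL}_m(\mathbb{F}_q)$ the code $g\mathcal{M}$ is again a linear MRD code with the same parameters, so your surjectivity argument applies to it verbatim and yields the same value $q^{n(s-\delta+1)}$. You have all the ingredients for this; just phrase it as ``after a change of basis on $\mathbb{F}_q^m$, we may assume $U=\mathrm{span}(e_1,\dots,e_s)$, and the resulting code is still MRD with the same parameters'' rather than invoking equivariance of $\mathcal{M}$.
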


The lifted MRD codes were proposed in \cite{15}. Given a linear \([k \times n, \delta]_q\) MRD code \(\mathcal{M}\) with \(n \geq k\), define
\[
\mathcal{C} = \left\{ \mathrm{rs}(I_k \mid M) \mid M \in \mathcal{M} \right\},
\]
where \(I_k\) denotes the \(k \times k\) identity matrix and \(\mathrm{rs}(I_k \mid M)\) represents a subspace of \( \mathbb{F}_q^{k \times n} \) generated by all rows of the matrix \((I_k \mid M)\). Then, \(\mathcal{C}\) is an \((n, q^{k(n - \delta + 1)}, 2\delta, k)_q\)-CDC, referred to as a lifted MRD code.

Building on the above, the classical RMC is generalized through the concept of rank-metric code with given ranks (GRMC), which is revisited as follows.

Let \( m \), \( n \), \( \delta \) be positive integers and \( K \subseteq \{0, 1, 2, \ldots, \min\{m, n\}\} \). An \((m \times n, \delta)_q\) RMC is called an \((m \times n, \delta, K)_q\)-GRMC if \( K \) contains the ranks of all codewords. Given \( m \), \( n \), \( \delta \) and \( K \), let \( A_q^G (m \times n, \delta, K) \) denote the maximum cardinality among all the \((m \times n, \delta, K)_q\)-GRMCs. For integers \( t_1 \leq t_2 \), let \([t_1, t_2]\) denote the set of integers \( x \) satisfying \( t_1 \leq x \leq t_2 \). Then, a lower bound for \( A_q^G (m \times n, \delta, [t_1, t_2]) \) is subsequently derived through Lemma 2.

\begin{lem}[{\cite{6}}]\label{lemma:GRMC_bound}
Let \( m \), \( n \), \( \delta \) be positive integers with \( \delta \leq n \leq m\). Let \(t_1\) be a nonnegative integer and \(t_2\) be a positive integer such that \(t_1 \leq t_2 \leq n\). Then
\[
A_q^G (m \times n, \delta, [t_1, t_2]) \geq 
\begin{cases} 
\displaystyle \sum_{i=t_1}^{t_2} a(q, m, n, \delta, i), & \text{if } t_2 \geq \delta; \\ 
\displaystyle \max_{\substack{\max\{1, t_1\} \leq a < \delta}} \left\lceil \frac{\sum_{i=\max\{1, t_1\}}^{t_2} a(q, m, n, a, i)}{q^{m(\delta-a)} - 1} \right\rceil, & \text{otherwise.}
\end{cases}
\]
\end{lem}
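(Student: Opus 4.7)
The plan is to handle the two branches of the bound separately, both starting from the existence of linear MRD codes for arbitrary feasible parameters.

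Case 1 ($t_2 \geq \delta$). I would fix a linear $[m \times n, \delta]_q$ MRD code $\mathcal{M}$ and set $\mathcal{M}' = \{M \in \mathcal{M} : t_1 \leq \operatorname{rank}(M) \leq t_2\}$. Any two distinct matrices $M_1, M_2 \in \mathcal{M}'$ satisfy $M_1 - M_2 \in \mathcal{M} \setminus \{0\}$, so $d_R(M_1, M_2) \geq \delta$. Thus $\mathcal{M}'$ is an $(m \times n, \delta, [t_1,t_2])_q$-GRMC, and Lemma~\ref{lemma:rank_dist} gives $|\mathcal{M}'| = \sum_{i=t_1}^{t_2} a(q,m,n,\delta,i)$, yielding the first branch (the terms with $1 \leq i \leq \delta - 1$ vanish automatically, which is consistent with the sum).

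Case 2 ($t_2 < \delta$). For each admissible $a$ with $\max\{1,t_1\} \leq a < \delta$, I would work with a nested pair $\mathcal{M}_\delta \subseteq \mathcal{M}_a$ of linear MRD codes with parameters $[m \times n, \delta]_q$ and $[m \times n, a]_q$; such nesting is available via Gabidulin codes by restricting the maximum degree of the $q$-linearized polynomials. Since $|\mathcal{M}_a|/|\mathcal{M}_\delta| = q^{m(\delta-a)}$, the quotient $\mathcal{M}_a/\mathcal{M}_\delta$ has exactly $q^{m(\delta-a)}$ cosets. Every nonzero element of $\mathcal{M}_\delta$ has rank at least $\delta > t_2$, so the zero coset contributes no matrix of rank in $[\max\{1,t_1\},t_2]$. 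Consequently, the $N_a := \sum_{i=\max\{1,t_1\}}^{t_2} a(q,m,n,a,i)$ codewords of $\mathcal{M}_a$ having rank in this range (counted by Lemma~\ref{lemma:rank_dist}) are partitioned among the $q^{m(\delta-a)} - 1$ nonzero cosets. By pigeonhole, some nonzero coset contains at least $\lceil N_a/(q^{m(\delta-a)}-1)\rceil$ such codewords. Any two distinct elements of a single coset differ by a nonzero element of $\mathcal{M}_\delta$ and hence have rank distance at least $\delta$, so this collection is an $(m \times n, \delta, [t_1,t_2])_q$-GRMC of the claimed size. Taking the maximum over $a$ produces the second branch.

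The main obstacle is justifying the nested pair $\mathcal{M}_\delta \subseteq \mathcal{M}_a$ with matched ambient dimensions; once one places oneself inside the Gabidulin construction, shrinking the degree cap monotonically produces MRD subcodes with larger minimum distance and the required inclusion, after which the coset partition plus a one-line pigeonhole finishes the argument. The remaining bookkeeping—confirming that $\max\{1,t_1\}$ (rather than $t_1$) is the correct lower summation index, and that the zero coset genuinely drops out of consideration—is a direct consequence of the rank-distribution support of a linear MRD code.
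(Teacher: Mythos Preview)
The paper does not prove this lemma; it is quoted verbatim from \cite{6} as a known tool, with no argument supplied. So there is no in-paper proof to compare against.

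That said, your proposal is correct and is essentially the argument given in \cite{6}. In Case~1 you restrict a linear $[m\times n,\delta]_q$ MRD code to the desired rank window, and linearity immediately forces pairwise rank distance at least $\delta$; the size follows from Lemma~\ref{lemma:rank_dist}. In Case~2 you embed $\mathcal{M}_\delta$ inside a larger MRD code $\mathcal{M}_a$ (Gabidulin codes give such a nested chain), note that no codeword of rank in $[\max\{1,t_1\},t_2]$ can lie in the zero coset since $t_2<\delta$, and apply pigeonhole over the remaining $q^{m(\delta-a)}-1$ cosets. Both steps are exactly the mechanism behind the cited bound. The only point worth flagging explicitly is that $a(q,m,n,a,i)=0$ whenever $i<a$, so the sum in the second branch is automatically supported on $i\geq a$; you allude to this implicitly when you discard the zero coset, and it is consistent with the range $\max\{1,t_1\}\leq a$ in the maximum.
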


Note that when \( t_2 \geq \delta \), the set of matrices with ranks at most \( t_2 \) in an \([m \times n, \delta]_q\) MRD code forms an \((m \times n, \delta, [0, t_2])_q\)-GRMC. Therefore, there exists a lower bound such that \( A_q^G (m \times n, \delta, [0, t_2]) \geq 1 + \sum_{i=\delta}^{t_2} a(q, m, n, \delta, i) \).

\subsection{Ferrers diagrams rank-metric codes and multilevel construction}

Let \( m \) and \( n \) be positive integers. An \( m \times n \) Ferrers diagram is an \( m \times n \) array of dots and empty cells, where (1) the first row contains \( n \) dots; (2) the last column contains \( m \) dots; (3) the number of dots in each row does not exceed the number of dots in the previous row; (4) all dots are shifted to the right of the diagram.

Let \( \mathcal{F} = [\gamma_1, \gamma_2, \ldots, \gamma_n] \) denote the Ferrers diagram, where \( \gamma_i \) (\( 1 \leq i \leq n \)) represents the number of dots in the \( i \)-th column. The inverse Ferrers diagram is defined as \( \mathcal{\hat{F}} = [\gamma_n, \gamma_{n-1}, \ldots, \gamma_1] \), and the transposed Ferrers diagram is \( \mathcal{F}^t = [\rho_m, \rho_{m-1}, \ldots, \rho_1] \), where \( \rho_i \) (\( 1 \leq i \leq m \)) denotes the number of dots in the \( i \)-th row of \( \mathcal{F} \). \( \mathcal{F} \) is called empty if it contains no dots. Conversely, if \( \gamma_1 = \gamma_2 = \cdots = \gamma_n = m \), \( \mathcal{F} \) is called full.

\begin{example} Let \( \mathcal{F} = [1, 2, 4] \). Then

\[
\mathcal{F} = 
\begin{matrix}
\bullet & \bullet & \bullet \\
& \bullet & \bullet \\
& & \bullet\\
& & \bullet
\end{matrix} ,\quad
\mathcal{\hat{F}} = 
\begin{matrix}
\bullet & \bullet & \bullet\\
\bullet & \bullet & \\
\bullet & & \\
\bullet 
\end{matrix} ,\quad
\mathcal{F}^t = 
\begin{matrix}
\bullet & \bullet & \bullet & \bullet \\
& & \bullet & \bullet \\
& & & \bullet
\end{matrix} .
\]
\end{example}

Given an \( m \times n \) Ferrers diagram \( \mathcal{F} \), an \([m \times n, \rho, \delta]_q\) RMC \( \mathcal{C} \) is called an \([{\mathcal{F}}, \rho, \delta]_q\) Ferrers diagram rank-metric code (FDRMC), denoted as an \([{\mathcal{F}}, \rho, \delta]_q\) code, if, for every codeword of \( \mathcal{C} \), all entries that do not correspond to dots in the Ferrers diagram \( \mathcal{F} \) are zeros. It is noteworthy that if there exists an \([{\mathcal{F}}, \rho, \delta]_q\) code, there also exists an \([\mathcal{\hat{F}}, \rho, \delta]_q\) code and an \([{\mathcal{F}}^t, \rho, \delta]_q\) code \cite{7}. The following lemma provides a Singleton-like bound for FDRMCs.

\begin{lem}[{\cite{2}}]{\label{lem:op}}
Let \( \mathcal{F} \) be an \( m \times n \) Ferrers diagram. For \( 0 \leq i \leq \delta - 1 \), let \( v_i \) denote the number of dots in the submatrix obtained by removing the first \( i \) rows and the rightmost \( \delta - 1 - i \) columns. Then, for any \([{\mathcal{F}}, \rho, \delta]_q\) code, the dimension \( \rho \) satisfies
\(
\rho \leq \min \big\{ v_i \big\}_{0 \leq i \leq \delta - 1}.
\)
\end{lem}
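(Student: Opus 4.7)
The plan is to establish the inequality $\rho \le v_i$ separately for each fixed $i \in \{0,1,\ldots,\delta-1\}$ by constructing an injective $\mathbb{F}_q$-linear puncturing map from the code into $\mathbb{F}_q^{v_i}$; taking the minimum over $i$ then yields the stated bound. The engine of the argument is the basic fact that every nonzero codeword of an $[\mathcal{F},\rho,\delta]_q$ code must have rank at least $\delta$.

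For a fixed $i$, I would define $\phi_i \colon \mathcal{C} \to \mathbb{F}_q^{v_i}$ to read off the $v_i$ entries of a codeword that lie in the submatrix obtained by deleting the first $i$ rows and the rightmost $\delta-1-i$ columns of $\mathcal{F}$. The map is $\mathbb{F}_q$-linear by construction, so the task reduces to showing that $\ker\phi_i = 0$. Any $A \in \ker\phi_i$ can have nonzero entries only in the union of two disjoint regions of the ambient $m \times n$ array: the first $i$ rows, and the rightmost $\delta-1-i$ columns restricted to rows $i+1,\ldots,m$. The first region contributes at most $i$ to $\mathrm{rank}(A)$, while every nonzero row of $A$ within the second region has support contained in the same $\delta-1-i$ columns, so its contribution to the rank is at most $\delta-1-i$. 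Adding these, $\mathrm{rank}(A) \le i + (\delta-1-i) = \delta-1$, forcing $A = 0$ and establishing injectivity of $\phi_i$.

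Once $\phi_i$ is injective, the linear-code chain $q^{\rho} = |\mathcal{C}| \le q^{v_i}$ yields $\rho \le v_i$, and minimizing over $i \in \{0,1,\ldots,\delta-1\}$ completes the proof. The step requiring the most care is the rank decomposition after puncturing: one must verify that the dots of $\mathcal{F}$ that are \emph{not} retained by $\phi_i$ indeed fall entirely into the two small regions above. This is where the Ferrers-diagram hypothesis does its quiet work, since every position outside the diagram is automatically zero and therefore contributes nothing to the rank count, so no separate analysis of the empty cells is required. Everything else is linear-algebraic bookkeeping.
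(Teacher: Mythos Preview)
Your argument is correct and is exactly the standard proof of the Singleton-like bound for Ferrers diagram rank-metric codes. Note, however, that the paper itself does not supply a proof of this lemma: it is quoted from \cite{2} (Etzion--Silberstein) and stated without proof, so there is no ``paper's own proof'' to compare against. Your puncturing-plus-rank-subadditivity argument is precisely the one given in the original source.
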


An FDRMC is called an optimal FDRMC if it attains the bound of Lemma~\ref{lem:op}. The construction of optimal FDRMCs has been extensively studied (\cite{18,19,20,21,22,23}). The following lemma presents a method for verifying the existence of optimal FDRMCs.

\begin{lem}[{\cite{19}}]{\label{lem:op_FDRMC}}
Given an \( m \times n \) Ferrers diagram \( \mathcal{F} \) with \( m \geq n \), if the last \( \delta - 1 \) columns of \( \mathcal{F} \) contain at least \( m \) dots, then there exists an optimal \([{\mathcal{F}}, \sum_{i=1}^{n - \delta + 1} \gamma_i, \delta]_q\) code,
where \( \gamma_i \) denotes the number of dots in the \( i \)-th column of \( \mathcal{F} \).
\end{lem}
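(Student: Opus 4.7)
My plan is to prove the statement in two stages: first verify that the claimed dimension coincides with the Singleton-like upper bound of Lemma~\ref{lem:op}, then exhibit an explicit code attaining it.

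For the first stage, observe that $v_0$ in Lemma~\ref{lem:op} equals precisely $\sum_{i=1}^{n-\delta+1}\gamma_i$, the number of dots in the first $n-\delta+1$ columns. It therefore suffices to show $v_0 \leq v_j$ for every $1 \leq j \leq \delta - 1$, so that the minimum over $j$ equals $v_0$. Writing $v_j - v_0$ as the number of dots in rows $j+1,\ldots,m$ and columns $n-\delta+2,\ldots,n-\delta+1+j$ minus the number of dots in rows $1,\ldots,j$ and columns $1,\ldots,n-\delta+1$, I would use the right-justified Ferrers structure (dots in row $i$ occupy columns $n-\rho_i+1,\ldots,n$) together with the hypothesis that the last $\delta-1$ columns contain at least $m$ dots. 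The latter forces sufficiently many rows to have length $\geq \delta - 1$, so that a row-by-row comparison of the $\rho_i$ shows the ``dots gained'' block contains at least as many dots as the ``dots lost'' block.

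For the second stage, I would construct the code explicitly. A direct intersection of a linear $[m \times n,\delta]_q$ MRD code with the support subspace of $\mathcal{F}$ only yields dimension $\sum \gamma_i + L - m(\delta-1)$, where $L$ is the number of dots in the last $\delta-1$ columns; this equals $\sum \gamma_i$ only when all $\delta - 1$ trailing columns are full, which is too strong for our hypothesis. Instead, I would take a Gabidulin-style approach: parametrize codewords by $q$-linearized polynomials of $q$-degree less than $\sum \gamma_i$, choose a basis of $\mathbb{F}_{q^m}$ adapted to the shape of $\mathcal{F}$, and assign the dot entries in the first $n-\delta+1$ columns freely, using the last $\delta-1$ columns to carry MRD-style parity. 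The hypothesis $L \geq m$ provides enough dot positions in those trailing columns to absorb the required parity on every row.

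The main obstacle is the compatibility check: the non-dot entries of the last $\delta - 1$ columns are forced to zero, which imposes linear constraints on the free parameters, and one must verify that (i) these constraints do not shrink the parameter space below $\sum \gamma_i$ dimensions and (ii) no non-zero codeword drops below rank $\delta$. I expect to settle this by a staircase induction on $\delta$: decompose $\mathcal{F}$ into a rectangular $m \times (n-\delta+1)$ left block plus a residual right sub-diagram whose last $\delta - 2$ columns still carry at least $m$ dots, apply a lifted-MRD construction on the left block, invoke the inductive hypothesis on the right, and verify that the glued code has the stated dimension and minimum rank distance. The $L \geq m$ hypothesis is precisely the slack needed to make this gluing consistent at each inductive step.
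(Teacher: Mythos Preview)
The paper does not prove this lemma; it is quoted from \cite{19} without proof, so there is no in-paper argument to compare against.

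More importantly, your reading of the hypothesis is off. You treat ``the last $\delta-1$ columns contain at least $m$ dots'' as a statement about the \emph{total} count $L$, and build stage~2 around the inequality $L\ge m$. But that inequality is vacuous: by the definition of an $m\times n$ Ferrers diagram, the last column already has exactly $m$ dots, so $L\ge m$ holds for every diagram and would impose no restriction at all. The intended meaning---confirmed by every application of Lemma~\ref{lem:op_FDRMC} in the paper (see the proof of Theorem~\ref{th:1}, where the authors repeatedly verify that ``the rightmost \ldots\ columns are full'')---is that \emph{each} of the last $\delta-1$ columns has $m$ dots, i.e., is full. Your own stage-1 argument tacitly relies on this stronger reading when you claim the hypothesis ``forces sufficiently many rows to have length $\ge \delta-1$''; a mere total bound $L\ge m$ forces nothing of the sort.

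With the correct hypothesis, the approach you dismiss as ``too strong'' is exactly the proof in \cite{19}: intersecting a linear $[m\times n,\delta]_q$ MRD code with the support subspace of $\mathcal{F}$ yields an $[\mathcal{F},\delta]_q$ code of dimension at least $\sum_{i=1}^{n}\gamma_i + m(n-\delta+1) - mn = \sum_{i=1}^{n-\delta+1}\gamma_i + L - m(\delta-1) = \sum_{i=1}^{n-\delta+1}\gamma_i$, since now $L=m(\delta-1)$. Your stage-1 computation (under the correct reading) then shows this matches the bound of Lemma~\ref{lem:op}. The Gabidulin parametrisation and staircase induction you outline are unnecessary; they are attempts to rescue a statement that was never being asserted.
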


Note that when \( \delta \leq 2 \), there always exists an optimal \([{\mathcal{F}}, \delta]_q\) code for any Ferrers diagram \( \mathcal{F} \) and prime power \( q \).

\begin{lem}[{\cite{19}}]
Let \( \mathcal{F} \) be an \( m \times m \) Ferrers diagram. For any prime power \( q \), there exists an optimal \([{\mathcal{F}}, \delta]_q\) code.
\end{lem}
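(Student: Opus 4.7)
The plan is to derive this lemma as an immediate corollary of Lemma~\ref{lem:op_FDRMC}, using the defining property of square Ferrers diagrams. First I would record that by condition (2) in the definition of a Ferrers diagram, the last column of any $m\times m$ Ferrers diagram $\mathcal{F}=[\gamma_1,\ldots,\gamma_m]$ must contain exactly $m$ dots, so $\gamma_m=m$.

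Next, for the generic range $2\le\delta\le m$, this single observation forces the last $\delta-1$ columns of $\mathcal{F}$ to contain at least $m$ dots, since the final column alone contributes $m$. Because $\mathcal{F}$ is $m\times m$, the hypothesis $m\ge n$ of Lemma~\ref{lem:op_FDRMC} is satisfied with equality, so I would simply invoke that lemma with $n=m$ to conclude the existence of an optimal $[\mathcal{F},\sum_{i=1}^{m-\delta+1}\gamma_i,\delta]_q$ code.

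Finally, I would dispose of the boundary values of $\delta$ separately. For $\delta=1$, the full space of all matrices whose support lies in $\mathcal{F}$ has dimension equal to the total number of dots of $\mathcal{F}$, which coincides with the bound $v_0$ from Lemma~\ref{lem:op} at $\delta=1$, so this ambient space is itself an optimal $[\mathcal{F},1]_q$ code. For $\delta>m$, every codeword in an $m\times m$ matrix space has rank at most $m<\delta$, hence the Singleton-like bound of Lemma~\ref{lem:op} forces $\rho=0$ and the zero code is vacuously optimal. Altogether these three cases exhaust $\delta\in\{1,2,\ldots\}$, completing the proof. The argument presents essentially no obstacle: the whole content is the structural remark that squareness makes the hypothesis of Lemma~\ref{lem:op_FDRMC} automatic.
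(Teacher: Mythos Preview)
Your argument rests on a misreading of the hypothesis of Lemma~\ref{lem:op_FDRMC}. The condition ``the last $\delta-1$ columns of $\mathcal{F}$ contain at least $m$ dots'' does not refer to their combined total: in the source~\cite{19}, and in every application of the lemma in this paper (see the proof of Theorem~\ref{th:1}, where the lemma is invoked only after checking that the rightmost $\delta-1$ columns are \emph{full}), the requirement is that \emph{each} of these columns has $m$ dots. Under your ``total'' reading the lemma is actually false. Take the $4\times 4$ Ferrers diagram $\mathcal{F}=[1,1,1,4]$ with $\delta=3$: the last two columns together hold $1+4=5\ge 4$ dots, yet $v_0=\gamma_1+\gamma_2=2$ while $v_1=0$, so by Lemma~\ref{lem:op} the optimal dimension is $0$ and no $[\mathcal{F},2,3]_q$ code exists at all, let alone an optimal one of dimension~$2$.

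Consequently, knowing only that the final column is full gives you nothing for $\delta\ge 3$: you would need each of the last $\delta-1$ columns to be full, which a generic $m\times m$ Ferrers diagram does not satisfy (the same $\mathcal{F}=[1,1,1,4]$ already fails this for $\delta=3$). The result for square Ferrers diagrams quoted here is \emph{not} a corollary of Lemma~\ref{lem:op_FDRMC}; in~\cite{19} it is obtained by a separate construction tailored to the $m\times m$ shape, and the present paper simply cites that result without reproducing the proof.
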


In \cite{19}, two methods for constructing new FDRMCs based on RMC are proposed. The following lemma describes one of the two methods, which combines two FDRMCs of the same dimension. This method is crucial for our subsequent construction.

\begin{lem}[{\cite{19}}]\label{lem:composite_FDRMC}
Let \(\mathcal{F}_1\) be an \( m_1 \times n_1 \) Ferrers diagram and assume \(\mathcal{C}_1\) is an \([{\mathcal{F}_1}, \rho, \delta_1]_q\) code; let \(\mathcal{F}_2\) be an \( m_2 \times n_2 \) Ferrers diagram and assume \(\mathcal{C}_2\) is an \([{\mathcal{F}_2}, \rho, \delta_2]_q\) code; let \(\mathcal{D}\) be an \( m_3 \times n_3 \) full Ferrers diagram, where \( m_3 \geq m_1 \) and \( n_3 \geq n_2 \).
Let
\[
\mathcal{F} = 
\begin{pmatrix}
\mathcal{F}_1 & \mathcal{D} \\
& \mathcal{F}_2 
\end{pmatrix}
\]
is an \( m \times n \) Ferrers diagram with \( m = m_2 + m_3 \) and \( n = n_1 + n_3 \). Then there exists an \([{\mathcal{F}}, \rho, \delta_1+\delta_2]_q\) code.
\end{lem}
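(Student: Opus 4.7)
The plan is to build the target code explicitly by pairing codewords of $\mathcal{C}_1$ and $\mathcal{C}_2$ via a common parametrization and placing each pair in block-diagonal fashion inside the composite Ferrers diagram. The crucial observation is that $\mathcal{C}_1$ and $\mathcal{C}_2$ are both $\rho$-dimensional $\mathbb{F}_q$-linear subspaces, so I can fix $\mathbb{F}_q$-linear isomorphisms $\phi_1:\mathbb{F}_q^{\rho}\to\mathcal{C}_1$ and $\phi_2:\mathbb{F}_q^{\rho}\to\mathcal{C}_2$. This matched labeling is what lets the new code have dimension $\rho$ rather than $2\rho$, and it is the only ingredient of the construction that is not entirely formal.

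Given such isomorphisms, for each $a\in\mathbb{F}_q^{\rho}$ I would define a matrix $C(a)\in\mathbb{F}_q^{m\times n}$ by placing $\phi_1(a)$ in the block corresponding to $\mathcal{F}_1$ (inside the top-left $m_3\times n_1$ region of $\mathcal{F}$), placing $\phi_2(a)$ in the block corresponding to $\mathcal{F}_2$ (inside the bottom-right $m_2\times n_3$ region), and filling the entire $\mathcal{D}$ block and the bottom-left $m_2\times n_1$ region with zeros. Setting $\mathcal{C}=\{C(a):a\in\mathbb{F}_q^{\rho}\}$, I would check that each $C(a)$ conforms to $\mathcal{F}$: zeros inside $\mathcal{D}$ are permitted since $\mathcal{D}$ is full, the bottom-left region lies outside $\mathcal{F}$ and is correctly zero, and zeros inside the top-left and bottom-right blocks respect $\mathcal{F}_1$ and $\mathcal{F}_2$ since $\phi_i(a)\in\mathcal{C}_i$. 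Linearity of $a\mapsto C(a)$ follows from linearity of $\phi_1,\phi_2$, so $\mathcal{C}$ is an $\mathbb{F}_q$-linear subspace of $\mathbb{F}_q^{m\times n}$ of dimension $\rho$.

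For the minimum rank distance, take any $a\neq b$ and set $c=a-b\neq 0$. Then $C(a)-C(b)=C(c)$, a matrix whose only nonzero entries occupy two blocks lying in disjoint row ranges (rows $1,\dots,m_3$ versus rows $m_3+1,\dots,m_3+m_2$) and disjoint column ranges (columns $1,\dots,n_1$ versus columns $n_1+1,\dots,n_1+n_3$). Hence $\mathrm{rank}(C(c))=\mathrm{rank}(\phi_1(c))+\mathrm{rank}(\phi_2(c))$, and since $\phi_i$ is an isomorphism and $c\neq 0$, both $\phi_1(c)$ and $\phi_2(c)$ are nonzero codewords of their respective FDRMCs, giving ranks at least $\delta_1$ and $\delta_2$. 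Therefore $d_R(C(a),C(b))\geq\delta_1+\delta_2$, yielding an $[\mathcal{F},\rho,\delta_1+\delta_2]_q$ code. The main point requiring care is the row/column disjointness of the two nonzero blocks together with the admissibility of zeros in $\mathcal{D}$; once these bookkeeping items are confirmed, the rest is forced by the linear-isomorphism pairing.
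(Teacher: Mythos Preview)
Your proof is correct and is essentially the construction given in the cited source \cite{19}: pair bases (equivalently, fix linear isomorphisms) of $\mathcal{C}_1$ and $\mathcal{C}_2$, embed each matched pair block-diagonally in $\mathcal{F}$ with zeros in $\mathcal{D}$, and use disjointness of the row and column supports to get rank additivity. The paper itself does not reprove the lemma, so there is nothing further to compare.
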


Given an \( m \times n \) Ferrers diagram \( \mathcal{F} \), an \([{\mathcal{F}}, \delta]_q\) code \( \mathcal{C} \) is called an \((\mathcal{F}, \delta, [0, r])_q\)-Ferrers diagram rank-metric code with given ranks (GFRMC), if every codeword of \(\mathcal{C} \) has rank at most \( r \). For fixed parameters \( m, n, \delta \) and \( r \), let \( A_q^F( {\mathcal{F}}, \delta, r) \) denote the maximum cardinality among all \(({\mathcal{F}}, \delta, [0, r])_q\)-GFRMCs. The following lemma establishes a bound for \( A_q^F( \mathcal{F}, \delta, r) \).

\begin{lem}[{\cite{7}}]\label{lem:FDRMC_lower_bound}
Let \( n \geq r \) and \( \mathcal{F} = [\gamma_1, \gamma_2, \ldots, \gamma_n] \) be an Ferrers diagram. Then
\[
A_q^F( \mathcal{F}, \delta, r) \geq \max_{1 \leq i \leq n} A_q^G(\gamma_i \times (n-i), \delta, [0, r]).
\]
\end{lem}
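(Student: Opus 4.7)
The plan is to exhibit, for each fixed index $i \in \{1,\ldots,n\}$, an explicit $(\mathcal{F},\delta,[0,r])_q$-GFRMC whose cardinality equals $A_q^G(\gamma_i\times(n-i),\delta,[0,r])$; taking the maximum over $i$ then yields the claimed lower bound. The construction is simply to isolate a full $\gamma_i\times(n-i)$ rectangular subarray inside $\mathcal{F}$ and plant a maximum-size GRMC inside it, zero-padding everywhere else.

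First I would locate the full $\gamma_i\times(n-i)$ subarray. By the definition of a Ferrers diagram, rows are right-aligned with nonincreasing lengths, so the column counts satisfy $\gamma_1\leq\gamma_2\leq\cdots\leq\gamma_n$ and the dots in column $j$ occupy the bottom $\gamma_j$ cells. In particular, for every $j>i$, column $j$ has at least $\gamma_i$ dots and contains the bottom $\gamma_i$ rows. Restricting to the rightmost $n-i$ columns and the bottom $\gamma_i$ rows therefore produces a $\gamma_i\times(n-i)$ block of $\mathcal{F}$ consisting entirely of dots.

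Next, I would let $\mathcal{R}$ be a $(\gamma_i\times(n-i),\delta,[0,r])_q$-GRMC attaining $A_q^G(\gamma_i\times(n-i),\delta,[0,r])$ and embed each $R\in\mathcal{R}$ into $\mathbb{F}_q^{m\times n}$ by placing its entries on the identified full block and filling all remaining positions with zeros; call the resulting family $\widetilde{\mathcal{R}}$. Three verifications remain: (a) every matrix in $\widetilde{\mathcal{R}}$ is supported on the dots of $\mathcal{F}$, so $\widetilde{\mathcal{R}}$ is a legitimate FDRMC; (b) the rank of an embedded codeword equals the rank of its preimage because zero-padding does not change rank, so all ranks lie in $[0,r]$; and (c) for distinct $R,R'\in\mathcal{R}$, their embeddings differ only on the block, hence their rank distance equals $\operatorname{rank}(R-R')\geq\delta$. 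Consequently $\widetilde{\mathcal{R}}$ is an $(\mathcal{F},\delta,[0,r])_q$-GFRMC of the required cardinality.

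The only nontrivial step is the geometric verification that the chosen $\gamma_i\times(n-i)$ block is indeed full; this rests squarely on the bottom-right alignment convention for Ferrers diagrams and the monotonicity $\gamma_1\leq\cdots\leq\gamma_n$. Rank preservation under zero-padding, rank-distance invariance, and the final $\max$ over $i$ are routine. The hypothesis $n\geq r$ is used only to ensure that $A_q^G(\gamma_i\times(n-i),\delta,[0,r])$ is nontrivial for a useful range of $i$, since the rank of a $\gamma_i\times(n-i)$ matrix cannot exceed $n-i$.
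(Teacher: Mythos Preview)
Your argument is correct and is the natural one: locate a full $\gamma_i\times(n-i)$ rectangular sub-block of $\mathcal{F}$, embed a maximum $(\gamma_i\times(n-i),\delta,[0,r])_q$-GRMC there, and zero-pad. The paper itself does not prove this lemma---it is quoted from \cite{7}---so there is no in-paper proof to compare against.

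One minor orientation slip: under the paper's Ferrers-diagram convention (first row full, row lengths nonincreasing, dots shifted right), the $\gamma_j$ dots in column $j$ occupy the \emph{top} $\gamma_j$ rows, not the bottom ones; your full block should therefore consist of the top $\gamma_i$ rows of columns $i{+}1,\ldots,n$. This cosmetic fix leaves the logic intact.
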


A matrix is said to be in reduced row (row inverse) echelon form, abbreviated as RR(I)EF, if it satisfies the following conditions: (1) the leading coefficient of each nonzero row is strictly to the right (left) of the leading coefficient in the row above it; (2) all leading coefficients are equal to 1; (3) each leading coefficient is the only nonzero entry in its column.

A \( k \)-dimensional subspace \( \mathcal{U} \) of \( \mathbb{F}_q^n \) can be represented by a generator matrix \( U \in \mathbb{F}_q^{k \times n} \), where the rows of \( U \) form a basis for the subspace.  Note that the generator matrice \( U \) is not unique for \( \mathcal{U} \). For any \( \mathcal{U} \in \mathcal{G}_q(n, k) \), there exists a unique matrix in RREF and a unique matrix in RRIEF, denoted by \( E(\mathcal{U}) \) and \( \hat{E}(\mathcal{U}) \), respectively. The identifying vector \( v(\mathcal{U}) \) is a binary row vector of length \( n \) and weight \( k \), where the positions of the ones correspond to the pivot columns of \( E(\mathcal{U}) \). The echelon Ferrers form EF(\( v(\mathcal{U}) \)), abbreviated as EF(\( v \)), is defined as a matrix in RREF with leading entries (of rows) in the columns indexed by the nonzero entries of \( v(\mathcal{U}) \) and \(\bullet\) in all entries which do not have terminals zeros or ones. The Ferrers diagram \( \mathcal{F}_v \) is derived through the following operations applied to  EF($v$): (1) remove all zeros to the left of the pivot in each row; (2) remove all columns containing pivots. (3) shift the remaining entries to the right. Similarly, given a generator matrix in RRIEF of a subspace, we can obtain the inverse identifying vector \( \hat{v} \), the inverse echelon Ferrers form \(\widehat{\text{EF}}( \hat{v} )\) and inverse Ferrers diagram \( \hat{\mathcal{F}}_{\hat{v}} \).

\begin{example}
Consider a 3-dimensional subspace \( \mathcal{U} \in \mathcal{G}_2(5, 3) \) with generator matrices in RREF and RRIEF being
\[
E(\mathcal{U}) = 
\begin{bmatrix}
\mathbf{1} & 0 & 0 & 0 & 1 \\
0 & 0 & \mathbf{1} & 0 & 1 \\
0 & 0 & 0 & \mathbf{1} & 0
\end{bmatrix}, \quad 
\hat{E}(\mathcal{U}) = 
\begin{bmatrix}
1 & 0 & 0 & 0 & \mathbf{1} \\
0 & 0 & 0 & \mathbf{1} & 0 \\
1 & 0 & \mathbf{1} & 0 & 0
\end{bmatrix}.
\]
The identifying vector \( v \) of \( \mathcal{U} \) is \( (10110) \), and the inverse identifying vector \( \hat{v} \) of \( \mathcal{U} \) is \( (00111) \). The corresponding echelon Ferrers forms are
\[
EF( v ) =
\begin{bmatrix}
\mathbf{1} & \bullet & 0 & 0 & \bullet \\
0 & 0 & \mathbf{1} & 0 & \bullet \\
0 & 0 & 0 & \mathbf{1} & \bullet
\end{bmatrix}, \quad 
EF(\hat{v}) = 
\begin{bmatrix}
\bullet & \bullet & 0  & 0 & \mathbf{1} \\
\bullet & \bullet & 0  & \mathbf{1} & 0 \\
\bullet & \bullet & \mathbf{1}  & 0 & 0
\end{bmatrix}.
\]
The resulting Ferrers diagram \( \mathcal{F}_{v} \) and its inverse Ferrers diagram \( \hat{\mathcal{F}}_{\hat{v}} \) are
\[
\mathcal{F}_{v} = 
\begin{matrix}
\bullet & \bullet  \\
 & \bullet  \\
 & \bullet 
\end{matrix}, \quad 
\hat{\mathcal{F}}_{\hat{v}}  = 
\begin{matrix}
\bullet & \bullet   \\
\bullet & \bullet   \\
\bullet & \bullet  
\end{matrix}.
\]
\end{example}

The following lemmas present lower bounds on subspace distance by Hamming distance of vectors.

\begin{lem}[{\cite{2}}]\label{lemma:subspace_hamming}
Let \( \mathcal{U}, \mathcal{V} \in \mathcal{G}_q(n, k) \), where \( \mathcal{U} = rs(U) \) and \( \mathcal{V} = rs(V) \), with \( U, V \in \mathbb{F}_n^{k \times n} \) in RREFs. Let \( C_U \) and \( C_V \) denote the submatrices of \( U \) and \( V \), respectively, formed by removing the pivot columns. Then
\[
d_S(\mathcal{U}, \mathcal{V}) \geq d_H((v(\mathcal{U}), v(\mathcal{V})).
\]
If \( v(\mathcal{U}) = v(\mathcal{V}) \), it holds that
\[
d_S(\mathcal{U}, \mathcal{V}) = 2 \cdot d_R(C_U, C_V).
\]
\end{lem}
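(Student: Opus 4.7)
The plan is to translate both sides of the claimed inequality into statements about dimensions via the identity $d_S(\mathcal{U}, \mathcal{V}) = 2k - 2\dim(\mathcal{U} \cap \mathcal{V}) = 2(\dim(\mathcal{U}+\mathcal{V}) - k)$ (using $\dim(\mathcal{U}\cap\mathcal{V}) + \dim(\mathcal{U}+\mathcal{V}) = 2k$), and then compare with the combinatorics of pivot sets. Writing $P_U, P_V \subseteq \{1, \ldots, n\}$ for the pivot column sets of $U, V$, we have $|P_U| = |P_V| = k$, so the Hamming distance rewrites as $d_H(v(\mathcal{U}), v(\mathcal{V})) = |P_U \triangle P_V| = 2(|P_U \cup P_V| - k)$. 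Consequently, the first inequality reduces to showing $|P_U \cup P_V| \leq \dim(\mathcal{U}+\mathcal{V})$.

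For that step, I would use the characterization that $j$ is a pivot column of a subspace $W$ precisely when $W$ contains a vector $w$ whose leading nonzero coordinate is $j$, i.e.\ $w_1 = \cdots = w_{j-1} = 0$ and $w_j \neq 0$. Any witness for $j \in P_U$ lies in $\mathcal{U} \subseteq \mathcal{U} + \mathcal{V}$ and therefore witnesses $j \in P_{\mathcal{U}+\mathcal{V}}$; the analogous statement holds for $P_V$. Hence $P_U \cup P_V \subseteq P_{\mathcal{U}+\mathcal{V}}$, and passing to cardinalities yields $|P_U \cup P_V| \leq |P_{\mathcal{U}+\mathcal{V}}| = \dim(\mathcal{U}+\mathcal{V})$, which is exactly what is needed.

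For the equality statement when $v(\mathcal{U}) = v(\mathcal{V})$, the key observation is that on the common pivot columns the RREF rows of $U$ and $V$ both reproduce the $k \times k$ identity, so the difference $U - V$ has zero columns on $P_U = P_V$ and its remaining columns form exactly $C_U - C_V$. Any vector of $\mathcal{U} \cap \mathcal{V}$ can be written as $aU = bV$ for some $a, b \in \mathbb{F}_q^k$; reading off the pivot coordinates forces $a = b$, so $a(U - V) = 0$, and conversely any such $a$ produces a vector in the intersection. Thus $\dim(\mathcal{U} \cap \mathcal{V}) = k - \mathrm{rank}(U - V) = k - d_R(C_U, C_V)$, and substituting into $d_S = 2k - 2\dim(\mathcal{U} \cap \mathcal{V})$ gives $d_S(\mathcal{U}, \mathcal{V}) = 2\, d_R(C_U, C_V)$.

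The only delicate point I anticipate is the pivot-inclusion argument in the second paragraph: one must be explicit that the witness-vector characterization of a pivot column depends only on the subspace $W$ (not on a chosen generator matrix), so that a witness for $j \in P_U$ remains a valid witness inside the larger space $\mathcal{U}+\mathcal{V}$. Once that is in hand the rest is routine linear algebra, and the equality case follows cleanly from the structural fact that $U-V$ is supported entirely on the non-pivot columns.
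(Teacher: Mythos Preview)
Your argument is correct. The paper does not actually prove this lemma; it is quoted from \cite{2} (Etzion--Silberstein) and used as a black box, so there is no ``paper's own proof'' to compare against. Your approach---reducing the inequality to $|P_U \cup P_V| \leq \dim(\mathcal{U}+\mathcal{V})$ via the leading-coordinate characterization of pivots, and handling the equality case by showing that the left kernel of $U-V$ parametrizes $\mathcal{U}\cap\mathcal{V}$---is the standard clean proof of this fact and is entirely sound. The one point you flagged as delicate (that the pivot set of a subspace is intrinsic, not dependent on a choice of generator matrix) is indeed the only thing worth stating carefully, and your witness-vector formulation handles it correctly.
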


\begin{lem}[{\cite{7}}]\label{lemma:subspace_hamming2}
Let \( \mathcal{U}, \mathcal{V} \in \mathcal{G}_q(n, k) \), where \( \mathcal{U} = rs(U) \) and \( \mathcal{V} = rs(V) \) with \( U, V \in \mathbb{F}_n^{k \times n} \) in RRIEFs. Let \( C_U \) and \( C_V \) denote the submatrices of \( U \) and \( V \), respectively, formed by removing the pivot columns. Then
\[
d_S(\mathcal{U}, \mathcal{V}) \geq d_H((\hat{v}(\mathcal{U}), \hat{v}(\mathcal{V})).
\]
If \( \hat{v}(\mathcal{U}) = \hat{v}(\mathcal{V}) \), it holds that
\[
d_S(\mathcal{U}, \mathcal{V}) = 2 \cdot d_R(C_U, C_V).
\]
\end{lem}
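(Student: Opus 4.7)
The plan is to reduce the statement to Lemma~\ref{lemma:subspace_hamming}, its RREF counterpart, by composing everything with a coordinate reversal. Let $J \in \mathbb{F}_q^{n \times n}$ denote the anti-diagonal permutation matrix. Right-multiplication by $J$ reverses column order, and since $J$ is invertible the map $x \mapsto xJ$ is an $\mathbb{F}_q$-linear isometry on $\mathbb{F}_q^n$ for the subspace metric: dimensions of sums and intersections of subspaces are preserved, so $d_S(rs(UJ), rs(VJ)) = d_S(\mathcal{U}, \mathcal{V})$.

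Next I check that column reversal converts RRIEF into RREF. If $U$ is in RRIEF with pivots in columns $p_1 > p_2 > \cdots > p_k$, then $UJ$ has pivot $1$'s in the columns $n+1-p_1 < n+1-p_2 < \cdots < n+1-p_k$, and each pivot remains the only nonzero entry in its column. This is precisely RREF. Consequently, the identifying vector $v(rs(UJ))$ is the coordinate reversal of $\hat{v}(\mathcal{U})$, and likewise for $V$. Hamming distance is invariant under a coordinate permutation, so
\[
d_H\bigl(v(rs(UJ)), v(rs(VJ))\bigr) = d_H\bigl(\hat{v}(\mathcal{U}), \hat{v}(\mathcal{V})\bigr).
\]

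The last piece is to track the non-pivot submatrices. Because reversing columns permutes the pivot and non-pivot positions consistently, the submatrix of $UJ$ obtained by removing pivots is exactly $C_U$ with its columns reversed by some induced permutation $J'$, and the same holds for $V$. Rank is invariant under a common column permutation, so $d_R(C_{UJ}, C_{VJ}) = \mathrm{rank}((C_U - C_V)J') = d_R(C_U, C_V)$. Applying Lemma~\ref{lemma:subspace_hamming} to the pair $rs(UJ), rs(VJ)$ then yields both conclusions simultaneously: the general inequality $d_S(\mathcal{U}, \mathcal{V}) \geq d_H(\hat{v}(\mathcal{U}), \hat{v}(\mathcal{V}))$, and the equality $d_S(\mathcal{U}, \mathcal{V}) = 2 d_R(C_U, C_V)$ under the hypothesis $\hat{v}(\mathcal{U}) = \hat{v}(\mathcal{V})$, which translates to $v(rs(UJ)) = v(rs(VJ))$ after reversal.

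The main point requiring care is the simultaneous alignment of the reversed identifying vectors, the reversed non-pivot submatrices, and the reversed pivot positions under $J$; once this bookkeeping is verified, no new argument beyond that of Lemma~\ref{lemma:subspace_hamming} is needed. An alternative route avoiding the reduction is a direct computation: the stacked matrix $\bigl[\begin{smallmatrix} U \\ V \end{smallmatrix}\bigr]$ has rank at least $\lvert \mathrm{supp}(\hat{v}(\mathcal{U})) \cup \mathrm{supp}(\hat{v}(\mathcal{V})) \rvert$ by the pivot structure, and when the two identifying vectors coincide the rows of $V - U$ vanish in all pivot columns and are independent of the rows of $U$, giving $\dim(\mathcal{U} + \mathcal{V}) = k + \mathrm{rank}(C_V - C_U)$; either route reaches the claim.
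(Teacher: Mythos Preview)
Your reduction via the column-reversal isometry $J$ is correct and cleanly written: reversing columns converts RRIEF into RREF, turns $\hat v(\mathcal U)$ into the reversed $v(rs(UJ))$, and applies the same (induced) column permutation to the non-pivot submatrices, so both conclusions of Lemma~\ref{lemma:subspace_hamming} transfer verbatim. The alternative direct argument you sketch at the end is also valid.

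There is nothing in the paper to compare against: Lemma~\ref{lemma:subspace_hamming2} is quoted from~\cite{7} without proof. The natural proof in that reference is exactly the column-reversal reduction you give, so your approach is the expected one rather than a genuinely different route.
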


A set \( \mathcal{S} \) of binary vectors is called an \((n, 2\delta, k)_2\) constant weight code (CWC) if it satisfies: (1) every vector in \( \mathcal{S} \) is a binary vector of length \( n \) with weight \( k \); (2) the Hamming distance between any two distinct vectors in \( \mathcal{S} \) is at least \( 2\delta \). Leveraging aforementioned results, the multilevel construction is detailed below.

\begin{lem}[Multilevel construction \cite{2}]\label{lemma:multilevel}  
    Let \( \mathcal{S} \) be an \((n, 2\delta, k)_2\) CWC. Suppose that for every vector \( v \in \mathcal{S} \), its echelon Ferrers form is $EF(v)$. All dots in $EF(v)$ form an Ferrers diagram \( \mathcal{F}_v \), and there exists an \([ \mathcal{F}_v, \delta ]_q\) code \( \mathcal{C}_v \), then \( \cup_{v \in \mathcal{S}} \mathcal{C}_v \) constitutes an \((n, 2\delta, k)_q\)-CDC.  
\end{lem}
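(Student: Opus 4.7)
The plan is to proceed by a case analysis on whether two distinct codewords in $\cup_{v \in \mathcal{S}} \mathcal{C}_v$ share the same identifying vector, and to invoke Lemma~\ref{lemma:subspace_hamming} in each case. First I would make explicit the construction: each $v \in \mathcal{S}$, together with a matrix $M \in \mathcal{C}_v$, yields the subspace obtained by inserting the entries of $M$ into the $\bullet$-positions of $EF(v)$ and taking the row span of the resulting RREF matrix. Since every $v \in \mathcal{S}$ has Hamming weight $k$, this row span has dimension $k$, so the union is a subset of $\mathcal{G}_q(n,k)$ and the ``constant-dimension'' requirement is immediate.

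Next I would fix two distinct subspaces $\mathcal{U}, \mathcal{V}$ in the union, with associated identifying vectors $v(\mathcal{U}), v(\mathcal{V}) \in \mathcal{S}$, and split into two cases. In the case $v(\mathcal{U}) \neq v(\mathcal{V})$, Lemma~\ref{lemma:subspace_hamming} directly gives
\[
d_S(\mathcal{U}, \mathcal{V}) \;\geq\; d_H\bigl(v(\mathcal{U}), v(\mathcal{V})\bigr) \;\geq\; 2\delta,
\]
where the last inequality uses that $\mathcal{S}$ is an $(n, 2\delta, k)_2$ CWC. In the case $v(\mathcal{U}) = v(\mathcal{V}) = v$, both subspaces arise from the same $\mathcal{C}_v$ via two distinct FDRMC codewords $M_U, M_V$. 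Writing $C_U, C_V$ for the submatrices obtained from the RREF generators of $\mathcal{U}, \mathcal{V}$ by deleting the pivot columns, the non-pivot columns of $EF(v)$ are precisely those carrying the $\bullet$-entries of $\mathcal{F}_v$; hence $C_U - C_V$ is exactly $M_U - M_V$ padded with zeros outside the dots of $\mathcal{F}_v$. This gives $d_R(C_U, C_V) = d_R(M_U, M_V) \geq \delta$ because $\mathcal{C}_v$ is an $[\mathcal{F}_v, \delta]_q$ code, and the equality clause of Lemma~\ref{lemma:subspace_hamming} then yields
\[
d_S(\mathcal{U}, \mathcal{V}) \;=\; 2\, d_R(C_U, C_V) \;\geq\; 2\delta.
\]

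I do not anticipate a substantial obstacle here, since both cases reduce directly to Lemma~\ref{lemma:subspace_hamming} combined with the defining properties of $\mathcal{S}$ and $\mathcal{C}_v$. The only mildly delicate point is the bookkeeping identifying the rank distance of $C_U - C_V$ with the FDRMC rank distance of $M_U - M_V$, which is a definitional matter once one observes that zero-padding outside the dot positions does not alter the rank. With the two cases handled, every pair of distinct codewords satisfies $d_S \geq 2\delta$, so $\cup_{v \in \mathcal{S}} \mathcal{C}_v$ is an $(n, 2\delta, k)_q$-CDC as claimed.
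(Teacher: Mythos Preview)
Your proposal is correct and follows the standard argument for the multilevel construction: case-split on whether the identifying vectors agree, invoke Lemma~\ref{lemma:subspace_hamming} in each case, and use the CWC property of $\mathcal{S}$ and the rank-distance property of $\mathcal{C}_v$ respectively. Note that the paper does not supply its own proof of this lemma---it is quoted from \cite{2} without proof---so there is no in-paper argument to compare against; your write-up is essentially the original Etzion--Silberstein proof.
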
  

\subsection{Lists of CDCs with fixed distance and coset constuction}

Given two CDCs \( \mathscr{C}_1, \mathscr{C}_2 \in \mathcal{G}_q(n, k) \), the subspace distance between them is defined as
\[
d_S(\mathscr{C}_1, \mathscr{C}_2) = \text{min}\{{d_S(c_1, c_2)|c_1 \in \mathscr{C}_1, c_2 \in \mathscr{C}_2}\}.
\]  
The definition of list of CDCs with fixed distance is as follows.

\begin{definition}[\cite{co2}] 
Let \( n, \delta_1, \delta_2, k, s \) be integers with \( \delta_1 > \delta_2 > 0\). We say that \( (\mathscr{C}_1, \mathscr{C}_2, \cdots, \mathscr{C}_s) \) is a list of \((n, 2\delta_1, k)_q\)-CDCs with fixed distance \( 2\delta_2 \), if each \( \mathscr{C}_i \) is an \((n, 2\delta_1, k)_q\)-CDC for \( 1\leq i \leq s\) and \( d_S(\mathscr{C}_i, \mathscr{C}_j) \geq 2\delta_2\) for \( 1\leq i < j \leq s\).
\end{definition}

In \cite{co2}, Xu and Song promoted a construction for list of CDCs via cosets of optimal FDRMCs. 

\begin{lem}[\cite{co2}]\label{lemma:optimal_subspace}  
Let \( m, n, \delta_1, \delta_2 \) be integers with \( m \geq n \geq \delta_1 > \delta_2 > 0\). Given an \( m \times n \) Ferrers diagram \( \mathcal{F} = [\gamma_1, \gamma_2, \cdots, \gamma_n] \), suppose that each of the rightmost \( \delta_1 - 1 \) columns of \( \mathcal{F} \) contain at least \( n \) dots. Then there exists an optimal \( [\mathcal{F}, \delta_1]_q \) code \( \mathcal{C}^1 \) and an optimal \( [\mathcal{F}, \delta_2]_q \) code \( \mathcal{C}^2 \), such that \( \mathcal{C}^1 \) is an \( \mathbb{F}_q \)-subspace of \( \mathcal{C}^2 \). When \( \mathcal{C}^2 \) is an \( [\mathcal{F}, 1]_q \) code, \( \mathcal{C}^2 \) is optimal. Therefore, any optimal \( [\mathcal{F}, \delta_1]_q \) code \( \mathcal{C}^1 \) is an \( \mathbb{F}_q \)-subspace of \( \mathcal{C}^2 \). Furthermore, if \( v_{\text{min}}(\mathcal{F}, \delta_1) = 0 \), define \( \mathcal{C}^1 = \{\mathbf{0}_{m \times n}\} \) as the optimal \( [\mathcal{F}, \delta_1]_q \) code. Therefore, \( \mathcal{C}^1 \) is an \( \mathbb{F}_q \)-subspace of any optimal \( [\mathcal{F}, \delta_2]_q \) code \( \mathcal{C}^2 \).
\end{lem}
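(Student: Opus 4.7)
The plan is to build both optimal FDRMCs simultaneously from a single nested chain of MRD codes, so that the required containment $\mathcal{C}^1 \subseteq \mathcal{C}^2$ becomes automatic. As a first step I would fix a chain $\mathcal{M}_1 \subseteq \mathcal{M}_2$ in which each $\mathcal{M}_i$ is a linear $[m \times n, \delta_i]_q$ MRD code. Such a chain is readily available from the Gabidulin construction: working in $\mathbb{F}_{q^m}$, evaluating $q$-linearized polynomials of $q$-degree less than $k$ at $n$ linearly independent elements yields a linear MRD code of dimension $k$ and distance $n - k + 1$, and enlarging $k$ preserves the entire previous code while lowering the distance by one. Choosing $k$ equal to $n - \delta_1 + 1$ and then $n - \delta_2 + 1$ produces the desired nesting, with the shorter-distance (larger) code sitting above the longer-distance (smaller) one.

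Next I would cut both MRD codes by the Ferrers support: set $\mathcal{C}^i := \{M \in \mathcal{M}_i : M_{jk} = 0 \text{ whenever } (j,k) \text{ is not a dot of } \mathcal{F}\}$ for $i = 1, 2$. Each $\mathcal{C}^i$ is then an $[\mathcal{F}, \delta_i]_q$ code, and $\mathcal{C}^1 \subseteq \mathcal{C}^2$ is inherited from $\mathcal{M}_1 \subseteq \mathcal{M}_2$. The core task is to certify that both $\mathcal{C}^1$ and $\mathcal{C}^2$ attain the Singleton-like bound of Lemma \ref{lem:op}. Since $\delta_2 - 1 < \delta_1 - 1$, the hypothesis that each of the rightmost $\delta_1 - 1$ columns of $\mathcal{F}$ contains at least $n$ dots is inherited by the rightmost $\delta_2 - 1$ columns as well, so the argument underlying Lemma \ref{lem:op_FDRMC} applies to both $\delta_i$: the column-wise dot counts give enough freedom to select a basis of $\mathcal{M}_i$ whose members all vanish outside $\mathcal{F}$, matching the target dimension $\sum_{j=1}^{n-\delta_i+1}\gamma_j$. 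Carrying this dimension count out simultaneously for both $\delta_1$ and $\delta_2$ on the same ambient Gabidulin-style code is the main obstacle, because the basis that witnesses optimality at distance $\delta_2$ must restrict cleanly to one that still witnesses optimality at distance $\delta_1$; the payoff is that the nesting from Step~1 then transfers intact to the two FDRMCs.

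The two remaining clauses are routine. When $\delta_2 = 1$, no rank constraint beyond distinctness is imposed, so the full space $\mathbb{F}_q^{|\mathcal{F}|}$ of Ferrers-supported matrices is an optimal $[\mathcal{F}, 1]_q$ code, and any $[\mathcal{F}, \delta_1]_q$ code, optimal or not, sits inside it as an $\mathbb{F}_q$-subspace, upgrading the existence statement to ``every optimal $\mathcal{C}^1$''. When $v_{\text{min}}(\mathcal{F}, \delta_1) = 0$, the bound of Lemma \ref{lem:op} forces every $[\mathcal{F}, \delta_1]_q$ code to have dimension zero, so the only optimal choice is $\mathcal{C}^1 = \{\mathbf{0}_{m \times n}\}$, which lies in every $\mathbb{F}_q$-linear $[\mathcal{F}, \delta_2]_q$ code by linearity. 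Assembling these three pieces will yield the full statement.
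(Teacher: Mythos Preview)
The paper does not supply its own proof of this lemma; it is quoted verbatim from \cite{co2} and used as a black box, so there is no in-paper argument to compare your proposal against.

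Your plan is the natural one and is in the spirit of how such results are established: fix nested Gabidulin codes $\mathcal{M}_1\subseteq\mathcal{M}_2$ over the same evaluation data, then pass to the $\mathcal{F}$-supported subcodes. The nesting $\mathcal{C}^1\subseteq\mathcal{C}^2$ is then automatic, and your treatment of the two side clauses ($\delta_2=1$ and $v_{\min}(\mathcal{F},\delta_1)=0$) is correct and complete.

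Two points deserve care when you fill in the details. First, defining $\mathcal{C}^i$ as the bare intersection $\mathcal{M}_i\cap\{\text{$\mathcal{F}$-supported matrices}\}$ does not by itself force $\dim\mathcal{C}^i=\sum_{j\le n-\delta_i+1}\gamma_j$; that equality depends on a specific choice of evaluation points and $\mathbb{F}_q$-basis of $\mathbb{F}_{q^m}$, exactly as in the proof of Lemma~\ref{lem:op_FDRMC}. You correctly flag this as the main obstacle. The good news is that once a single such choice is fixed, it works for every $\delta\le\delta_1$ simultaneously, because the column-vanishing constraints for $\delta_1$ are a subset of those for $\delta_2$ and the ambient codes are already nested; so the ``simultaneous basis'' issue resolves itself. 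Second, note that the hypothesis here requires only $n$ dots in each of the rightmost $\delta_1-1$ columns, whereas Lemma~\ref{lem:op_FDRMC} as stated in this paper asks for $m$ dots (full columns). You should verify that the construction you invoke tolerates the weaker hypothesis, or reduce to the full-column case, before declaring both $\mathcal{C}^i$ optimal.
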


By Lemma~\ref{lemma:optimal_subspace}, a construction of list of CDCs with fixed distance by cosets of an optimal
FDRMC is given below.

\begin{lem}[\cite{co2}]\label{lemma:cpc_sequence}  
Let \( \mathcal{C}^1 \) and \( \mathcal{C}^2 \) denote two optimal FDRMCs defined in Lemma~\ref{lemma:optimal_subspace}, where \( \mathcal{C}^1 \) is an \( \mathbb{F}_q \)-subspace of \( \mathcal{C}^2 \). Then there exists a list of \((n, D_v, 2\delta_1, k)_q\)-CDCs \( (\mathscr{C}_1, \mathscr{C}_2, \cdots, \mathscr{C}_{s_v}) \) with fixed distance \( 2\delta_2 \), where \(D_v = q^{v_{\text{min}}(\mathcal{F}_v, \delta_1)}, s_v = q^{v_{\text{min}}(\mathcal{F}_v, \delta_2) - v_{\text{min}}(\mathcal{F}_v, \delta_1)}\). 
\end{lem}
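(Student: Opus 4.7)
The plan is to use the inclusion $\mathcal{C}^1 \subseteq \mathcal{C}^2$ provided by Lemma~\ref{lemma:optimal_subspace} to partition $\mathcal{C}^2$ into cosets of $\mathcal{C}^1$, and then lift each coset through the echelon Ferrers form $EF(v)$ (where $v$ is the identifying vector associated with $\mathcal{F}_v$) to obtain a constant-dimension code. First, since $\mathcal{C}^1$ is an optimal $[\mathcal{F}_v,\delta_1]_q$ code and $\mathcal{C}^2$ is an optimal $[\mathcal{F}_v,\delta_2]_q$ code, their $\mathbb{F}_q$-dimensions equal $v_{\min}(\mathcal{F}_v,\delta_1)$ and $v_{\min}(\mathcal{F}_v,\delta_2)$ respectively by the Singleton-like bound of Lemma~\ref{lem:op}. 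Consequently $\mathcal{C}^2$ splits into exactly $s_v = q^{v_{\min}(\mathcal{F}_v,\delta_2)-v_{\min}(\mathcal{F}_v,\delta_1)}$ disjoint cosets $M_i + \mathcal{C}^1$, each of cardinality $D_v = q^{v_{\min}(\mathcal{F}_v,\delta_1)}$.

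Next, for each $i \in \{1,\ldots,s_v\}$, I would form $\mathscr{C}_i$ by inserting the entries of every $M \in M_i + \mathcal{C}^1$ into the dot positions of $EF(v)$ and taking the row space of the resulting $k \times n$ matrix. Every such matrix is in RREF and has identifying vector exactly $v$, so $\mathscr{C}_i \subseteq \mathcal{G}_q(n,k)$ with $|\mathscr{C}_i| = D_v$, and distinct elements of the coset produce distinct subspaces because the dot positions uniquely determine the lifted matrix.

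The two distance conditions then follow directly from Lemma~\ref{lemma:subspace_hamming}. For distinct $\mathcal{U}, \mathcal{V} \in \mathscr{C}_i$, the submatrices obtained by removing the pivot columns differ by a nonzero element of $\mathcal{C}^1$, whose rank is at least $\delta_1$; hence $d_S(\mathcal{U}, \mathcal{V}) \geq 2\delta_1$, which shows that each $\mathscr{C}_i$ is an $(n, D_v, 2\delta_1, k)_q$-CDC. For $\mathcal{U} \in \mathscr{C}_i$ and $\mathcal{V} \in \mathscr{C}_j$ with $i \neq j$, the corresponding non-pivot submatrices differ by an element of $(M_i - M_j) + \mathcal{C}^1 \subseteq \mathcal{C}^2 \setminus \mathcal{C}^1$, which is nonzero and therefore has rank at least $\delta_2$; thus $d_S(\mathscr{C}_i, \mathscr{C}_j) \geq 2\delta_2$, establishing the fixed inter-code distance.

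The main subtle point is the degenerate case $v_{\min}(\mathcal{F}_v,\delta_1) = 0$, where $\mathcal{C}^1 = \{\mathbf{0}_{m \times n}\}$ and each $\mathscr{C}_i$ collapses to the single subspace obtained by lifting $M_i$; here the intra-code condition is vacuous and $\mathcal{C}^1 \subseteq \mathcal{C}^2$ is built into the statement of Lemma~\ref{lemma:optimal_subspace}. Apart from handling this boundary case, the argument is essentially bookkeeping. The only genuine technical requirement is that representatives of distinct cosets of $\mathcal{C}^1$ in $\mathcal{C}^2$ always differ by elements outside $\mathcal{C}^1$ — which is immediate from the disjointness of the coset decomposition — so that the stronger lower bound $\delta_2$ inherited from $\mathcal{C}^2$ can be applied across cosets, rather than merely the weaker intra-coset bound $\delta_1$.
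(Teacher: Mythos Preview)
Your argument is correct and is exactly the intended one: partition $\mathcal{C}^2$ into the $s_v$ cosets of $\mathcal{C}^1$, lift each coset through $EF(v)$, and read off both distance guarantees from Lemma~\ref{lemma:subspace_hamming} together with the minimum rank distances $\delta_1$ (within a coset) and $\delta_2$ (across cosets). The paper does not give its own proof of this lemma---it is quoted from \cite{co2}---so there is nothing further to compare against; your write-up matches the standard coset-lifting argument that underlies that reference.
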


Based on Lemma~\ref{lemma:optimal_subspace} and Lemma~\ref{lemma:cpc_sequence}, the construction of inverse list of CDCs with fixed distance by cosets of an optimal FDRMC is given as follows.

\begin{coro}[]\label{cor:codes_sequence}  
Let \( n, k, \delta_1, \delta_2 \) be integers with \( \delta_1 > \delta_2 > 0 \). Given an inverse identifying vectors \( \hat{v} \in \mathcal{F}^n_2 \) of weight \( k \), let \( \hat{\mathcal{F}}_{\hat{v}} \) denote its associated inverse Ferrers diagram. Suppose that each of the leftmost \( \delta_1 - 1 \) columns of \( \hat{\mathcal{F}}_{\hat{v}} \) contain at least \( n \) dots. Then 
\begin{itemize}  
    \item there exists an optimal \( [\hat{\mathcal{F}}_{\hat{v}}, \delta_1] \) code \( \mathcal{C}^1 \) and an optimal \( [\hat{\mathcal{F}}_{\hat{v}}, \delta_2] \) code \( \mathcal{C}^2 \), such that \( \mathcal{C}^1 \) is an \( \mathbb{F}_q \)-subspace of \( \mathcal{C}^2 \);  
    \item there exists an inverse list of \((n, D_{\hat{v}}, 2\delta_1, k)_q\)-CDCs \((\mathscr{C}_1, \mathscr{C}_2, \cdots, \mathscr{C}_{s_{\hat{v}}})\) with fixed distance \(2\delta_2\), where \( D_{\hat{v}} = q^{v_{\text{min}}(\hat{\mathcal{F}}_{\hat{v}}, \delta_1)}, s_{\hat{v}} = q^{v_{\text{min}}(\hat{\mathcal{F}}_{\hat{v}}, \delta_2) - v_{\text{min}}(\hat{\mathcal{F}}_{\hat{v}}, \delta_1)} \).
\end{itemize}   
\end{coro}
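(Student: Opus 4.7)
The plan is to reduce this corollary to Lemma~\ref{lemma:optimal_subspace} and Lemma~\ref{lemma:cpc_sequence} via a column-reversal symmetry, exploiting the fact that the inverse setup based on RRIEF forms and inverse Ferrers diagrams is the mirror image of the forward setup based on RREF forms and standard Ferrers diagrams. Concretely, I would introduce the column-reversal map $\varphi : \mathbb{F}_q^{m\times n} \to \mathbb{F}_q^{m\times n}$ sending a matrix to the matrix obtained by reversing the order of its columns. This is a rank-preserving $\mathbb{F}_q$-linear bijection that carries $\widehat{\mathrm{EF}}(\hat{v})$ to $\mathrm{EF}(v)$ for $v$ the coordinate-reversal of $\hat{v}$, and carries the inverse Ferrers diagram $\hat{\mathcal{F}}_{\hat{v}}$ to the ordinary Ferrers diagram $\mathcal{F}_v$. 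Under this correspondence, the leftmost $\delta_1-1$ columns of $\hat{\mathcal{F}}_{\hat{v}}$ become the rightmost $\delta_1-1$ columns of $\mathcal{F}_v$, so the hypothesis of the corollary translates exactly into the hypothesis of Lemma~\ref{lemma:optimal_subspace}.

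For the first bullet, I would apply Lemma~\ref{lemma:optimal_subspace} to $\mathcal{F}_v$ to obtain nested optimal $[\mathcal{F}_v, \delta_1]_q$ and $[\mathcal{F}_v, \delta_2]_q$ codes $\tilde{\mathcal{C}}^1 \subseteq \tilde{\mathcal{C}}^2$, then set $\mathcal{C}^i = \varphi^{-1}(\tilde{\mathcal{C}}^i)$ for $i=1,2$. Rank-preservation of $\varphi$ yields the required minimum rank distances, and optimality transfers because the Singleton-like bound of Lemma~\ref{lem:op} depends only on the dot-counts of the diagram, which are invariant under column reversal. The containment $\mathcal{C}^1 \subseteq \mathcal{C}^2$ is then immediate, and when $v_{\min}(\hat{\mathcal{F}}_{\hat{v}}, \delta_1)=0$ the trivial code $\{\mathbf{0}_{m\times n}\}$ plays the role of $\mathcal{C}^1$ inside any optimal $[\hat{\mathcal{F}}_{\hat{v}}, \delta_2]_q$ code, exactly as in Lemma~\ref{lemma:optimal_subspace}.

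For the second bullet, I would apply Lemma~\ref{lemma:cpc_sequence} to $\tilde{\mathcal{C}}^1 \subseteq \tilde{\mathcal{C}}^2$ to extract coset representatives $g_1 = \mathbf{0}, g_2, \ldots, g_{s_{\hat{v}}}$ of $\tilde{\mathcal{C}}^1$ in $\tilde{\mathcal{C}}^2$, and then transport them via $\varphi^{-1}$. Define $\mathscr{C}_i$ to consist of the subspaces whose RRIEF is obtained from $\widehat{\mathrm{EF}}(\hat{v})$ by substituting the entries of a matrix in $\mathcal{C}^1 + \varphi^{-1}(g_i)$ into the dot positions of $\hat{\mathcal{F}}_{\hat{v}}$. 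Each $\mathscr{C}_i$ has cardinality $|\mathcal{C}^1| = q^{v_{\min}(\hat{\mathcal{F}}_{\hat{v}}, \delta_1)} = D_{\hat{v}}$ and the number of cosets is $s_{\hat{v}} = q^{v_{\min}(\hat{\mathcal{F}}_{\hat{v}},\delta_2) - v_{\min}(\hat{\mathcal{F}}_{\hat{v}},\delta_1)}$. To verify the distance guarantees, I would invoke Lemma~\ref{lemma:subspace_hamming2}: every codeword of every $\mathscr{C}_i$ shares the inverse identifying vector $\hat{v}$, so for any $\mathcal{U}, \mathcal{V}$ in $\bigcup_i \mathscr{C}_i$ one has $d_S(\mathcal{U}, \mathcal{V}) = 2 d_R(C_U, C_V)$. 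If $\mathcal{U}, \mathcal{V}$ lie in the same $\mathscr{C}_i$, the difference is in $\mathcal{C}^1$, giving $d_S \geq 2\delta_1$; if they lie in distinct $\mathscr{C}_i, \mathscr{C}_j$, the difference is a nonzero element of $\mathcal{C}^2$, giving $d_S \geq 2\delta_2$.

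The main obstacle is really only the notational bookkeeping in checking that the column-reversal bijection $\varphi$ respects all the relevant structures simultaneously, namely echelon forms, Ferrers-diagram support, the Singleton-like optimality bound, and the coset structure of $\mathcal{C}^1$ in $\mathcal{C}^2$. Once this correspondence is pinned down, both assertions of the corollary follow immediately by invoking Lemma~\ref{lemma:optimal_subspace} and Lemma~\ref{lemma:cpc_sequence} in the mirrored setting, with Lemma~\ref{lemma:subspace_hamming2} replacing Lemma~\ref{lemma:subspace_hamming} at the distance-verification step.
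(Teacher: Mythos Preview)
Your proposal is correct and follows essentially the same approach as the paper: the paper presents this corollary without proof, simply stating that it is ``based on Lemma~\ref{lemma:optimal_subspace} and Lemma~\ref{lemma:cpc_sequence},'' i.e., that the inverse setting is the mirror image of the forward one. Your explicit use of the column-reversal bijection and the substitution of Lemma~\ref{lemma:subspace_hamming2} for Lemma~\ref{lemma:subspace_hamming} simply spell out what the paper leaves implicit.
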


Keep the notations as in Corollary~\ref{cor:codes_sequence} and let \( \mathcal{C}^2 / \mathcal{C}^1 = \{ \mathcal{D}_1, \mathcal{D}_2, \cdots, \mathcal{D}_{s_{\hat{v}}} \} \) be the complete set of cosets of \(\mathcal{C}^1\) in \(\mathcal{C}^2\). For each \( \mathcal{D}_i \), \( 1\leq i \leq s_{\hat{v}}\), if there exists an \((\hat{\mathcal{F}}_{\hat{v}}, \delta_1, [0, r])_q\)-GFRMC $\mathcal{C}_i$ and let $\mathscr{C}_i$ be the corresponding lifted code, then \((\mathscr{C}_1, \mathscr{C}_2, \cdots, \mathscr{C}_{s_{\hat{v}}})\) is defined as a $r$-restricted inverse list of \((n, 2\delta_1, k)_q\)-CDCs with fixed distance \(2\delta_2\).

The following lemma can be used to construct larger CDCs through lists of CDCs.

\begin{lem}[\cite{co2}]\label{lemma:cdc_fdrmc}  
Let \(n, k, t, \delta_1, \delta_2\) be integers with \(\delta_1 > \delta_2 > 0\). Let \(\mathcal{V}\) be an \((n, 2\delta_1, k)_2\) CWC such that for each \(v \in \mathcal{V}\), there exists an optimal \([\mathcal{F}_v, \delta_1]_{q}\) code \(\mathcal{C}^1\) which is an \(\mathbb{F}_{q}\)-subspace of an optimal \([\mathcal{F}_v, \delta_2]_{q}\) code \(\mathcal{C}^2\). Suppose that \(\mathcal{V} = \{v_1, v_2, \cdots, v_t\}\) with \(s_{v_1} \geq s_{v_2} \geq \cdots \geq s_{v_t} > s_{v_{t+1}} = 0 \). Then, there exists \(s_{v_i} - s_{v_{i+1}}\) distinct \((n, \sum_{k=1}^{i} D_{v_k}, 2\delta_1, k)_{q}\)-CDCs for \(1 \leq i \leq t\). Moreover, all these CDCs collectively form a list of \((n, 2\delta_1, k)_{q}\)-CDCs with fixed distance \(2\delta_2\).  
\end{lem}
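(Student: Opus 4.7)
The plan is to apply Lemma~\ref{lemma:cpc_sequence} once to each identifying vector in $\mathcal{V}$ and then interleave the resulting $t$ lists into a single family of CDCs indexed by a common parameter $j$, where the $j$-th big CDC collects one component from every list that still has a $j$-th entry.

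First, for every $i \in \{1, 2, \ldots, t\}$ I would invoke Lemma~\ref{lemma:cpc_sequence} on $v_i$ (whose hypothesis is supplied by the assumption that each $v_i$ admits optimal FDRMCs with $\mathcal{C}^1 \subseteq \mathcal{C}^2$) to obtain a list $(\mathscr{C}^{(i)}_1, \mathscr{C}^{(i)}_2, \ldots, \mathscr{C}^{(i)}_{s_{v_i}})$ of $(n, D_{v_i}, 2\delta_1, k)_q$-CDCs with fixed distance $2\delta_2$; here the superscript records the identifying vector and the subscript records the position in that list. For each $j$ in $\{1, 2, \ldots, s_{v_1}\}$ let $i(j)$ be the unique index with $s_{v_{i(j)+1}} < j \leq s_{v_{i(j)}}$, and define
\[
\mathscr{D}_j \;=\; \bigcup_{\ell=1}^{i(j)} \mathscr{C}^{(\ell)}_j.
\]
Because $s_{v_1} \geq s_{v_2} \geq \cdots \geq s_{v_t}$, this union is well-defined. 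Since the codewords of $\mathscr{C}^{(\ell)}_j$ all have identifying vector $v_\ell$ and the $v_\ell$ are pairwise distinct, the constituent CDCs inside $\mathscr{D}_j$ are pairwise disjoint, so $|\mathscr{D}_j| = \sum_{\ell=1}^{i(j)} D_{v_\ell}$. The set $\{j : i(j) = i\}$ has exactly $s_{v_i} - s_{v_{i+1}}$ elements, which matches the count claimed in the statement.

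Next I would verify the two required distance properties. Inside a single $\mathscr{D}_j$, two codewords chosen from the same component $\mathscr{C}^{(\ell)}_j$ satisfy $d_S \geq 2\delta_1$ because $\mathscr{C}^{(\ell)}_j$ is itself a CDC, while two codewords from different components with identifying vectors $v_\ell \neq v_{\ell'}$ satisfy $d_H(v_\ell, v_{\ell'}) \geq 2\delta_1$ since $\mathcal{V}$ is an $(n, 2\delta_1, k)_2$ CWC, and Lemma~\ref{lemma:subspace_hamming} upgrades this bound to $d_S \geq 2\delta_1$. Hence each $\mathscr{D}_j$ is genuinely an $(n, |\mathscr{D}_j|, 2\delta_1, k)_q$-CDC. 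For the list-distance claim, fix $j \neq j'$ and pick $A \in \mathscr{C}^{(\ell)}_j$, $B \in \mathscr{C}^{(\ell')}_{j'}$: if $\ell \neq \ell'$ the CWC bound again gives $d_S(A,B) \geq 2\delta_1 > 2\delta_2$, and if $\ell = \ell'$ the fixed-distance property delivered by Lemma~\ref{lemma:cpc_sequence} gives $d_S(A,B) \geq 2\delta_2$. Taking the minimum over all such pairs yields $d_S(\mathscr{D}_j, \mathscr{D}_{j'}) \geq 2\delta_2$, so the whole family $\{\mathscr{D}_j\}_{j=1}^{s_{v_1}}$ forms a list of $(n, 2\delta_1, k)_q$-CDCs with fixed distance $2\delta_2$.

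The main obstacle I anticipate is entirely bookkeeping: pinning down the function $i(j)$ from the weakly decreasing sequence $s_{v_1} \geq \cdots \geq s_{v_t}$, checking that the counts $s_{v_i} - s_{v_{i+1}}$ telescope to cover every $j \in \{1, \ldots, s_{v_1}\}$ exactly once, and keeping the summation index (called $\ell$ above) notationally separate from the dimension $k$ used by the paper. No genuinely new distance estimate is needed beyond the direct combination of Lemma~\ref{lemma:cpc_sequence} with the Hamming-to-subspace lower bound in Lemma~\ref{lemma:subspace_hamming}.
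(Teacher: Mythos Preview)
The paper does not include a proof of this lemma; it is quoted from~\cite{co2} without argument. Your proposal is correct and is the natural approach one would expect: apply Lemma~\ref{lemma:cpc_sequence} to each $v_i$, interleave the resulting $t$ lists by taking ``column'' unions indexed by $j$, and then verify the two distance requirements via Lemma~\ref{lemma:subspace_hamming} (for codewords with distinct identifying vectors) and the fixed-distance guarantee of Lemma~\ref{lemma:cpc_sequence} (for codewords sharing an identifying vector). The telescoping count $s_{v_i}-s_{v_{i+1}}$ follows immediately from your definition of $i(j)$, since the intervals $(s_{v_{i+1}},s_{v_i}]$ partition $\{1,\ldots,s_{v_1}\}$.
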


Based on Lemma~\ref{lemma:cdc_fdrmc}, the following lemma is presented, which is essential to our construction.

\begin{lem}[\cite{co2}]\label{lemma:cascaded_codes}  
Let \(n, k, t, \delta_1, \delta_2\) be integers with \(\delta_1 > \delta_2 > 0\). Let \( \mathcal{U} = \bigcup_{i=1}^t \mathcal{V}_i \) be an \((n, 2\delta_2, k)_2\) CWC, where each \( \mathcal{V}_i \) is an \((n, 2\delta_1, k)_2\) CWC satisfying the conditions of Lemma~\ref{lemma:cdc_fdrmc}. Suppose \( \mathscr{C}_{v_i} \) is a list of CDCs constructed via Lemma~\ref{lemma:cdc_fdrmc}. Then, \( (\mathscr{C}_{v_1}, \mathscr{C}_{v_2}, \cdots, \mathscr{C}_{v_t}) \) forms a list of \((n, 2\delta_1, k)_q\)-CDCs with fixed distance \(2\delta_2\).  
\end{lem}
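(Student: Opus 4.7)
The plan is to check the two defining properties of the list-of-CDCs structure: first, that each CDC appearing in the concatenated sequence is a valid $(n, 2\delta_1, k)_q$-CDC, and second, that any two distinct CDCs in the sequence are at subspace distance at least $2\delta_2$. The first property and the ``within-one-$\mathscr{C}_{v_i}$'' part of the second property come for free from Lemma~\ref{lemma:cdc_fdrmc}, which is exactly the statement that each $\mathscr{C}_{v_i}$ is itself a list of $(n,2\delta_1,k)_q$-CDCs with fixed distance $2\delta_2$. So the only work is to handle cross-list pairs.

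Fix indices $i \neq j$ and pick arbitrary CDCs $\mathscr{C} \in \mathscr{C}_{v_i}$ and $\mathscr{C}' \in \mathscr{C}_{v_j}$; I need to show $d_S(\mathscr{C}, \mathscr{C}') \geq 2\delta_2$. The main step is a Hamming-to-subspace argument using the global CWC hypothesis. For any $\mathcal{W} \in \mathscr{C}$ and $\mathcal{W}' \in \mathscr{C}'$, the Lemma~\ref{lemma:cdc_fdrmc} construction forces the identifying vector $v(\mathcal{W})$ to lie in $\mathcal{V}_i$ and $v(\mathcal{W}')$ to lie in $\mathcal{V}_j$. Reading $\mathcal{U} = \bigcup_i \mathcal{V}_i$ as a (disjoint) partition of the ambient $(n, 2\delta_2, k)_2$ CWC, these two identifying vectors are distinct codewords of $\mathcal{U}$, so $d_H(v(\mathcal{W}), v(\mathcal{W}')) \geq 2\delta_2$. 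Applying Lemma~\ref{lemma:subspace_hamming} then yields $d_S(\mathcal{W}, \mathcal{W}') \geq d_H(v(\mathcal{W}), v(\mathcal{W}')) \geq 2\delta_2$. Since $\mathcal{W}, \mathcal{W}'$ were arbitrary, $d_S(\mathscr{C}, \mathscr{C}') \geq 2\delta_2$, closing the cross-list case and completing the proof.

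The main obstacle I expect is the subtlety of whether the $\mathcal{V}_i$'s may overlap. If some weight-$k$ vector $v$ lay in both $\mathcal{V}_i$ and $\mathcal{V}_j$, the CDCs in $\mathscr{C}_{v_i}$ and $\mathscr{C}_{v_j}$ would each include lifted cosets of the same optimal FDRMC attached to $v$, and a priori two such cosets chosen independently in the two lists could coincide, collapsing the distance to $0$. The natural reading of ``$\mathcal{U} = \bigcup_{i=1}^t \mathcal{V}_i$'' in the context of this construction is that the $\mathcal{V}_i$'s partition $\mathcal{U}$, under which the issue evaporates; if one instead allows overlaps, then one must coordinate the coset selections across the $\mathscr{C}_{v_i}$'s so that cosets chosen at a shared $v$ in different lists are distinct, in which case the $2\delta_2$ lower bound is restored via the coset/list machinery of Lemma~\ref{lemma:cpc_sequence}.
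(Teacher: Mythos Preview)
The paper does not supply its own proof of this lemma; it is quoted from \cite{co2} and stated without argument. Your proof is correct and is the natural one: Lemma~\ref{lemma:cdc_fdrmc} handles the within-list properties, and the cross-list bound follows immediately from Lemma~\ref{lemma:subspace_hamming} together with the hypothesis that $\mathcal{U}$ is an $(n,2\delta_2,k)_2$ CWC. Your remark about the implicit disjointness of the $\mathcal{V}_i$ is well taken; the intended reading (also borne out by the examples in the paper, e.g., Tables~\ref{table:6} and~\ref{table:9}) is indeed that the $\mathcal{V}_i$ partition $\mathcal{U}$.
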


\begin{rem}\label{remark}

Note that for a (restricted) inverse list of CDCs, Lemma~\ref{lemma:cdc_fdrmc} and Lemma~\ref{lemma:cascaded_codes} remain valid.  

\end{rem}

The coset construction proposed in \cite{co1} is described below. Given a \( k \)-dimensional subspace \( \mathcal{U} \) of \( \mathbb{F}_q^n \) and a generator matrix \( U \in \mathbb{F}_q^{k \times n} \) in RREF. Then there exists a mapping 
\[
\tau: \mathcal{G}_q(n, k) \rightarrow \left\{ U \in \mathbb{F}_q^{k \times n} \,\big|\, \mathrm{rank}(U) = k, \, U \text{ in RREF} \right\}.
\]  
Let \( B \in \mathbb{F}_q^{k \times n} \) be a full rank matrix in RREF and \( F \in \mathbb{F}_q^{y \times (n-k)} \) with \( n \geq k \). Let \( b_i \) denote the pivot column positions of \( B \) for \( 1 \leq i \leq k \). Suppose \( \varphi_B(F) \in \mathbb{F}_q^{y \times n} \) satisfies:  
\begin{itemize}  
    \item if \( i \in \{b_1, \ldots, b_k\} \), then \( g_i = \mathbf{0} \in \mathbb{F}_q^y \);  
    \item the remaining submatrix after removing columns \( g_{b_1}, \cdots, g_{b_k} \) from \( \varphi_B(F) \) is \( F \),
\end{itemize}   
where \( g_i \) is the \( i \)-th column vector of \( \varphi_B(F) \) for \( 1 \leq i \leq n \).  

\begin{example}
Let
\[
 B = \begin{bmatrix} 
1 & 1 & 0 & 0 & 1 & 0 \\ 
0 & 1 & 0 & 1 & 1 & 1 \\ 
0 & 0 & 0 & 0 & 1 & 1 
\end{bmatrix}, 
F = \begin{bmatrix} 
1 & 1 & 0 \\ 
0 & 0 & 0 \\ 
1 & 0 & 1 \\ 
0 & 0 & 1 
\end{bmatrix}.
\]  
Then
\[
\varphi_B(F) = \begin{bmatrix} 
0 & 0 & 1 & 1 & 0 & 0 \\ 
0 & 0 & 0 & 0 & 0 & 0 \\ 
0 & 0 & 1 & 0 & 0 & 1 \\ 
0 & 0 & 0 & 0 & 0 & 1 
\end{bmatrix}.
\]  
\end{example}

\begin{lem}[Coset construction \cite{co1}]\label{lemma:coset_construction}  
    Let \( n, d, k, n_1, n_2, d_1, d_2, k_1, k_2, s \) be positive integers with \( n = n_1 + n_2 \), \( d = d_1 + d_2 \) and \( k = k_1 + k_2 \), where \( n \geq 2k \) and \( n_i \geq k_i \) for \( i = 1, 2\). Suppose \( (\mathcal{A}_1, \mathcal{A}_2, \cdots, \mathcal{A}_s) \) is a list of \((n_1, d, k_1)_q\)-CDCs with fixed distance \( d_1 \), \( (\mathcal{B}_1, \mathcal{B}_2, \cdots, \mathcal{B}_s) \) is a list of \((n_2, d, k_2)_q\)-CDCs with fixed distance \( d_2 \) and \( \mathcal{H} \) is a \( (k_1 \times (n_2-k_2), \frac{d}{2})_q\) MRD code. Define
    \[
    \mathcal{C}^i = \left\{ \mathrm{rs}\left(
    \begin{array}{@{}cc@{}}
    A_i & \varphi_{B_i}(H) \\
    0_{k_2 \times n_1} & B_i \\
    \end{array}
    \right)\,\bigg|\,  
    \tau^{-1}(A_i) \in \mathcal{A}_i, \, \tau^{-1}(B_i) \in \mathcal{B}_i, \, H \in \mathcal{H}  
    \right\}.
    \]  
Then \( \bigcup_{1 \leq i \leq s} \mathcal{C}^i \) is an \((n, d, k)_q\)-CDC.  
\end{lem}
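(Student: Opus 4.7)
The plan is to verify two claims: every element of $\bigcup_{1\le i\le s}\mathcal{C}^i$ is a $k$-dimensional subspace of $\mathbb{F}_q^n$, and any two distinct such subspaces $\mathcal{U},\mathcal{V}$ satisfy $d_S(\mathcal{U},\mathcal{V})\ge d$. The dimension count is immediate from the block-triangular generator matrix $\bigl[\begin{smallmatrix}A & \varphi_B(H)\\ 0 & B\end{smallmatrix}\bigr]$: the RREF matrix $A$ contributes $k_1$ pivots in the first $n_1$ columns and the RREF matrix $B$ contributes $k_2$ pivots in the last $n_2$ columns, so the full matrix has rank $k=k_1+k_2$.

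For the distance bound, let $\mathcal{U}$ come from index $i$ with parameters $(A,B,H)$ and $\mathcal{V}$ come from index $j$ with $(A',B',H')$, and use the identity $d_S(\mathcal{U},\mathcal{V})=2\,\mathrm{rank}\,M-2k$, where $M$ is the stacked $2k\times n$ matrix. The key structural observation is the rank-nullity decomposition $\mathrm{rank}\,M=\mathrm{rank}[A;A']+\dim W$, where $W\subseteq\mathbb{F}_q^{n_2}$ is the projection to the last $n_2$ coordinates of all row combinations of $M$ whose first $n_1$ coordinates vanish. Clearly $W\supseteq\mathrm{rs}[B;B']$, so $\dim W\ge\mathrm{rank}[B;B']$.

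I would then split into four cases. When $i\ne j$, the list hypotheses give $d_S(\mathrm{rs}(A),\mathrm{rs}(A'))\ge d_1$ and $d_S(\mathrm{rs}(B),\mathrm{rs}(B'))\ge d_2$, so by Lemma~\ref{lemma:subspace_hamming} we get $\mathrm{rank}[A;A']\ge k_1+d_1/2$ and $\mathrm{rank}[B;B']\ge k_2+d_2/2$, summing to $\mathrm{rank}\,M\ge k+d/2$. When $i=j$ and $\mathrm{rs}(A)\ne\mathrm{rs}(A')$, the internal CDC property of $\mathcal{A}_i$ upgrades the first bound to $\mathrm{rank}[A;A']\ge k_1+d/2$, and the trivial $\dim W\ge k_2$ suffices; the symmetric sub-case with $\mathrm{rs}(A)=\mathrm{rs}(A')$ and $\mathrm{rs}(B)\ne\mathrm{rs}(B')$ is analogous, using the internal CDC property of $\mathcal{B}_i$. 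Finally, when $\mathrm{rs}(A)=\mathrm{rs}(A')$ and $\mathrm{rs}(B)=\mathrm{rs}(B')$, necessarily $H\ne H'$; parameterising combinations of the $A$-rows that cancel in the first $n_1$ coordinates yields an additional contribution to $W$ equal to $\mathrm{rs}(\varphi_B(H-H'))$. Since $\varphi_B(H-H')$ vanishes on the pivot columns of $B$ while every nonzero element of $\mathrm{rs}(B)$ is nonzero on at least one of those columns, the two subspaces intersect trivially, whence $\dim W\ge k_2+\mathrm{rank}(H-H')\ge k_2+d/2$ by the MRD property of $\mathcal{H}$.

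The main obstacle is this final case, where one must verify that the $\varphi_B$-adjustment interacts correctly with row reduction: one needs the pivot columns of $B$ (already carrying the leading ones of $[0,B]$) and the complementary columns (where $H-H'$ lives after the $A$-row cancellation) to be disjoint, giving the direct-sum containment $\mathrm{rs}(\varphi_B(H-H'))\oplus\mathrm{rs}(B)\subseteq W$. Once this compatibility is pinned down, every case reduces to a mechanical application of the subspace-to-rank translation combined with the block-triangular structure, and the uniform lower bound $\mathrm{rank}\,M\ge k+d/2$ translates directly into $d_S(\mathcal{U},\mathcal{V})\ge d$.
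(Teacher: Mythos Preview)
The paper does not prove this lemma; it is quoted from \cite{co1} as a known result and used as a black box throughout Section~4. So there is no ``paper's own proof'' to compare against.

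Your argument is nonetheless a correct self-contained proof. The rank decomposition $\mathrm{rank}\,M=\mathrm{rank}[A;A']+\dim W$ follows from applying rank--nullity to the projection of $\mathrm{rs}(M)$ onto the first $n_1$ coordinates, and the containment $\mathrm{rs}[B;B']\subseteq W$ is immediate. Your four-case split is exhaustive, and in each of the first three cases the list or internal CDC hypotheses give exactly the rank bounds you claim. The delicate case is the last one ($A=A'$, $B=B'$, $H\ne H'$), and your handling is sound: since $A$ is in RREF with full row rank, the only row combinations of the $A$- and $A'$-blocks that vanish on the first $n_1$ columns are of the form $\alpha\cdot(\text{row}_\ell\ \text{of}\ [A\mid\varphi_B(H)])-\alpha\cdot(\text{row}_\ell\ \text{of}\ [A\mid\varphi_B(H')])$, contributing $\mathrm{rs}(\varphi_B(H-H'))$ to $W$; and because $\varphi_B(H-H')$ is identically zero on the pivot columns of $B$ while the restriction of $B$ to those columns is $I_{k_2}$, the sum $\mathrm{rs}(\varphi_B(H-H'))+\mathrm{rs}(B)$ is direct. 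The equality $\mathrm{rank}(\varphi_B(H-H'))=\mathrm{rank}(H-H')$ holds because inserting zero columns does not change rank, and the MRD hypothesis finishes the bound.
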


The following lemma is crucial for constructions of larger CDCs by using the multilevel construction and coset construction.

\begin{lem}[\cite{co1}]\label{lemma:multi_coset}  
Keep the notations as in Lemma~\ref{lemma:coset_construction}. For \( \mathcal{U} \in \bigcup_{i \leq i \leq s} \mathcal{C}^i \) and \( \mathcal{V} \in \mathcal{G}_{q}(n,k) \), let \( x \) be the partial sum of the first \( n_1 \) entries of the identifying vector \( v(\mathcal{V}) \). If \( d \leq 2|x - k_1| \), then the subspace distance satisfies \(d_S(\mathcal{U}, \mathcal{V}) \geq d \).
\end{lem}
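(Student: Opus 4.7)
The plan is to apply Lemma~\ref{lemma:subspace_hamming}, which lower-bounds the subspace distance by the Hamming distance of identifying vectors, and then to lower-bound that Hamming distance purely in terms of how the weights of $v(\mathcal{U})$ and $v(\mathcal{V})$ split across the first $n_1$ and last $n_2$ coordinates.

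First I would pin down the identifying vector of $\mathcal{U}$. Any $\mathcal{U}\in\bigcup_i \mathcal{C}^i$ is the row space of a block matrix
\[
G=\begin{pmatrix} A_i & \varphi_{B_i}(H) \\ 0_{k_2\times n_1} & B_i \end{pmatrix},
\]
where $A_i\in\mathbb{F}_q^{k_1\times n_1}$ and $B_i\in\mathbb{F}_q^{k_2\times n_2}$ are in RREF. By the definition of $\varphi_{B_i}$, the $k_2$ columns of $\varphi_{B_i}(H)$ corresponding to the pivot positions of $B_i$ are zero, so those columns of $G$ are standard basis vectors with their single $1$ in one of the last $k_2$ rows; combined with the fact that $A_i$ has zero entries below its own pivots (thanks to the zero block beneath it) and $B_i$ has zeros below its pivots inside its own block, $G$ itself is in RREF. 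Hence $v(\mathcal{U})$ has exactly $k_1$ ones among its first $n_1$ coordinates and exactly $k_2$ ones among its last $n_2$ coordinates.

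Next, write $v(\mathcal{U})=(u_1\mid u_2)$ and $v(\mathcal{V})=(w_1\mid w_2)$, where $u_1,w_1\in\mathbb{F}_2^{n_1}$ and $u_2,w_2\in\mathbb{F}_2^{n_2}$. By the previous paragraph $|u_1|=k_1$ and $|u_2|=k_2$, while by hypothesis $|w_1|=x$, so $|w_2|=k-x=k_1+k_2-x$. For two binary vectors of the same length the Hamming distance is at least the difference of their weights, which gives
\[
d_H(u_1,w_1)\geq |k_1-x|, \qquad d_H(u_2,w_2)\geq |k_2-(k-x)|=|x-k_1|.
\]
Adding these and invoking Lemma~\ref{lemma:subspace_hamming} yields
\[
d_S(\mathcal{U},\mathcal{V})\geq d_H(v(\mathcal{U}),v(\mathcal{V}))=d_H(u_1,w_1)+d_H(u_2,w_2)\geq 2|x-k_1|.
\]
Since $d\leq 2|x-k_1|$ by assumption, the conclusion $d_S(\mathcal{U},\mathcal{V})\geq d$ follows immediately.

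The only non-routine point is the verification that $G$ really is in RREF so that $v(\mathcal{U})$ has the expected block-weight profile $(k_1,k_2)$; once that is established, the argument is a two-line weight-counting bound together with an application of Lemma~\ref{lemma:subspace_hamming}. No information about the MRD component $\mathcal{H}$ or the lists $(\mathcal{A}_i)$, $(\mathcal{B}_i)$ beyond the shape of their identifying vectors is needed, which is exactly why this lemma can serve as the ``cross term'' bound when coset constructions are combined with multilevel constructions in the subsequent theorems.
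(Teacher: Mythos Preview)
Your argument is correct. The paper itself does not supply a proof of this lemma: it is quoted verbatim from \cite{co1} and simply cited, so there is no in-paper proof to compare against. What you have written is essentially the standard justification, and it matches how the result is used later (e.g.\ in the proofs of Theorems~\ref{th:2} and~\ref{th:5}, where only the block-weight profile $(k_1,k_2)$ of $v(\mathcal{U})$ is invoked).

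The one step worth stating carefully---and you handled it---is that $G$ is already in RREF. The three ingredients are exactly those you list: $A_i$ is in RREF with a zero block beneath it; $B_i$ is in RREF; and the pivot columns of $B_i$ are zeroed out in $\varphi_{B_i}(H)$ by construction. This forces the pivots of $G$ to be precisely the pivots of $A_i$ (in the first $n_1$ columns) followed by the pivots of $B_i$ (in the last $n_2$ columns), giving the weight split $(k_1,k_2)$ for $v(\mathcal{U})$. After that, the bound $d_H(a,b)\ge \big||a|-|b|\big|$ on each block and Lemma~\ref{lemma:subspace_hamming} finish the job.
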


The following lemma will be used throughout the remainder of this paper.

\begin{lem}[\cite{co1}]\label{lemma:rearrangement}  
    Let \( a_1 \geq a_2 \geq \cdots \geq a_s \) and \( b_1 \geq b_2 \geq \cdots \geq b_s \) be positive integers. Then, for any permutation \( \beta: \{1, 2, \ldots, s\} \to \{1, 2, \ldots, s\} \), the following inequality holds 
    \[
    \sum_{i=1}^s a_i b_i \geq \sum_{i=1}^s a_i b_{\beta(i)}.
    \]  
\end{lem}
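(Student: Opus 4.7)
The plan is to prove this classical rearrangement inequality by showing that the identity permutation maximizes $\sum a_k b_{\sigma(k)}$ over all permutations $\sigma$, via a sequence of local swaps that never decrease the sum. My strategy is: given any permutation $\beta$ that differs from the identity, locate an \emph{inversion} --- a pair $i < j$ with $\beta(i) > \beta(j)$ --- and show that swapping the $b$-assignments at these two positions yields a new permutation $\beta'$ whose associated sum is at least as large. Iterating this procedure transforms $\beta$ into the identity, establishing the inequality.

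First I would fix an arbitrary permutation $\beta$. If $\beta$ is the identity, the inequality holds with equality and there is nothing to prove, so I may assume otherwise, in which case at least one inversion $(i,j)$ with $i<j$ and $\beta(i) > \beta(j)$ must exist. I define $\beta'$ by $\beta'(i) = \beta(j)$, $\beta'(j) = \beta(i)$, and $\beta'(k) = \beta(k)$ for $k \notin \{i,j\}$. The key calculation is the difference
\[
\sum_{k=1}^s a_k b_{\beta'(k)} - \sum_{k=1}^s a_k b_{\beta(k)} = \bigl(a_i b_{\beta(j)} + a_j b_{\beta(i)}\bigr) - \bigl(a_i b_{\beta(i)} + a_j b_{\beta(j)}\bigr) = (a_i - a_j)\bigl(b_{\beta(j)} - b_{\beta(i)}\bigr).
\]
Because $i < j$ gives $a_i \geq a_j$, and $\beta(i) > \beta(j)$ together with the monotonicity of the sequence $(b_\ell)$ gives $b_{\beta(j)} \geq b_{\beta(i)}$, both factors on the right are non-negative. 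Hence each such swap weakly increases the total sum.

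The remaining step is a termination argument. I would define the inversion number $\mathrm{inv}(\beta)$ as the count of pairs $(i,j)$ with $i<j$ and $\beta(i) > \beta(j)$, and always perform the swap at an \emph{adjacent} inversion, i.e.\ one with $j = i+1$ (whenever $\beta$ is not the identity, some adjacent inversion must exist, since otherwise $\beta(1) < \beta(2) < \cdots < \beta(s)$). Each adjacent swap reduces $\mathrm{inv}$ by exactly $1$; since $\mathrm{inv}$ is a non-negative integer bounded above by $\binom{s}{2}$, the process terminates after finitely many steps at the identity. Chaining the inequalities from each swap yields $\sum_{k} a_k b_{\beta(k)} \leq \sum_{k} a_k b_k$, which is the desired conclusion.

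The main obstacle I anticipate is the bookkeeping around termination rather than the algebraic core: it is tempting to swap at an arbitrary (possibly non-adjacent) inversion, but then one must check that inversions do not proliferate and that the procedure still terminates. Restricting to adjacent swaps, which function as a bubble-sort monovariant, sidesteps this subtlety entirely while using the same one-line algebraic identity above. Note that although the lemma is stated for positive integers, positivity and integrality play no role; the argument works verbatim for any real-valued weakly decreasing sequences.
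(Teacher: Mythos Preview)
Your proof is correct: the adjacent-swap argument with the inversion number as a monovariant is the standard and complete route to the rearrangement inequality, and your algebraic identity and termination analysis are both sound. The paper does not actually prove this lemma at all --- it is stated with a citation to \cite{co1} and used as a known tool --- so there is no in-paper proof to compare against; your write-up supplies what the paper omits.
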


He \cite{17} introduced the following lemma that generalized the parallel linkage construction and effectively improved lower bounds on the cardinality of CDCs.

\begin{lem}[\cite{17}]\label{lemma:combined_cdc}  
Let \( n, d, k, n_1, n_2 \) be positive integers, where \( n = n_1 + n_2 \) and \( n_i \geq k \geq \frac{d}{2} \) for \( i = 1, 2 \). Suppose \( \mathcal{U}_1 \) is an \((n_1, d, k)_q\)-CDC, \( \mathcal{U}_2 \) is an \((n_2, d, k)_q\)-CDC, \( \mathcal{M}_1 \) is a \([k \times n_2, \frac{d}{2}]_q\) MRD code and \( \mathcal{M}_2 \) is a \((k \times n_1, \frac{d}{2}, [0, k-\frac{d}{2}])_q\)-GRMC. Define 
    \[
    \mathcal{C}_1 = \left\{ \mathrm{rs}(U_1 \mid M_1) \,\big|\, \tau^{-1}(U_1) \in \mathcal{U}_1, M_1 \in \mathcal{M}_1 \right\}, 
\]
\[ 
\mathcal{C}_2 = \left\{\mathrm{rs}(M_2 \mid U_2) \,\big|\, \tau^{-1}(U_2) \in \mathcal{U}_2, M_2 \in \mathcal{M}_2 \right\}.
    \]  
Then \( \mathcal{C}_1 \cup \mathcal{C}_2 \) is an \((n, d, k)_q\)-CDC.  
\end{lem}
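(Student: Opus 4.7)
My plan is to verify that $\mathcal{C}_1 \cup \mathcal{C}_2 \subseteq \mathcal{G}_q(n,k)$ and then to bound the subspace distance between any two distinct codewords, splitting into three cases: both in $\mathcal{C}_1$, both in $\mathcal{C}_2$, or one from each. The inclusion into $\mathcal{G}_q(n,k)$ is immediate: because $U_1$ (resp. $U_2$) is in RREF with rank $k$, appending $M_1$ on the right (resp. $M_2$ on the left) preserves the row rank, so every generator matrix in the construction generates a $k$-dimensional subspace of $\mathbb{F}_q^n$.

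For two distinct elements of $\mathcal{C}_1$, the key point is that $(U_1 \mid M_1)$ is already in RREF, since the pivot columns of $U_1$ remain pivot columns in the extended matrix and $M_1$ introduces no new pivots. Hence $v(\mathrm{rs}(U_1 \mid M_1)) = (v(\tau^{-1}(U_1)), 0^{n_2})$. When the two codewords share the same $U_1$, Lemma~\ref{lemma:subspace_hamming} gives $d_S = 2\, d_R(M_1, M_1') \geq 2 \cdot \tfrac{d}{2} = d$ using the MRD property of $\mathcal{M}_1$. When the $U_1$'s differ, I will project onto the first $n_1$ coordinates: the projection is a linear isomorphism on each codeword (any $a(U_1 \mid M_1) = 0$ on the first block forces $aU_1=0$, hence $a=0$), so $\dim(\mathcal{V}\cap \mathcal{V}') \leq \dim(\tau^{-1}(U_1) \cap \tau^{-1}(U_1'))$, yielding $d_S(\mathcal{V},\mathcal{V}') \geq d_S(\tau^{-1}(U_1), \tau^{-1}(U_1')) \geq d$. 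The $\mathcal{C}_2$--$\mathcal{C}_2$ case is symmetric via projection onto the last $n_2$ coordinates; when $U_2 = U_2'$ but $M_2 \neq M_2'$, any element of the intersection corresponds to a row vector $a$ with $a(M_2 - M_2') = 0$, and the dimension of this left kernel is $k - \operatorname{rank}(M_2 - M_2') \leq k - \tfrac{d}{2}$ by the minimum rank distance of $\mathcal{M}_2$, hence $d_S \geq d$.

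The mixed case is the main obstacle and is where the rank restriction $[0, k-\tfrac{d}{2}]$ on $\mathcal{M}_2$ plays a decisive role. My strategy is to compute the identifying vector of $\mathcal{W} = \mathrm{rs}(M_2 \mid U_2) \in \mathcal{C}_2$ by using the fact that the number of pivot positions of a subspace $S$ inside the first $n_1$ coordinates equals $\dim \pi_{n_1}(S)$, where $\pi_{n_1}$ is the projection onto those coordinates. Since $\pi_{n_1}(\mathcal{W}) = \mathrm{rs}(M_2)$, the first block of $v(\mathcal{W})$ has weight exactly $\operatorname{rank}(M_2) =: r \leq k - \tfrac{d}{2}$, while the last block has weight $k-r \geq \tfrac{d}{2}$. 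By contrast, $v(\mathcal{V}) = (v_1, 0^{n_2})$ with $|v_1| = k$. A short counting shows the first-block contribution to Hamming distance is $k + r - 2|P\cap Q| \geq k-r \geq \tfrac{d}{2}$, and the last-block contribution is $k - r \geq \tfrac{d}{2}$, giving $d_H(v(\mathcal{V}), v(\mathcal{W})) \geq d$. Lemma~\ref{lemma:subspace_hamming} then yields $d_S(\mathcal{V},\mathcal{W}) \geq d$, completing the verification.

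The hard part is recognizing that the rank cap $k - \tfrac{d}{2}$ on $\mathcal{M}_2$ is exactly what forces the identifying vector of any $\mathcal{W}\in \mathcal{C}_2$ to have at least $\tfrac{d}{2}$ ones in each of the two blocks in the cross-comparison with $\mathcal{C}_1$; without this bound, a codeword $\mathcal{W}$ with $\operatorname{rank}(M_2) = k$ could have $v(\mathcal{W}) = (v_1, 0^{n_2})$ and thereby collide in identifying-vector profile with elements of $\mathcal{C}_1$, destroying the distance guarantee. Once this structural fact is pinned down, the remaining steps are standard linkage-style reductions to either the MRD / GRMC distance or to the CDC distance in $\mathcal{U}_1$ or $\mathcal{U}_2$ via injective projections.
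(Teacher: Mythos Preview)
The paper does not supply its own proof of this lemma; it is quoted verbatim from \cite{17} as background and is used only as a black box (the codes $\mathcal{C}_1,\mathcal{C}_2$ reappear in Theorems~\ref{thm:concatenated_cdc} and~\ref{thm:concatenated_cdc2}, but the lemma itself is never re-derived). So there is no paper proof to compare against.

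That said, your argument is correct. The within-$\mathcal{C}_1$ and within-$\mathcal{C}_2$ cases are standard linkage reductions, and your mixed-case analysis via identifying vectors is clean: the key observation that the weight of $v(\mathcal{W})$ on the first $n_1$ coordinates equals $\dim \pi_{n_1}(\mathcal{W})=\operatorname{rank}(M_2)\le k-\tfrac{d}{2}$ is exactly the right structural fact, and the Hamming count $d_H\ge 2(k-r)\ge d$ follows immediately. One stylistic remark: elsewhere in the paper (e.g.\ the proofs of Theorems~\ref{thm:concatenated_cdc} and~\ref{thm:concatenated_cdc2}) the authors favour direct rank computations on stacked generator matrices, i.e.\ $d_S(\mathcal{V},\mathcal{W})=2\operatorname{rank}\!\left(\begin{smallmatrix}U_1&M_1\\M_2&U_2\end{smallmatrix}\right)-2k$, which gives the mixed case in one line by bounding the rank below by $\operatorname{rank}(U_2)+\operatorname{rank}(U_1)-\operatorname{rank}(M_2)\ge k+\tfrac{d}{2}$. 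Your identifying-vector route is equivalent but arguably more conceptual, since it isolates precisely why the rank cap $k-\tfrac{d}{2}$ on $\mathcal{M}_2$ is the right threshold.
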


\section{Constructions of CDCs by parallel cosets of optimal FDRMCs}\label{sec:cdc_fdrmc}  
In this section, we first propose the construction of parallel cosets of optimal FDRMCs. Then, a new class of CDCs are presented by combining the construction of parallel cosets of optimal FDRMCs with parallel linkage construction. The construction improves the lower bound on $A_q(18,8,9)$. Throughout the remainder of this section, \( \mathcal{C}_1 \) and \( \mathcal{C}_2 \) denote the CDCs defined in Lemma~\ref{lemma:combined_cdc}. 

\begin{theo}[Parallel cosets of optimal FDRMCs]\label{thm:concatenated_cdc}  
Let \( n, d, k, n_1, n_2, d_1, d_2, k_1, k_2, s_1, s_2\) be positive integers with \( n = n_1 + n_2, k = k_1 + k_2 \) and \( d = d_1 + d_2\), where \( n_i \geq k_i\) and \( k_i \geq \frac{d}{2}\) for \(i=1,2 \). Suppose \( (\mathcal{A}_i)_{1 \leq i \leq s_1} \) is a list of \((n_1, d, k_1)_q\)-CDCs with fixed distance \(d_1\),                         \( (\mathcal{B}_i)_{1 \leq i \leq s_2} \) is a list of \((n_2, d, k_2)_q\)-CDCs with fixed distance \(d_2\). For each \( 1 \leq i \leq s_1 \), define   
    \[
    \mathcal{C}_3^i = \left\{ \mathrm{rs}\left( 
    \begin{array}{@{}cc@{}}
    A_i & 0_{k_1 \times n_2} \\
    0_{k_2 \times n_1} & B_i \\
    \end{array}
    \right) \,\bigg|\, 
    \tau^{-1}(A_i) \in \mathcal{A}_i, \, \tau^{-1}(B_i) \in \mathcal{B}_i
    \right\},
    \]  
    where \(A_i\) and \(B_i\) in RREFs. Then \(\mathcal{C}_3 = \bigcup_{i=1}^{s_1} \mathcal{C}_3^i \) is an \((n, d, k)_q\)-CDC.\\
 \indent  Suppose \( (\hat{\mathcal{A}_j})_{1 \leq j \leq s_2} \) is a $r_{\hat{u}_2}$-restricted inverse list of \((n_1, d, k_1)_q\)-CDCs with fixed distance \( d_1 \), and \( (\hat{\mathcal{B}_j})_{1 \leq j \leq s_2} \) is an inverse list of \((n_2, d, k_2)_q\)-CDCs with fixed distance \( d_2 \). For each \( 1 \leq j \leq s_2 \), define
    \[
    \mathcal{C}_4^j = \left\{ \mathrm{rs}\left( 
    \begin{array}{@{}cc@{}}
    0_{k_2 \times n_1} & \hat{B_j} \\
    \hat{A_j} & 0_{k_1 \times n_2} \\
    \end{array}
    \right) \,\bigg|\, 
    \tau^{-1}(\hat{A_j}) \in \hat{\mathcal{A}_j}, \, \tau^{-1}(\hat{B_j}) \in \hat{\mathcal{B}_j} \,
    \right\},
    \]  
    where \(\hat{A_j}\) and \(\hat{B_j}\) in RRIEFs. Then \(\mathcal{C}_4 = \bigcup_{j=1}^{s_2} \mathcal{C}_4^j \) is an \((n, d, k)_q\)-CDC, moreover, \( \mathcal{C}_3 \cup \mathcal{C}_4\) is an \((n, d, k)_q\)-CDC.
\end{theo}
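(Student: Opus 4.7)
My plan is to establish three things: (a) $\mathcal{C}_3$ is an $(n,d,k)_q$-CDC; (b) $\mathcal{C}_4$ is an $(n,d,k)_q$-CDC; and (c) $d_S(U,V)\geq d$ for every $U\in\mathcal{C}_3$ and $V\in\mathcal{C}_4$. The common engine for all three is that every codeword of $\mathcal{C}_3$ or $\mathcal{C}_4$ has a generator whose row space splits as a direct sum along the canonical partition $\mathbb{F}_q^n=\mathbb{F}_q^{n_1}\oplus\mathbb{F}_q^{n_2}$: for $\mathcal{C}_3^i$ the generator is the block-diagonal $\mathrm{diag}(A_i,B_i)$, and for $\mathcal{C}_4^j$ it is the antidiagonal $\bigl(\begin{smallmatrix}0&\hat B_j\\ \hat A_j&0\end{smallmatrix}\bigr)$; in either case the row space equals the direct sum of its projections onto the first $n_1$ and last $n_2$ coordinates. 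Consequently, for any two such codewords $U_1,U_2$,
\[
d_S(U_1,U_2)=d_S(U_1^L,U_2^L)+d_S(U_1^R,U_2^R),
\]
where $L$ and $R$ denote the two projections.

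For (a) I would case-split on whether $U_1,U_2$ lie in the same $\mathcal{C}_3^i$. Same index: at least one of the $L$- or $R$-projections differs and the differing pair lies in a common CDC ($\mathcal{A}_i$ or $\mathcal{B}_i$) of minimum distance $d$, so the sum is $\geq d$. Different indices $i\neq i'$: the fixed-distance properties of the two lists yield $d_S(U_1^L,U_2^L)\geq d_1$ and $d_S(U_1^R,U_2^R)\geq d_2$, hence the sum is $\geq d_1+d_2=d$. Claim (b) is handled identically, using the inverse lists $(\hat{\mathcal{A}}_j)$ and $(\hat{\mathcal{B}}_j)$ together with RRIEF generators; the antidiagonal placement in the generators of $\mathcal{C}_4^j$ does not affect the direct-sum decomposition along the coordinate partition.

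Claim (c) is the new ingredient. Writing $U=rs(A)\oplus rs(B)$ with $rs(A)\in\mathcal{A}_i$, $rs(B)\in\mathcal{B}_i$, and $V=rs(\hat A')\oplus rs(\hat B')$ with $rs(\hat A')\in\hat{\mathcal{A}}_j$, $rs(\hat B')\in\hat{\mathcal{B}}_j$, the same identity gives
\[
d_S(U,V)=d_S(rs(A),rs(\hat A'))+d_S(rs(B),rs(\hat B')).
\]
The plan is to leverage the $r_{\hat u_2}$-restriction on $\hat{\mathcal{A}}_j$: because $\hat A'$ is the inverse lift of a GFRMC matrix of rank at most $r_{\hat u_2}$, stacking $\bigl(\begin{smallmatrix}A\\ \hat A'\end{smallmatrix}\bigr)$ and carrying out a Schur-complement reduction against the RREF pivots of $A$ should bound $\dim(rs(A)\cap rs(\hat A'))$ above by $r_{\hat u_2}$, whence $d_S(rs(A),rs(\hat A'))\geq 2(k_1-r_{\hat u_2})$. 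Coupled with $d_S(rs(B),rs(\hat B'))\geq 0$, this would force $d_S(U,V)\geq 2(k_1-r_{\hat u_2})\geq d$ under the expected numerical constraint $r_{\hat u_2}\leq k_1-d/2$. I expect the main obstacle to be executing this rank estimate in full generality, for arbitrary identifying and inverse identifying vectors rather than the aligned lifted-MRD case; I would handle it by partitioning the columns of $\hat A'$ according to whether they lie in the RREF pivot support of $A$ and bounding the rank contribution of the remaining columns via the GFRMC restriction.
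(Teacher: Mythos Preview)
Your parts (a) and (b) are fine and match the paper's argument: the paper derives the same additive identity
\(d_S(Y_1,Y_2)=d_S(\hat A,\hat A')+d_S(\hat B,\hat B')\)
via the rank formula and bounds it by \(d_1+d_2\); your more careful same-index/different-index split is a harmless refinement. For (c) the paper also uses a Schur-complement reduction, but your plan as written has a genuine gap.

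The issue is twofold. First, you propose to pivot on the RREF identity block of \(A\) and then ``bound the rank contribution of the remaining columns via the GFRMC restriction''. But the GFRMC restriction \(r_{\hat u_2}\) controls the rank of the \emph{non-pivot} part of \(\hat A'\) with respect to its own inverse identifying vector \(\hat u_2\), not with respect to the pivot support of \(A\). Partitioning the columns of \(\hat A'\) by the pivot support of \(A\) gives you no handle on \(r_{\hat u_2}\) unless the two supports happen to be complementary. The paper instead pivots on the identity block of \(\hat A'\): after moving the RRIEF pivots of \(\hat A'\) to the right one has
\(\begin{pmatrix}A\\ \hat A'\end{pmatrix}=\begin{pmatrix}A_1 & A_2\\ \hat A_3 & I_{k_1}\end{pmatrix}\),
so the Schur complement is \(A_1-A_2\hat A_3\) and \(\mathrm{rank}(\hat A_3)\le r_{\hat u_2}\) is exactly the GFRMC restriction.

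Second, even with the correct pivot, your target inequality \(\dim(rs(A)\cap rs(\hat A'))\le r_{\hat u_2}\) is too strong: it would force \(\mathrm{rank}(A_1)=k_1\), i.e.\ the supports of \(u_1\) and \(\hat u_2\) disjoint. What the Schur step actually yields is
\(d_S(rs(A),rs(\hat A'))=2\,\mathrm{rank}(A_1-A_2\hat A_3)\ge 2(\mathrm{rank}(A_1)-r_{\hat u_2})\),
and to reach \(d\) one needs \(\mathrm{rank}(A_1)\ge r_{\hat u_2}+d/2\). Since \(A_1\) retains exactly those pivot columns of \(A\) that lie outside the support of \(\hat u_2\), this is guaranteed by \(d_H(u_1,\hat u_2)\ge 2(r_{\hat u_2}+d/2)\). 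The paper asserts \(\mathrm{rank}(A_1)\ge r_{\hat u_2}+d/2\) directly; the hypothesis making this legitimate appears only in the \emph{next} theorem, so the statement you are asked to prove is in fact used under that extra Hamming-distance assumption. Your ``expected numerical constraint \(r_{\hat u_2}\le k_1-d/2\)'' is the special case of disjoint supports; in general it is not enough.
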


\begin{proof}
As shown in Theorem 3.2 of \cite{co2}, $\mathcal{C}_3$ is proven to be an $(n, d, k)_q$-CDC. First, we prove that $\mathcal{C}_4$ is an $(n, d, k)_q$-CDC. It can be directly verified that all the codewords of $\mathcal{C}_4$ are $k$-dimensional subspaces of $\mathbb{F}_q^{n}$. Take any two distinct codewords $Y_1, Y_2 \in \mathcal{C}_4$ and let
\[
Y_1 = \mathrm{rs}\begin{pmatrix} 0&\hat{B}\\ \hat{A}&0 \end{pmatrix}, \quad 
Y_2 = \mathrm{rs}\begin{pmatrix} 0&\hat{B}'\\ \hat{A}'&0 \end{pmatrix},
\]
where there exist $1 \leq i,j \leq s_2$ such that $\tau^{-1}(\hat{A}) \in \hat{\mathcal{A}_i},\tau^{-1}(\hat{B}) \in \hat{\mathcal{B}_i},\tau^{-1}(\hat{A}') \in \hat{\mathcal{A}_j},\tau^{-1}(\hat{B}') \in \hat{\mathcal{B}_j}$. Then
\[
\begin{split}
	d_S(Y_1, Y_2)
	&= 2\mathrm{rank}\left( 
	\begin{array}{@{}cc@{}}
		0&\hat{B} \\
		\hat{A}&0 \\
		0&\hat{B}'\\
		\hat{A}'&0 \\
	\end{array}
	\right)-2k\\
	&= 2\mathrm{rank}\left( 
	\begin{array}{@{}cc@{}}
		\hat{A}\\
		\hat{A}'\\
	\end{array}
	\right)+ 2\mathrm{rank}\left( 
	\begin{array}{@{}cc@{}}
		\hat{B}\\
		\hat{B}'\\
	\end{array}
	\right)-2k\\
	&= d_S(\hat{A},\hat{A}')+2k_1+d_S(\hat{B},\hat{B}')+2k_2-2k\\
	&\geq d_1+d_2\\
	&=d.
\end{split}
\]
So $\mathcal{C}_4$ is an $(n, d, k)_q$-CDC.

Next, we prove that \( \mathcal{C}_3 \cup \mathcal{C}_4 \) is an \((n, d, k)_q\)-CDC. Let
\[
W = \mathrm{rs}\begin{pmatrix} A & 0 \\ 0 & B \end{pmatrix} \in \mathcal{C}_3,
Y = \mathrm{rs}\begin{pmatrix} 0&\hat{B}\\ \hat{A}&0 \end{pmatrix}\in \mathcal{C}_4.
\]
where there exist $1 \leq i \leq s_1, 1 \leq j \leq s_2$ such that $\tau^{-1}(A) \in \mathcal{A}_i,\tau^{-1}(B) \in \mathcal{B}_i,\tau^{-1}(\hat{A}) \in \hat{\mathcal{A}_j},\tau^{-1}(\hat{B}) \in \hat{\mathcal{B}_j}$. Then
\[
\begin{split}
	d_S(W, Y)
	&= 2\mathrm{rank}\left( 
	\begin{array}{@{}cc@{}}
		A & 0 \\
		0 & B \\
		0 & \hat{B}\\
		\hat{A} & 0 \\
	\end{array}\right)-2k\\
	&\geq 2\mathrm{rank}\left( 
	\begin{array}{@{}cc@{}}
		A & 0 \\
		0 & B\\
		\hat{A} & 0\\
	\end{array}
	\right)-2k\\
	&= 2\mathrm{rank}\left( 
	\begin{array}{@{}cc@{}}
		A  \\
		\hat{A}\\
	\end{array}
	\right)+2\mathrm{rank}(B)-2k\\
	&= 2\mathrm{rank}\left( 
	\begin{array}{@{}cc@{}}
		A_1 & A_2  \\
		\hat{A}_3 & I_{k_1}\\
	\end{array}
	\right)-2k_1\\
	&= 2\mathrm{rank}\left( 
	\begin{array}{@{}cc@{}}
		A_1-\hat{A}_3A_2 & 0 \\
		\hat{A}_3 & I_{k_1}\\
	\end{array}
	\right)-2k_1\\
  &\geq 2\mathrm{rank}\left( 
	\begin{array}{@{}cc@{}}
		A_1-\hat{A}_3A_2  \\
	\end{array}
	\right)\\
	&\geq 2\mathrm{rank}(A_1)-2\mathrm{rank}(\hat{A}_3)\\
	&\geq 2(r_{\hat{u}_2}+\frac{d}{2})-2r_{\hat{u}_2}\\
	&=d,
\end{split}
\]
where $\left( 
\begin{array}{@{}cc@{}}
	A_1 & A_2  \\
	\hat{A}_3 & I_{k_1}\\
\end{array}
\right)$ represents $\left( 
\begin{array}{@{}cc@{}}
	A  \\
	\hat{A}\\
\end{array}
\right)$ after elementary row and column transformations. Therefore, \( \mathcal{C}_3 \cup \mathcal{C}_4\) is an \((n, d, k)_q\)-CDC.
\end{proof}

Under the certain conditions, combining the parallel cosets of optimal FDRMCs with parallel linkage construction yields a new  construction of CDCs as follows.

\begin{theo}\label{thm:concatenated_cdc2}  
Keep the notations as in Theorem~\ref{thm:concatenated_cdc}. Let \( \mathcal{U}_1 \in \bigcup_{i=1}^{s_1} \mathcal{A}_i \), \( \mathcal{U}_2 \in \bigcup_{j=1}^{s_2} \hat{\mathcal{A}}_j \) and \(u_1, \hat{u}_2\) be identifying vectors of \( \mathcal{U}_1\) and \( \mathcal{U}_2\), respectively. Furthermore, suppose $\mathcal{U}_2=rs(U_2)$ and rank$(U_2)\leq r_{\hat{u}_2}$. If
    \[
    d_H(u_1, \hat{u}_2) \geq 2\left(r_{\hat{u}_2} + \frac{d}{2}\right),
    \]  
then \( \mathcal{C}_1 \cup \mathcal{C}_2 \cup \mathcal{C}_3 \cup \mathcal{C}_4\) is an \((n, d, k)_q\)-CDC and
\[
\begin{split}
	A_q(n, d, k) \geq &A_q(n_1, d, k) \cdot q^{n_2 (k- \frac{d}{2}+1)} + \left( \sum_{i=1}^{k- \frac{d}{2}}a(q, m, n, \frac{d}{2}, i) + 1 \right) \cdot A_q(n_2, d, k)\\ 
	&+ \sum_{i=1}^{s_1} |\mathcal{A}_i| \cdot |\mathcal{B}_i| + \sum_{j=1}^{s_2} |\hat{\mathcal{A}}_j| \cdot |\hat{\mathcal{B}}_j|.
\end{split}
\]   
\end{theo}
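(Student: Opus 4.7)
The plan is to show that $\mathcal{C}:=\mathcal{C}_1\cup\mathcal{C}_2\cup\mathcal{C}_3\cup\mathcal{C}_4$ is an $(n,d,k)_q$-CDC; once this is done, the cardinality inequality will follow by summing $|\mathcal{C}_1|+|\mathcal{C}_2|+|\mathcal{C}_3|+|\mathcal{C}_4|$, using that pairwise subspace distance $\geq d>0$ forces the four components to be disjoint. The within-component distances are already available: $\mathcal{C}_1\cup\mathcal{C}_2$ is a CDC by the parallel linkage construction (Lemma~\ref{lemma:combined_cdc}), and $\mathcal{C}_3\cup\mathcal{C}_4$ is a CDC by Theorem~\ref{thm:concatenated_cdc}, whose proof consumes exactly the hypothesis $d_H(u_1,\hat{u}_2)\geq 2(r_{\hat{u}_2}+d/2)$ imposed here. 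So the task reduces to verifying the four cross-distances $d_S(\mathcal{C}_i,\mathcal{C}_j)$ with $i\in\{1,2\}$ and $j\in\{3,4\}$.

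For the cross-distances involving $\mathcal{C}_1$, the Hamming bound of Lemma~\ref{lemma:subspace_hamming} will suffice. Since $U_1$ is a full-row-rank RREF on the first $n_1$ columns, the generator $(U_1\mid M_1)$ of a $\mathcal{C}_1$-codeword is already in RREF, so its identifying vector has weight $k$ in the first $n_1$ positions and $0$ in the last $n_2$. A codeword of $\mathcal{C}_3$ has row space $\mathrm{rs}(A)\oplus\mathrm{rs}(B)$, and one of $\mathcal{C}_4$ has row space $\mathrm{rs}(\hat A)\oplus\mathrm{rs}(\hat B)$ after swapping its two row blocks, so in either case the identifying vector has weight $k_1$ in the first $n_1$ positions and $k_2$ in the last $n_2$. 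The Hamming distance between the two identifying vectors is then at least $(k-k_1)+k_2=2k_2\geq d$, since $k_2\geq d/2$.

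The hard part is the pair $\mathcal{C}_2$ against $\mathcal{C}_3$ (and $\mathcal{C}_4$). A $\mathcal{C}_2$-codeword has identifying-vector weight $r_2\in[0,k-d/2]$ in the first $n_1$ positions, a range that can include $k_1$, so the Hamming bound and Lemma~\ref{lemma:multi_coset} are both insufficient in general. I will instead bound the intersection dimension directly. For $\mathcal{U}=\mathrm{rs}(A)\oplus\mathrm{rs}(B)\in\mathcal{C}_3$ and $\mathcal{V}=\mathrm{rs}(M_2\mid U_2)\in\mathcal{C}_2$, every vector in $\mathcal{U}\cap\mathcal{V}$ takes the form $(vM_2,vU_2)$ with $vM_2\in\mathrm{rs}(A)$ and $vU_2\in\mathrm{rs}(B)$. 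Since $U_2$ has full row rank $k$, the map $v\mapsto vU_2$ is injective, so
\[
\dim(\mathcal{U}\cap\mathcal{V})\leq\dim\bigl(\mathrm{rs}(U_2)\cap\mathrm{rs}(B)\bigr)\leq\dim\mathrm{rs}(B)=k_2\leq k-\tfrac{d}{2},
\]
the last inequality using $k_1\geq d/2$. Consequently $d_S(\mathcal{U},\mathcal{V})=2k-2\dim(\mathcal{U}\cap\mathcal{V})\geq d$. The pair $\mathcal{C}_2$ versus $\mathcal{C}_4$ is handled identically after rewriting the $\mathcal{C}_4$-codeword in the form $\mathrm{rs}(\hat A)\oplus\mathrm{rs}(\hat B)$.

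With all pairwise distances $\geq d$, the four components are disjoint and $|\mathcal{C}|=|\mathcal{C}_1|+|\mathcal{C}_2|+|\mathcal{C}_3|+|\mathcal{C}_4|$. Choosing the inner CDC of $\mathcal{C}_1$ to be an optimal $(n_1,d,k)_q$-CDC and $\mathcal{M}_1$ to be an MRD code meeting the Singleton bound yields $|\mathcal{C}_1|=A_q(n_1,d,k)\cdot q^{n_2(k-d/2+1)}$; taking the inner CDC of $\mathcal{C}_2$ to be optimal and $\mathcal{M}_2$ to realize the GRMC lower bound noted after Lemma~\ref{lemma:GRMC_bound} yields $|\mathcal{C}_2|\geq A_q(n_2,d,k)\bigl(1+\sum_{i=1}^{k-d/2}a(q,k,n_1,d/2,i)\bigr)$; and $|\mathcal{C}_3|=\sum_{i=1}^{s_1}|\mathcal{A}_i|\cdot|\mathcal{B}_i|$, $|\mathcal{C}_4|=\sum_{j=1}^{s_2}|\hat{\mathcal{A}}_j|\cdot|\hat{\mathcal{B}}_j|$ by Theorem~\ref{thm:concatenated_cdc}. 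Summing the four contributions produces the stated inequality.
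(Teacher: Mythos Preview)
Your proof is correct and follows the same overall strategy as the paper: reduce to the pairwise cross-distances between the four blocks, invoke Lemma~\ref{lemma:combined_cdc} for $\mathcal{C}_1\cup\mathcal{C}_2$ and Theorem~\ref{thm:concatenated_cdc} for $\mathcal{C}_3\cup\mathcal{C}_4$, and then handle the remaining crosses.

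Where you differ is in the treatment of those crosses. The paper outsources $\mathcal{C}_1\cup\mathcal{C}_2\cup\mathcal{C}_3$ to Theorem~3.2 of \cite{co2} and only works out the $\mathcal{C}_1,\mathcal{C}_2$ versus $\mathcal{C}_4$ pairs explicitly, via the stacked-rank computation
\[
d_S(V,Y)=2\,\mathrm{rank}\!\begin{pmatrix}U_1&M_1\\0&\hat B\\\hat A&0\end{pmatrix}-2k\ge 2(k+k_2)-2k=2k_2\ge d,
\]
and ``similarly'' with $(M_2\mid U_2)$ in place of $(U_1\mid M_1)$, dropping the $\hat B$ row to obtain $2k_1\ge d$. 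You instead give direct arguments for all four crosses: the Hamming-distance bound for $\mathcal{C}_1$ versus $\mathcal{C}_3,\mathcal{C}_4$, and the intersection-dimension bound $\dim(\mathcal{U}\cap\mathcal{V})\le\dim(\mathrm{rs}(U_2)\cap\mathrm{rs}(B))\le k_2$ for $\mathcal{C}_2$ versus $\mathcal{C}_3,\mathcal{C}_4$. These are equivalent to the paper's rank computations (since $d_S=2\,\mathrm{rank}(\text{stack})-2k=2k-2\dim(\cap)$), but your version is self-contained and makes transparent exactly which hypothesis ($k_2\ge d/2$ for the $\mathcal{C}_1$-crosses, $k_1\ge d/2$ for the $\mathcal{C}_2$-crosses) is being used at each step. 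You are also right that the Hamming hypothesis $d_H(u_1,\hat u_2)\ge 2(r_{\hat u_2}+d/2)$ is consumed only in the $\mathcal{C}_3$--$\mathcal{C}_4$ cross inside Theorem~\ref{thm:concatenated_cdc}.
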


\begin{proof}
By Theorem 3.2 of \cite{co2}, $\mathcal{C}_1 \cup \mathcal{C}_2 \cup \mathcal{C}_3 $ is an $(n, d, k)_q$-CDC. First, it is proved that $\mathcal{C}_1 \cup \mathcal{C}_2 \cup \mathcal{C}_4$ is an $(n, d, k)_q$-CDC. Let
\[
V = \mathrm{rs}\begin{pmatrix} U_1 \mid M_1  \end{pmatrix}\in \mathcal{C}_1, \quad 
X=\mathrm{rs}\begin{pmatrix} M_2 \mid U_2   \end{pmatrix}\in \mathcal{C}_2, \quad 
Y = \mathrm{rs}\begin{pmatrix} 0&\hat{B}\\ \hat{A}&0 \end{pmatrix}\in \mathcal{C}_4.
\] 
Then
\[
\begin{split}
	d_S(V, Y)
	&= 2\mathrm{rank}\left( 
	\begin{array}{@{}cc@{}}
		U_1 & M_1 \\
		0 & \hat{B} \\
		\hat{A} & 0 \\
	\end{array}
	\right)-2k\\
	&\geq 2\mathrm{rank}\left( 
	\begin{array}{@{}cc@{}}
		U_1 & M_1 \\
		0 & \hat{B}  \\
	\end{array}
	\right)-2k.\\
	&= 2(k+k_2)-2k\\
	&=2k_2\\
	&\geq d.
\end{split}
\]
Similarly, it can be proved that \(d_S(X, Y)\geq d\), so \( \mathcal{C}_1 \cup \mathcal{C}_2 \cup \mathcal{C}_4\) is an \((n, d, k)_q\)-CDC. 

Next, Theorem~\ref{thm:concatenated_cdc} directly yields \( \mathcal{C}_3 \cup \mathcal{C}_4\) is an \((n, d, k)_q\)-CDC. So, \( \mathcal{C}_1 \cup \mathcal{C}_2 \cup \mathcal{C}_3 \cup \mathcal{C}_4\) is an \((n, d, k)_q\)-CDC. 
\end{proof}

The new class of $ (18,8,9)_q $-CDCs by Theorem~\ref{thm:concatenated_cdc2} is constructed as follows.

\begin{example}\label{ex:ccq_construction}  
In Theorem~\ref{thm:concatenated_cdc2}, set \( n = 18 \), \( d = 8 \), \( k = 9 \) with \( n_1 = n_2 = 9 \), \( d_1 = d_2 = 4 \), \( k_1 = 4 \), \( k_2 = 5 \). Then \( \mathcal{C}_1\) is a \((18, M_1, 8, 9)_q\)-CDC where \( M_1 = q^{54} \) and \( \mathcal{C}_2\) is a \((18, M_2, 8, 9)_q\)-CDC where \( M_2 \geq \sum_{i=4}^{5} a(q, 9, 9, 4, i) + 1 \). The explicit constructions of \( \mathcal{C}_3 \) and \( \mathcal{C}_4 \) are as follows.

Firstly, $\mathcal{C}_3$ is constructed as follows. Let $v_1 = (111100000)$ and a list of $(9, q^5, 8, 4)_q$-CDCs $(\mathscr{C}_1, \mathscr{C}_2, \cdots, \mathscr{C}_{q^{10}})$ with distance 4 is obtained by Lemma~\ref{lemma:cpc_sequence}. Set $(\mathcal{A}_i)_{1 \leq i \leq q^{10}} = (\mathscr{C}_i)_{1 \leq i \leq q^{10}}$. Let $X_1 = \{v_2 = (111110000), v_3 = (000011111)\}$ be a $(9,8,5)_2$ CWC. By Lemma~\ref{lemma:cdc_fdrmc}, there exists a list of  $(9,8,5)_q$-CDCs $(\mathcal{B}_i)_{1 \leq i \leq q^{10}}$ with distance 4, shown in Table~\ref{table:1}. Furthermore, these CDCs are reordered and illustrated in Table~\ref{table:2}.  

\begin{table}[H]
  \centering
  \caption{A list of $(9,8,5)_q$-CDCs with distance 2}\label{table:1}
  \vspace{-10pt} 
  \begin{tabular}{|c|c c|c c|}
    \hline
    $(9,8,5)_2$ CWC & & & \multicolumn{2}{c|}{$(9,8,5)_q$ CDCs $(\mathcal{B}_i)$} \\
    \hline
     & $D_{v_i}$ & $s_{v_i}$ & size & number \\
    \hline
    $X_1 \quad v_2 = (111110000)$ & $q^5$ & $q^{10}$ & $q^{5}+1$ & $1$\\
    $\quad \quad  v_3 = (000011111)$ & $1$ & $1$ & $q^5$ & $q^{10} - 1 $ \\
    \hline
  \end{tabular}
\end{table}
\begin{table}[H]
  \centering
  \caption{}\label{table:2}
  \vspace{-10pt}
  \begin{tabular}{|c|c|c|}
    \hline
    $|\mathcal{A}_i|$ & $q^5$ & $q^5$ \\
    \hline
    $|\mathcal{B}_i|$ & $q^5+1$ & $q^5$ \\
    \hline
    number & $1$ & $q^{10}-1$ \\
    \hline
  \end{tabular}
\end{table}
\noindent
The number of $\mathcal{A}_i$ and $\mathcal{B}_i$ is $q^{10}$, then \( s_1 = q^{10} \) and \( \mathcal{C}_3 = \bigcup_{i=1}^{s_1} \mathcal{C}_3^i \) has cardinality
\[
|\mathcal{C}_3| = \sum_{i=1}^{s_1} |\mathcal{C}_3^i| = \sum_{i=1}^{s_1} |\mathcal{A}_i| \cdot |\mathcal{B}_i| = q^{20} + q^5.
\]

Secondly, $\mathcal{C}_4$ is constructed as follows. Let $\hat{v}_1 = (000001111)$ and there exists a CDC $\hat{\mathcal{A}}_1$ with cardinality $1$. Let $X_2 = \{\hat{v}_2 = (000011111), \hat{v}_3=(111110000)\}$ is a $(9,8,5)_2$ CWC. By Remark~\ref{remark}, we obtain an inverse list of CDCs $(\hat{\mathcal{B}}_i)_{1 \leq i \leq q^{10}}$, as detailed in Table~\ref{table:3}. Furthermore, these CDCs are reordered and illustrated in Table~\ref{table:4}.  
\begin{table}[H]
  \centering
  \caption{An inverse list of $(9,8,5)_q$-CDCs with distance 2}\label{table:3}
\vspace{-10pt}
  \begin{tabular}{|c|c c|c c|}
    \hline
    $(9,8,5)_2$ CWC & & & \multicolumn{2}{c|}{$(9,8,5)_q$-CDCs $(\hat{\mathcal{B}}_i)$} \\
    \hline
     & $D_{v_i}$ & $s_{v_i}$ & size & number \\
    \hline
    $X_1 \quad \hat{v}_2 = (111110000)$ & $q^5$ & $q^{10}$ & $q^{5}+1$ & $1$\\
    $\quad \quad \hat{v}_3 = (000011111)$ & $1$ & $1$ & $q^5$ & $q^{10} - 1 $ \\
    \hline
  \end{tabular}
\end{table}
\begin{table}[H]
  \centering
  \caption{}\label{table:4}
\vspace{-10pt}
  \begin{tabular}{|c|c|}
    \hline
    $|\hat{\mathcal{A}}_i|$ & $1$  \\
    \hline
    $|\hat{\mathcal{B}}_i|$ & $q^5+1$  \\
    \hline
    number & $1$  \\
    \hline
  \end{tabular}
\end{table}
\noindent
Then \(
|\mathcal{C}_4| = |\hat{\mathcal{A}}_1| \cdot |\hat{\mathcal{B}}_1| = q^{5} + 1.\)

Finally, set \(\mathcal{C} = \mathcal{C}_1 \cup \mathcal{C}_2 \cup \mathcal{C}_3 \cup \mathcal{C}_4\) and the following result is obtained.
\[
A_q(18,8,9)\geq q^{54}+\left(\sum_{i=4}^5a(q,9,9,4,i)+1 \right)+q^{20}+2q^5+1.
\]
When $q = 2$, \(A_2(18,8,9)\geq 18015215399116937,\) which improves the lower bound $18015215398101558$ given in \cite{25}. For $q \geq 2$, this bound is better than the lower bound of $A_q(18,8,9)$ in \cite{25}.
\end{example}

\section{New CDCs by multilevel inserting constructions}\label{sec:cdc_multilevel}
In this section, we first propose a novel set of identifying vectors and utilize the multilevel construction to establish a new class of CDCs. Next, we insert the coset construction into the multilevel construction and provide three new classes of large CDCs, one of which is constructed by using new optimal FDRMCs.

The multilevel construction is effective for improving the lower bound on the cardinality of CDCs. However, the selection of identifying vectors to ensure the optimality of CDCs remains a critical challenge. 
The new construction of CDCs in Theorem~\ref{th:1} presented below reduces the complexity of selection for identifying vectors.

\begin{theo}\label{th:1} 
Let \( n \geq 2k \), \( k \geq 2\delta + \left\lfloor \frac{\delta}{2} \right\rfloor - 1 \).
\begin{enumerate}[(1)]
    \item If \( \delta = 3 \), then
    \[
\begin{split}
    A_q(n, 2\delta, k) \geq & q^{(n-k)(k-\delta+1)} + q^{\left(n-k-\left\lceil \frac{\delta}{2} \right\rceil\right)(k-\delta+1) - \left\lfloor \frac{\delta}{2} \right\rfloor\left(\delta + \left\lfloor \frac{\delta}{2} \right\rfloor\right)} \\
    &+ \sum_{j=0}^{3} q^{(n-k)(k-\delta+1)- \delta^2 - j{\left\lceil \frac{\delta}{2} \right\rceil}^2 - j{\left\lfloor \frac{\delta}{2} \right\rfloor}^2};
\end{split}
    \]
    
    \item If \( \delta \neq 3 \), then
    \[
    A_q(n, 2\delta, k) \geq q^{(n-k)(k-\delta+1)} + \sum_{j=0}^{2} q^{(n-k)(k-\delta+1)-\delta^2-j( \frac{\delta}{2})^2}.
    \]
\end{enumerate}
\end{theo}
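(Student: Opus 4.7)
The plan is to apply the multilevel construction (Lemma~\ref{lemma:multilevel}) with a newly chosen $(n, 2\delta, k)_2$-CWC, where each identifying vector in the CWC contributes one summand in the stated bound through the optimal FDRMC attached to its Ferrers diagram.

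The first vector I would take is $v_1 = (1^k, 0^{n-k})$, whose Ferrers diagram $\mathcal{F}_{v_1}$ is the full $k \times (n-k)$ rectangle. Because $n - k \geq k$, Lemma~\ref{lem:op_FDRMC} applied to the transpose gives an optimal $[\mathcal{F}_{v_1}, \delta]_q$ code of dimension $(n-k)(k-\delta+1)$, which is the lifted-MRD contribution and yields the leading summand $q^{(n-k)(k-\delta+1)}$. In the case $\delta \neq 3$, the remaining three summands come from identifying vectors $v^{(0)}, v^{(1)}, v^{(2)}$ built from $v_1$ by successive controlled shifts of blocks of ones. The vector $v^{(0)} = (1^{k-\delta}, 0^\delta, 1^\delta, 0^{n-k-\delta})$ shifts the last $\delta$ ones by $\delta$ positions; its Ferrers diagram consists of $k - \delta$ full top rows and $\delta$ bottom rows of length $n - k - \delta$, and a direct calculation of the $v_i$'s via Lemma~\ref{lem:op} shows that $v_{\min} = (n-k)(k-\delta+1) - \delta^2$. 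The hypothesis $k \geq 2\delta + \lfloor\delta/2\rfloor - 1$ ensures that the last $\delta - 1$ columns of the transposed diagram each contain at least $n-k$ dots, so Lemma~\ref{lem:op_FDRMC} yields an optimal FDRMC of this dimension. For $j = 1, 2$, each $v^{(j)}$ is obtained from $v^{(j-1)}$ by an additional shift of a $(\delta/2)$-block of ones into a position disjoint from the support of every previously chosen vector; a row-by-row count shows that each such half-shift lowers $v_{\min}$ by exactly $(\delta/2)^2$, matching the exponents in the three summands.

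For the case $\delta = 3$, the half-shifts of size $\delta/2$ are no longer integral, so I would replace them by asymmetric pairs of shifts of sizes $\lceil\delta/2\rceil = 2$ and $\lfloor\delta/2\rfloor = 1$. An auxiliary identifying vector, constructed so that its Ferrers diagram effectively truncates the leftmost $\lceil\delta/2\rceil$ columns (thereby reducing the ambient length by $\lceil\delta/2\rceil$), contributes the stand-alone summand $q^{(n-k-\lceil\delta/2\rceil)(k-\delta+1) - \lfloor\delta/2\rfloor(\delta + \lfloor\delta/2\rfloor)}$, verified through Lemma~\ref{lem:op_FDRMC} applied to the transpose of the reduced diagram. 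Four further shifted vectors, each obtained by alternating shifts of sizes $\lceil\delta/2\rceil$ and $\lfloor\delta/2\rfloor$, supply the four terms in the sum: each successive asymmetric half-shift deducts $\lceil\delta/2\rceil^2 + \lfloor\delta/2\rfloor^2$ dots from the bottleneck $v_{\min}$.

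The main obstacle is the combinatorial bookkeeping required to verify that the full collection of identifying vectors forms a valid $(n, 2\delta, k)_2$-CWC. Every new shifted vector must have Hamming distance at least $2\delta$ from each previously chosen vector, which constrains the placement of the new one-block very tightly. The lower bound $k \geq 2\delta + \lfloor\delta/2\rfloor - 1$ is precisely what guarantees enough free positions inside the first $k$ coordinates to carry out all the required shifts so that no two supports intersect in more than $k - \delta$ positions. Once the CWC is validated, Lemma~\ref{lemma:multilevel} directly produces an $(n, 2\delta, k)_q$-CDC whose size is the sum of the sizes of the individual FDRMCs, yielding the stated lower bound on $A_q(n, 2\delta, k)$.
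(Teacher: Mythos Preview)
Your overall strategy matches the paper's exactly: apply the multilevel construction (Lemma~\ref{lemma:multilevel}) with a hand-built $(n,2\delta,k)_2$-CWC $S$, verify each Ferrers diagram supports an optimal FDRMC via Lemma~\ref{lem:op_FDRMC} (after transposing to put the longer side on the bottom), and sum the code sizes. Your choice of $v_1$ and $v^{(0)}$ coincides with the paper's $v_1\in S_1$ and $v_2\in S_2$.

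Where you drift from the paper is in the description of the remaining vectors. You write that $v^{(j)}$ is obtained from $v^{(j-1)}$ by ``an additional shift of a $(\delta/2)$-block of ones into a position disjoint from the support of every previously chosen vector.'' The paper's vectors do \emph{not} arise this way: each new vector in $S_3$ simultaneously splits \emph{both} the left block of ones and the right block of ones, interleaving $\lfloor\delta/2\rfloor$- and $\lceil\delta/2\rceil$-blocks on each side. For instance, with $\delta$ even the paper takes
\[
v_4=(1^{\,k-3\delta/2}\,0^{\delta/2}\,1^{\delta/2}\,0^{\delta/2}\,1^{\delta/2}\,0^{\delta/2}\,1^{\delta/2}\,0^{\,n-k-3\delta/2}),
\]
which is not a single-block translate of $v_2=(1^{k-\delta}0^{\delta}1^{\delta}0^{n-k-\delta})$. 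A one-sided shift of a $(\delta/2)$-block would either fail the Hamming-distance requirement against $v^{(0)}$ or fail to drop $v_{\min}$ by exactly $(\delta/2)^2$; the two-sided split is what makes both constraints hold at once. The same remark applies to your auxiliary vector in the $\delta=3$ case: the paper's extra vector $v_3\in S_2$ is $(0^2 1^{k-3}01101\,0^{n-k-4})$, again a two-sided rearrangement rather than a truncation of leftmost columns.

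So the plan is right, but the concrete recipe you give for $v^{(1)},v^{(2)}$ (and their $\delta=3$ analogues) would not survive the CWC and $v_{\min}$ checks; you need the paper's interleaved block pattern, not iterated single shifts.
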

 
\begin{proof}
Let 
\[
\begin{split}
  \begin{aligned}
    &S_1 =
    \{(\underbrace{1\cdots1}_{k}\underbrace{0\cdots0}_{n-k})\},\\
&S_2 =
    \{(\underbrace{0\cdots0}_{i\left\lceil \frac{\delta}{2} \right\rceil}\underbrace{1\cdots1}_{k-\delta}\underbrace{0\cdots0}_{\delta-i\left\lceil \frac{\delta}{2} \right\rceil}\underbrace{1\cdots1}_{\left\lceil \frac{\delta}{2} \right\rceil}\underbrace{0\ldots0}_{i\left\lfloor \frac{\delta}{2} \right\rfloor}\underbrace{1\cdots1}_{\left\lfloor \frac{\delta}{2} \right\rfloor}\underbrace{0\cdots0}_{n-k-\delta-i\left\lfloor \frac{\delta}{2} \right\rfloor})\},\\
&S_3 = \{(\underbrace{1\cdots1}_{k-\delta-\left\lfloor \frac{\delta}{2} \right\rfloor}\underbrace{0\cdots0}_{j\left\lfloor \frac{\delta}{2} \right\rfloor}\underbrace{1\cdots1}_{\left\lfloor \frac{\delta}{2} \right\rfloor}\underbrace{0\ldots0}_{\delta-j\left\lfloor \frac{\delta}{2} \right\rfloor}\underbrace{1\cdots1}_{\left\lfloor \frac{\delta}{2} \right\rfloor}\underbrace{0\cdots0}_{j\left\lceil \frac{\delta}{2} \right\rceil}\underbrace{1\cdots1}_{\left\lceil \frac{\delta}{2} \right\rceil}\underbrace{0\cdots0}_{n-k-\delta-j\left\lceil \frac{\delta}{2} \right\rceil})\}.
  \end{aligned}
\end{split}
\]
Let \( i \in \{0, 1\} \), \( j \in \{1, 2, 3\} \) when \(\delta = 3\) and \( i =0 \), \( j \in \{1, 2\} \) when \(\delta \neq 3\).
Set \(S=S_1 \cup S_2 \cup S_3\). Clearly, \( S \) is a binary vector set of length \( n \) and weight \( k \). 
For any \( v \in S \), the $v$ can be denoted as \(
v = ( \underbrace{v^{(1)}}_{k}|\underbrace{v^{(2)}}_{n-k}).\) It is first proved that $S_2$ and $S_3$ are $(n,2\delta,k)_2$ CWCs. For any $v,u \in S_2$, \( d_H(v^{(1)}, u^{(1)}) \geq 2\left\lceil \frac{\delta}{2} \right\rceil\) and \( d_H(v^{(2)}, u^{(2)}) \geq 2\left\lfloor \frac{\delta}{2} \right\rfloor \), then 
\[d_H(v, u) = d_H(v^{(1)}, u^{(1)}) + d_H(v^{(2)}, u^{(2)})\geq 2( \left\lceil \tfrac{\delta}{2} \right\rceil + \left\lfloor \tfrac{\delta}{2} \right\rfloor ) = 2\delta.\] 
Therefore, \(S_2\) is an $(n,2\delta,k)_2$ CWC. Similarly, \(S_3\) is also an $(n,2\delta,k)_2$ CWC. 

Next, it will be shown that \(S\) is an $(n,2\delta,k)_2$ CWC. For any $v_1 \in S_1, v_2 \in S_2, v_3 \in S_3$, \( d_H(v_1^{(1)}, v_i^{(1)}) = k-(k-\delta)=\delta\) and \( d_H(v_1^{(2)}, v_i^{(2)}) = \delta \) for \(i=2,3\), then 
\[d_H(v_1, v_i) = d_H(v_1^{(1)}, v_i^{(1)}) + d_H(v_1^{(2)}, v_i^{(2)})= \delta+\delta= 2\delta\]
for $i=2,3$. Next, we prove that $d_H(v_2,v_3) \geq 2\delta. $ When $\delta \neq 3$, $v_2=(\underbrace{1\cdots1}_{k-\delta}\underbrace{0\cdots0}_{\delta}\underbrace{1\cdots1}_{\delta}\underbrace{0\cdots0}_{n-k-\delta}),$ then \( d_H(v_2^{(1)}, v_3^{(1)})\geq 2\left\lfloor \frac{\delta}{2} \right\rfloor$ and $d_H(v_2^{(2)}, v_3^{(2)})\geq 2\left\lceil \frac{\delta}{2} \right\rceil$. Therefore, 
\[d_H(v_2, v_3) = d_H(v_2^{(1)}, v_3^{(1)}) + d_H(v_2^{(2)}, v_3^{(2)})\geq 2( \left\lceil \tfrac{\delta}{2} \right\rceil + \left\lfloor \tfrac{\delta}{2} \right\rfloor ) = 2\delta.\] 
Thus, it has been proved that the set $S$ is an $(n,2\delta,k)_2$ CWC when $\delta \neq 3$. The situation of \(v_2 = (00\underbrace{1\cdots1}_{k-3}01101\underbrace{0\cdots0}_{n-k-4})\) is further considered when $\delta = 3$. By \(k - \delta - \left\lfloor \frac{\delta}{2} \right\rfloor \geq \delta - \left\lfloor \frac{\delta}{2} \right\rfloor = \left\lceil \frac{\delta}{2} \right\rceil,\) then \(d_H(v_2^{(1)}, v_3^{(1)})\geq 2\left\lceil \frac{\delta}{2} \right\rceil$ and $d_H(v_2^{(2)}, v_3^{(2)})\geq 2\left\lceil \frac{\delta}{2} \right\rceil-\left\lfloor \frac{\delta}{2} \right\rfloor$. Thus 
\[d_H(v_2, v_3) = d_H(v_2^{(1)}, v_3^{(1)}) + d_H(v_2^{(2)}, v_3^{(2)})\geq 2( 2\left\lceil \tfrac{\delta}{2} \right\rceil -\left\lfloor \tfrac{\delta}{2} \right\rfloor ) = 2\delta,\] 
where \(\delta=3\). In conclusion, $S$ is an $(n, 2\delta, k)_2$ CWC.

If \(\delta=3\), thus \(|S|=6\). Without loss of generality, let \(S=\{v_1, v_2, \cdots, v_6\}\), where 
\[
\begin{aligned}
v_1 &=(\underbrace{1\cdots1}_{k}\underbrace{0\cdots0}_{n-k}) \in S_1 ,\\
v_2 &=(\underbrace{1\cdots1}_{k-3}000111\underbrace{0\cdots0}_{n-k-3}) \in S_2,\\ 
v_3 &=(00\underbrace{1\cdots1}_{k-3}01101\underbrace{0\cdots0}_{n-k-4})\in S_2,\\
v_4 &=(\underbrace{1\cdots1}_{k-4}
      0
      1
      00
      1
      00
      11
      \underbrace{0\cdots0}_{n-k-5}) \in S_3,\\
v_5 &=(\underbrace{1\cdots1}_{k-4}
      00
      1
      0
      1
      0000
      11
      \underbrace{0\cdots0}_{n-k-7})\in S_3,\\
v_6 &=(\underbrace{1\cdots1}_{k-4}
      000
      11
      000000
      11
      \underbrace{0\cdots0}_{n-k-9}) \in S_3.\\
\end{aligned}
\]
Then
\begin{enumerate}[(1)]
    \item For the vector \(v_1\), its corresponding Ferrers diagram is $\mathcal{F}_{v_1}=[\underbrace{k, \cdots, k}_{n-k}].$ Due to \(n-k \geq k\), we consider $\mathcal{F}_{v_1}^t$. Since the rightmost $k-\delta$ columns of \(\mathcal{F}_{v_1}^t\) are full, by Lemma~\ref{lem:op_FDRMC}, there exists an optimal $[\mathcal{F}_{v_1}^t, (n-k)(k-2), 3]_q$ code. By applying Lemma~\ref{lemma:multilevel}, an $(n,q^{(n-k)(k-2)},6,k)_q$-CDC $\mathcal{C}_{v_1}$ is constructed.
    \item For the vector \(v_2\), its corresponding Ferrers diagram is \[\mathcal{F}_{v_2}=[\underbrace{k-3, \cdots, k-3}_{3}, \underbrace{k, \cdots, k}_{n-k-3}].\] Due to \(n-k \geq k\), we consider $\mathcal{F}_{v_2}^t$. Since the rightmost $k-3$ columns of \(\mathcal{F}_{v_2}^t\) are full, by Lemma~\ref{lem:op_FDRMC}, there exists an optimal $[\mathcal{F}_{v_2}^t, (n-k)(k-2)-9, 3]_q$ code. By applying Lemma~\ref{lemma:multilevel}, an $(n, q^{(n-k)(k-2)-9}, 2\delta, k)_q$-CDC $\mathcal{C}_{v_2}$ is constructed.
    \item For the vector $v_3$, its corresponding Ferrers diagram is 
\[
\mathcal{F}_{v_3}=[
k-3,\ 
k -1,\ 
\underbrace{k, \cdots, k}_{n-k-4}
].
\]
Due to \(n-k-2 \geq k\), we consider $\mathcal{F}_{v_3}^t$. Since the rightmost $k-3$ columns of \(\mathcal{F}_{v_3}^t\) are full and $k-3 \geq 2,$ by Lemma~\ref{lem:op_FDRMC}, there exists an optimal $[\mathcal{F}_{v_3}^t, (n-k-2)(k-2)-4, 3]_q$ code. By applying Lemma~\ref{lemma:multilevel}, an $(n, q^{(n-k-2)(k-2)-4}, 6, k)_q$-CDC $\mathcal{C}_{v_3}$ is constructed.
    \item For the vector $v_4$, its corresponding Ferrers diagram is 
\[
\mathcal{F}_{v_4}=[k-4, k-3, k-3, k-2, k-2,
\underbrace{k, \cdots, k}_{n-k-5}]. 
\]
Due to \(n-k \geq k\), we consider $\mathcal{F}_{v_4}^t$. Since the rightmost $k-4$ columns of \(\mathcal{F}_{v_4}^t\) are full and $k-4 \geq 2$, by Lemma~\ref{lem:op_FDRMC}, there exists an optimal $[\mathcal{F}_{v_4}^t, (n-k)(k-2)-14, 3]_q$ code. By applying Lemma~\ref{lemma:multilevel}, an $(n, q^{(n-k)(k-2)-14}, 6, k)_q$-CDC $\mathcal{C}_{v_4}$ is constructed.
    \item For the vector $v_5$, its corresponding Ferrers diagram is 
\[
\mathcal{F}_{v_5}=[k-4, k-4, k-3,
\underbrace{k-2, \cdots, k-2}_{4},
\underbrace{k, \cdots, k}_{n-k-7}].
\]
 Due to \(n-k \geq k\), we consider $\mathcal{F}_{v_5}^t$. Since the rightmost $k-4$ columns of \(\mathcal{F}_{v_5}^t\) are full and $k-4 \geq 2$, by Lemma~\ref{lem:op_FDRMC}, there exists an optimal $[\mathcal{F}_{v_5}^t, (n-k)(k-2)-19, 3]_q$ code. By applying Lemma~\ref{lemma:multilevel}, an $(n, q^{(n-k)(k-2)-19}, 6, k)_q$-CDC $\mathcal{C}_{v_5}$ is constructed.
    \item For the vector $v_6$, its corresponding Ferrers diagram is 
\[
\mathcal{F}_{v_6}=[k-4, k-4, k-4, 
\underbrace{k-2, \cdots, k-2}_{6},
\underbrace{k, \cdots, k}_{n-k-9}]. 
\]
Due to \(n-k \geq k\), we consider $\mathcal{F}_{v_6}^t$. Since the rightmost $k-4$ columns of \(\mathcal{F}_{v_6}^t\) are full and $k-4 \geq 2$, by Lemma~\ref{lem:op_FDRMC}, there exists an optimal $[\mathcal{F}_{v_6}^t, (n-k)(k-2)-24, 3]_q$ code. By applying Lemma~\ref{lemma:multilevel}, an $(n, q^{(n-k)(k-2)-24}, 6, k)_q$-CDC $\mathcal{C}_{v_6}$ is constructed.
\end{enumerate}

Finally, set $ \mathcal{C}_5 = \bigcup_{i=1}^6 \mathcal{C}_{v_i}$. For any $\mathcal{U} \in \mathcal{C}_{v_i}$ and $\mathcal{V} \in \mathcal{C}_{v_j}$, where $v_i,v_j \in S$. Since $S$ is an $(n,2\delta,k)_2$ CWC, $\mathcal{C}_{v_i}$ and $\mathcal{C}_{v_j} $ are lifted $ [\mathcal{F}_{v_i},\delta]_q$ and $[\mathcal{F}_{v_j},\delta]_q$ code, respectively. By Lemma~\ref{lemma:subspace_hamming}, $d_S(\mathcal{U}, \mathcal{V}) \geq 2\delta$. Thus, $ \mathcal{C}_5 $ is an $(n, N_1, 2\delta, k)_q$-CDC, where
\[
\begin{split}
    N_1 = & q^{(n-k)(k-\delta+1)} + q^{\left(n-k-\left\lceil \frac{\delta}{2} \right\rceil\right)(k-\delta+1) - \left\lfloor \frac{\delta}{2} \right\rfloor\left(\delta + \left\lfloor \frac{\delta}{2} \right\rfloor\right)} \\
    &+ \sum_{j=0}^{3} q^{(n-k)(k-\delta+1)- \delta^2 - j{\left\lceil \frac{\delta}{2} \right\rceil}^2 - j{\left\lfloor \frac{\delta}{2} \right\rfloor}^2}.
\end{split}
\]

Furthermore, if \( \delta \neq 3 \), thus \(|S|=4\). Without loss of generality, let \(S=\{v_1, v_2, v_4, v_5\}\), where
\[
\begin{aligned}
	v_1 &=(\underbrace{1\cdots1}_{k}\underbrace{0\cdots0}_{n-k}) \in S_1 ,\\
	v_2 &=(\underbrace{1\cdots1}_{k-\delta}\underbrace{0\cdots0}_{\delta}\underbrace{1\cdots1}_{\delta}\underbrace{0\cdots0}_{n-k-\delta})\in S_2,\\
	v_4 &=(\underbrace{1\cdots1}_{k-\delta-\left\lfloor \frac{\delta}{2} \right\rfloor}
	\underbrace{0\cdots0}_{\left\lfloor \frac{\delta}{2} \right\rfloor}
	\underbrace{1\cdots1}_{\left\lfloor \frac{\delta}{2} \right\rfloor}
	\underbrace{0\cdots0}_{\left\lceil \frac{\delta}{2} \right\rceil}
	\underbrace{1\cdots1}_{\left\lfloor \frac{\delta}{2} \right\rfloor}
	\underbrace{0\cdots0}_{\left\lceil \frac{\delta}{2} \right\rceil}
	\underbrace{1\cdots1}_{\left\lceil \frac{\delta}{2} \right\rceil}
	\underbrace{0\cdots0}_{n-k-\delta-\left\lceil \frac{\delta}{2} \right\rceil}) \in S_3,\\
	v_5 &=(\underbrace{1\cdots1}_{k-\delta-\left\lfloor \frac{\delta}{2} \right\rfloor}
	\underbrace{0\cdots0}_{2\left\lfloor \frac{\delta}{2} \right\rfloor}
	\underbrace{1\cdots1}_{\left\lfloor \frac{\delta}{2} \right\rfloor}
	\underbrace{0\cdots0}_{\left\lceil \frac{\delta}{2} \right\rceil-\left\lfloor \frac{\delta}{2} \right\rfloor}
	\underbrace{1\cdots1}_{\left\lfloor \frac{\delta}{2} \right\rfloor}
	\underbrace{0\cdots0}_{2\left\lceil \frac{\delta}{2} \right\rceil}
	\underbrace{1\cdots1}_{\left\lceil \frac{\delta}{2} \right\rceil}
	\underbrace{0\cdots0}_{n-k-\delta-2\left\lceil \frac{\delta}{2} \right\rceil}) \in S_3.\\
\end{aligned}
\]
Set $ \mathcal{C}_5 = \mathcal{C}_{v_1} \cup \mathcal{C}_{v_2} \cup \mathcal{C}_{v_4} \cup \mathcal{C}_{v_5}$. Similar to the case of \( \delta = 3 \), it can be proved that $ \mathcal{C}_5 $ is an $(n, N_1, 2\delta, k)_q$-CDC, where
    \[
    N_1 = q^{(n-k)(k-\delta+1)} + \sum_{j=0}^{2} q^{(n-k)(k-\delta+1)-\delta^2-j( \frac{\delta}{2})^2}.
    \]
Therefore, the new lower bounds of $A_q(n, 2\delta, k)$ for CDCs are obtained.
\end{proof}

Based on Theorem~\ref{th:1}, the following example is obtained.

\begin{example}\label{ex:4.3}
    Let \( n = 19 \), \( \delta = 4 \) and \( k = 9 \) in Theorem~\ref{th:1}. Then
    \[
    A_q(19,8,9) \geq q^{60} + q^{44} + q^{36} + q^{28}.
    \]
When $q = 3$, we have $A_3(19,8,9) \geq 42391159260137223209995120164$, which improves the lower bound $42391159259987125499483652096$ given in \cite{table}. For $q \geq 3$, this bound is better than the lower bound of $A_q (19, 8, 9)$ in \cite{table}.
\end{example}

To achieve larger CDCs, we insert the coset construction of \cite{co1} into Theorem~\ref{th:1} under the condition of parameter constraints.

\begin{theo}\label{th:2}
Keep the notations as in Lemma~\ref{lemma:coset_construction}, where $n_1 \geq k + 1$, $n_2 \geq k_2$, $k_1 \geq \delta$ and $k_2 \geq \delta$. Define
\[
	\mathcal{C}_6^i = \left\{ \mathrm{rs}\left(
	\begin{array}{@{}cc@{}}
		A_i & \varphi_{B_i}(H) \\
		0_{k_2 \times n_1} & B_i \\
	\end{array}
	\right)\,\bigg|\,  
	\tau^{-1}(A_i) \in \mathcal{A}_i, \, \tau^{-1}(B_i) \in \mathcal{B}_i, \, H \in \mathcal{H}  
	\right\}.
\]  
Set $\mathcal{C}_6 = \bigcup_{1 \leq i \leq s} \mathcal{C}_6^i$. Furthermore, if $|k - \delta + 1 - k_1| \geq \delta$, then $\mathcal{C}_5 \cup \mathcal{C}_6$ is an $(n, 2\delta, k)_q$-CDC and 
\begin{enumerate}[(1)]
	\item if \( \delta = 3 \), then
	\[
	\begin{split}
		A_q(n, 2\delta, k) \geq & q^{(n-k)(k-\delta+1)} + q^{\left(n-k-\left\lceil \frac{\delta}{2} \right\rceil\right)(k-\delta+1) - \left\lfloor \frac{\delta}{2} \right\rfloor\left(\delta + \left\lfloor \frac{\delta}{2} \right\rfloor\right)}\\
		&+ \sum_{j=0}^{3} q^{(n-k)(k-\delta+1)- \delta^2 - j{\left\lceil \frac{\delta}{2} \right\rceil}^2 - j{\left\lfloor \frac{\delta}{2} \right\rfloor}^2}+|\mathcal{H}| \cdot \sum_{i=1}^s|\mathcal{A}_i| \cdot |\mathcal{B}_i|;
	\end{split}
	\]
	\item if \( \delta \neq 3 \), then
	\[
	A_q(n, 2\delta, k) \geq q^{(n-k)(k-\delta+1)} + \sum_{j=0}^{2} q^{(n-k)(k-\delta+1)-\delta^2-j( \frac{\delta}{2})^2}+|\mathcal{H}| \cdot \sum_{i=1}^s|\mathcal{A}_i| \cdot |\mathcal{B}_i|.
	\]
\end{enumerate}
\end{theo}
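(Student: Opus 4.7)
The plan is to observe that both $\mathcal{C}_5$ and $\mathcal{C}_6$ have already been shown to be $(n, 2\delta, k)_q$-CDCs individually: $\mathcal{C}_5$ by Theorem~\ref{th:1} and $\mathcal{C}_6$ directly from Lemma~\ref{lemma:coset_construction} applied to the given lists $(\mathcal{A}_i)$ and $(\mathcal{B}_i)$ together with the MRD code $\mathcal{H}$. Hence the only new content is (i) ruling out small subspace distances between codewords drawn from the two collections, and (ii) totalling the cardinalities.

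For (i), I would appeal to Lemma~\ref{lemma:multi_coset}: for any $\mathcal{U} \in \mathcal{C}_6$ and $\mathcal{V} \in \mathcal{C}_5$, it yields $d_S(\mathcal{U},\mathcal{V}) \geq 2\delta$ as soon as the partial sum $x$ of the first $n_1$ entries of the identifying vector $v(\mathcal{V})$ satisfies $|x - k_1| \geq \delta$. Since each $\mathcal{V} \in \mathcal{C}_5$ is lifted from the echelon Ferrers form of some $v$ in the set $S = S_1 \cup S_2 \cup S_3$ constructed inside the proof of Theorem~\ref{th:1}, the task reduces to verifying this inequality uniformly over all $v \in S$.

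The key claim is that for every $v \in S$ and every $n_1 \geq k+1$, the partial sum $x$ lies in the interval $[k-\delta+1,\, k]$. For $v_1 = (\underbrace{1\cdots1}_{k}\underbrace{0\cdots0}_{n-k})$ one has $x = k$ at once. For $v \in S_2 \cup S_3$, a block-by-block inspection of the explicit patterns displayed in the proof of Theorem~\ref{th:1} shows that the first $k+1$ entries contribute exactly $k - \delta$ ones from the leading $(k-\delta)$-block (with the minor shift of two zeros appearing in $v_3$ when $\delta = 3$) together with one trailing $1$ at position $k+1$, giving $x = k - \delta + 1$; enlarging $n_1$ can only add further trailing ones, so $x$ monotonically increases up to $k$. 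Then the hypothesis $|k - \delta + 1 - k_1| \geq \delta$ combined with the automatic bound $k_1 \leq k - 1$ forces $k_1 \leq k - 2\delta + 1$, and so $x - k_1 \geq (k-\delta+1) - (k - 2\delta + 1) = \delta$ for every such $x$, which is exactly the inequality Lemma~\ref{lemma:multi_coset} requires.

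Once the cross-distance bound is established, the cardinality is simply $|\mathcal{C}_5| + |\mathcal{C}_6|$, and substituting the expression for $|\mathcal{C}_5|$ from Theorem~\ref{th:1} together with $|\mathcal{C}_6| = |\mathcal{H}| \sum_{i=1}^s |\mathcal{A}_i| |\mathcal{B}_i|$ yields the stated lower bound in both the $\delta = 3$ and $\delta \neq 3$ cases. The main obstacle is the case analysis for $\delta = 3$, where six vectors of three shapes (including the extra $S_2$-member obtained from $i = 1$ and the three $S_3$-members indexed by $j \in \{1,2,3\}$) must each be examined; in particular one must confirm that the shifted layouts of $v_3, v_5, v_6$ do not push $x$ below $k - \delta + 1$ at $n_1 = k+1$, so that the interval $[k-\delta+1, k]$ is truly the correct range on which the hypothesis of the theorem needs to act.
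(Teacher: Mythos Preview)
Your proposal is correct and follows essentially the same route as the paper: both invoke Lemma~\ref{lemma:coset_construction} for $\mathcal{C}_6$, Theorem~\ref{th:1} for $\mathcal{C}_5$, and then Lemma~\ref{lemma:multi_coset} for the cross-distance, the key input being that every $v\in S$ has at least $k-\delta+1$ ones among its first $n_1\geq k+1$ coordinates. The paper simply asserts this last fact in one line, whereas you spell out the block-by-block count and the absolute-value argument showing $k_1\leq k-2\delta+1$; your added detail is accurate (minor point: for $v_6$ and some $S_3$-vectors the contributing ones in positions $k,k{+}1$ are distributed slightly differently than ``one trailing $1$ at position $k{+}1$,'' but the total $x=k-\delta+1$ is correct in every case).
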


\begin{proof}
By Lemma~\ref{lemma:coset_construction}, the $\mathcal{C}_6$ is an $(n, 2\delta, k)_q$-CDC. Below is the proof that $\mathcal{C}_5 \cup \mathcal{C}_6$ is an $(n, 2\delta, k)_q$-CDC. For any $\mathcal{U} \in \mathcal{C}_5$ and $\mathcal{V} \in \mathcal{C}_6$, let $u$ and $v$ be the identifying vectors of $\mathcal{U}$ and $\mathcal{V}$, respectively. In the first $n_1$ positions, $u$ contains at least $k - \delta + 1$ ones and $v$ contains $k_1$ ones. By Lemma~\ref{lemma:multi_coset}, suppose that $|(k - \delta + 1) - k_1| \geq \delta$, then $d_S({\mathcal{U}, \mathcal{V}}) \geq d_H(u, v) \geq 2\delta$.
\end{proof}

The new class of  $ (17,6,8)_q $-CDCs by Theorem~\ref{th:2} is constructed as follows.

\begin{example}
Let \( n = 17 \), \( \delta = 3 \), \( k = 8 \), \( n_1 =9, n_2 = 8 \), \( \delta_1 = 1, \delta_2 = 2 \), \( k_1 = 3 \), \( k_2 = 5 \) in Theorem~\ref{th:2}. Then \( \mathcal{C}_5\) is a \((17, N_1, 6, 9)_q\)-CDC where \( N_1 = q^{54} + q^{45} +q^{40}+q^{38}+q^{35}+q^{30} \) by Theorem~\ref{th:1}. The explicit construction of \( \mathcal{C}_6 \) is as follows.

Let $\mathcal{H}$ be a $[9 \times 3, 3]_q$ MRD code, then $|\mathcal{H}|=q^9$. By Lemma~\ref{lemma:cascaded_codes}, we obtain a list of $(9, 6, 3)_q$-CDCs $(\mathcal{A}_i)_{1 \leq i \leq q^{12}}$ with distance 2 (Table~\ref{table:5}) and a list of $(8,6,5)_q$-CDCs $(\mathcal{B}_i)_{1 \leq i \leq q^5+2q^3+q}$ with distance 4 (Table~\ref{table:6}). Furthermore, these CDCs are reordered and illustrated in Table~\ref{table:7}.  

\begin{table}[H]
  \centering
  \caption{A list of $(9,6,3)_q$-CDCs with distance 2}\label{table:5}
\vspace{-10pt}
  \begin{tabular}{|c|c c|c c|}
    \hline
    $(9,6,3)_2$ CWC & & & \multicolumn{2}{c|}{$(9,6,3)_q$-CDCs $(\mathcal{A}_i)$} \\
    \hline
     & $D_{v_i}$ & $s_{v_i}$ & size & number \\
    \hline
    $U_1 \quad  (111000000)$ & $q^6$ & $q^{12}$ & $q^{6}+q^3+1$ & $1$\\
    $\quad \quad (000111000)$ & $q^3$ & $q^6$ & $q^6+q^3$ & $q^{6} - 1 $ \\
    $\quad \quad (000000111)$ & $1$ & $1$ & $q^6$ & $q^{12} - q^6 $ \\
    \hline
  \end{tabular}
\end{table}
\begin{table}[H]
  \centering
  \caption{A list of $(8,6,5)_q$-CDCs with distance 4}\label{table:6}
\vspace{-10pt}
  \begin{tabular}{|c|c c|c c|c c|}
    \hline
    $(8,6,5)_2$ CWC & & & \multicolumn{2}{c|}{$(8,6,5)_q$-CDCs} &  \multicolumn{2}{c|}{reordered $(8,6,5)_q$-CDCs $(\mathcal{B}_i)$} \\
    \hline
     & $D_{v_i}$ & $s_{v_i}$ & size & number & size & number\\
    \hline
    $V_1 \quad (11111000)$ & $q^5$ & $q^{5}$ & $q^{5}+1$ & $q^3 $ & $q^5+1$ & $q^{3}$\\
    $\quad \quad (11000111)$ & $1$ & $q^{3}$ & $q^{5} $ & $ q^5-q^3 $ & $q^5 $ & $q^5-q^{3}$\\
    \cline{1-5}
    $V_2 \quad (10110110)$ & $q $ & $q^{3}$ & $q +1$ & $q^2 $ & $q +1$ & $2q^{2}$\\
    $\quad \quad (01101101)$ & $1$ & $q^{2}$ & $q  $ & $ q^3-q^2 $ & $q  $ & $q^3-q^{2}$\\
    \cline{1-5}
    $V_3 \quad (01110101)$ & $ 1 $ & $q^{3}$ & $q +1$ & $q^2 $ & $1$ & $q^{3}-q^2+q$\\
    $\quad \quad (10101011)$ & $q$ & $q^{2}$ & $1  $ & $ q^3-q^2 $ &  & \\
    \cline{1-5}
    $V_4 \quad (01011011)$ & $ 1 $ & $q $ & $1$ & $q $ &   &  \\
    \hline
  \end{tabular}
\end{table}
\begin{table}[ht]
  \centering
  \caption{}\label{table:7}
\vspace{-10pt}
  \begin{tabular}{|c|c|c|c|c|c|c|}
    \hline
    $|\mathcal{A}_i|$ & $q^6+q^3+1$ & $q^5+q^3$ & $q^5+q^3$ & $q^5+q^3$ & $q^5+q^3$ & $q^5+q^3$ \\
    \hline
    $|\mathcal{B}_i|$ & $q^5+1$ & $q^5+1$ & $q^5$ & $q+1$ & $q$ & $1$ \\
    \hline
    number & $1$ & $q^{3}-1$ & $q^5-q^3$ & $2q^2$ & $q^3-q^2$ & $q^3-q^2+q$\\
    \hline
  \end{tabular}
\end{table}

The number of $\mathcal{A}_i$ is $q^{12}$ and the number of $\mathcal{B}_i$ is $q^5+2q^3+q$, then \( s = \min\{q^{12},q^5+2q^3+q\}\). To obtain large CDCs, by applying Lemma~\ref{lemma:rearrangement}, \( \mathcal{C}_6 = \bigcup_{i=1}^{s} \mathcal{C}_6^i \) has cardinality
\[
\begin{split}
|\mathcal{C}_6| &= |\mathcal{H}| \cdot \sum_{i=1}^{s} |\mathcal{C}_6^i|= |\mathcal{H}| \cdot \sum_{i=1}^{s} |\mathcal{A}_i| \cdot |\mathcal{B}_i| \\
&= q^{25} + q^{22} +q^{19} + 3q^{18} + q^{17} + 2q^{16} + 3q^{15} + 2q^{14} + q^{13} + q^{12}.
\end{split}
\]

Finally, $\mathcal{C}_5 \cup \mathcal{C}_6$ is a $(17, 6, 8)_q$-CDC and
\[
\begin{split}
A_q(17,6,8)\geq & q^{54}+q^{45}+q^{40}+q^{38}+q^{35}+q^{30}+q^{25} + q^{22} +q^{19} + 3q^{18} + q^{17}\\
&+ 2q^{16} + 3q^{15} + 2q^{14} + q^{13} + q^{12}.
\end{split}
\]
When $q = 3$, $A_3(17,6,8) \geq 58152704874502104749268072$, which improves the lower bound $58151863451946414791142287$ given in \cite{11}. For $q \geq 3$, this bound is better than the lower bound of $A_q (17,6,8)$ in \cite{11}.
\end{example}

Subsequently, a new class of optimal FDRMC is constructed based on Lemma~\ref{lem:composite_FDRMC}.

\begin{theo}\label{th:3}
Let \( n\) and \( k \) be positive integers with \( n \geq 2k \geq 4 \), $\delta=3$. Given a Ferrers diagram 
$$
\tikzset{every picture/.style={line width=0.75pt}} 
\begin{tikzpicture}[x=0.65pt,y=0.62pt,yscale=-1,xscale=1]

\draw   (377,84.5) .. controls (376.96,79.83) and (374.61,77.52) .. (369.94,77.56) -- (357.44,77.67) .. controls (350.77,77.72) and (347.42,75.42) .. (347.38,70.75) .. controls (347.42,75.42) and (344.11,77.78) .. (337.44,77.84)(340.44,77.81) -- (324.94,77.94) .. controls (320.27,77.98) and (317.96,80.33) .. (318,85) ;
\draw   (438,167.5) .. controls (442.67,167.59) and (445.05,165.31) .. (445.14,160.64) -- (445.3,152.89) .. controls (445.43,146.22) and (447.83,142.94) .. (452.5,143.03) .. controls (447.83,142.94) and (445.57,139.56) .. (445.7,132.89)(445.64,135.89) -- (445.86,125.14) .. controls (445.95,120.47) and (443.67,118.09) .. (439,118) ;
\draw   (441,240.5) .. controls (445.67,240.59) and (448.05,238.31) .. (448.14,233.64) -- (448.3,225.89) .. controls (448.43,219.22) and (450.83,215.94) .. (455.5,216.03) .. controls (450.83,215.94) and (448.57,212.56) .. (448.7,205.89)(448.64,208.89) -- (448.86,198.14) .. controls (448.95,193.47) and (446.67,191.09) .. (442,191) ;

\draw (304,80.4) node [anchor=north west][inner sep=0.75pt]    {$\begin{array}{ c c c c c }
\bullet  & \cdots  & \bullet  & \bullet  & \bullet \\
 &  &  & \bullet  & \bullet \\
 &  &  & \vdots  & \vdots \\
 &  &  & \bullet  & \bullet \\
 &  &  &  & \bullet \\
 &  &  &  & \vdots \\
 &  &  &  & \bullet 
\end{array}$};
\draw (310,48) node [anchor=north west][inner sep=0.75pt]   [align=left] {$n-k-2$};
\draw (460,119.4) node [anchor=north west][inner sep=0.75pt]    {$\lfloor \frac{k-1}{2} \rfloor $};
\draw (467,192.4) node [anchor=north west][inner sep=0.75pt]    {$\lfloor \frac{k-1}{2} \rfloor $};
\draw (264,144.4) node [anchor=north west][inner sep=0.75pt]    {$\mathcal{F} =$};
\end{tikzpicture},
$$
there exists an optimal \([\mathcal{F}, \left\lfloor \frac{k - 1}{2} \right\rfloor, 3]_q\) code.
\end{theo}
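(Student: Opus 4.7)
The plan is to use the composite construction in Lemma~\ref{lem:composite_FDRMC} to glue together a trivial code on the top-left piece of $\mathcal{F}$ and an optimal $\delta=2$ code on the narrow right-hand strip, thereby matching the Singleton-like upper bound of Lemma~\ref{lem:op}. As a first step I would compute that bound explicitly. Removing the rightmost two columns from $\mathcal{F}$ leaves only the $n-k-2$ dots of the top row, so $v_0 = n-k-2$; removing the top row together with the rightmost column leaves only the dots in column $n-k-1$ below the top row, so $v_1 = \lfloor\frac{k-1}{2}\rfloor$; and removing the top two rows yields $(\lfloor\frac{k-1}{2}\rfloor-1)+(2\lfloor\frac{k-1}{2}\rfloor-1) = 3\lfloor\frac{k-1}{2}\rfloor - 2$ dots in the last two columns, so $v_2 = 3\lfloor\frac{k-1}{2}\rfloor - 2$. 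Under the hypothesis $n \geq 2k \geq 4$, an elementary check shows $\min\{v_0, v_1, v_2\} = \lfloor\frac{k-1}{2}\rfloor$, so it suffices to build a code of this exact dimension.

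Next I would set up the decomposition required by Lemma~\ref{lem:composite_FDRMC} with $m_1 = m_3 = 1$, $n_1 = n-k-2$, $n_2 = n_3 = 2$, and $m_2 = 2\lfloor\frac{k-1}{2}\rfloor$. With these parameters $\mathcal{F}_1$ is the $1\times(n-k-2)$ full block occupying the left part of the top row of $\mathcal{F}$, $\mathcal{D}$ is the $1\times 2$ full rectangle in the top-right corner, and $\mathcal{F}_2$ is the lower-right $2\lfloor\frac{k-1}{2}\rfloor \times 2$ Ferrers diagram whose first column has $\lfloor\frac{k-1}{2}\rfloor$ dots and whose last column is full. The assembly conditions $m_3 \geq m_1$ and $n_3 \geq n_2$ both hold as equalities, and a direct visual check confirms that the assembly reproduces $\mathcal{F}$. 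I would then split the distance budget as $\delta_1 + \delta_2 = 1 + 2 = 3$. On $\mathcal{F}_1$, any $\lfloor\frac{k-1}{2}\rfloor$-dimensional subspace of $\mathbb{F}_q^{n-k-2}$ is a trivial $[\mathcal{F}_1, \lfloor\frac{k-1}{2}\rfloor, 1]_q$ code, which exists because $n \geq 2k$ forces $n-k-2 \geq \lfloor\frac{k-1}{2}\rfloor$. On $\mathcal{F}_2$, the last column contains $m = 2\lfloor\frac{k-1}{2}\rfloor$ dots and $m \geq n_2$ holds for $k \geq 3$, so Lemma~\ref{lem:op_FDRMC} supplies an optimal $[\mathcal{F}_2, 2]_q$ code of dimension $\gamma_1 = \lfloor\frac{k-1}{2}\rfloor$, exactly what is needed.

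Finally, applying Lemma~\ref{lem:composite_FDRMC} to these two ingredient codes of common dimension $\lfloor\frac{k-1}{2}\rfloor$ yields a $[\mathcal{F}, \lfloor\frac{k-1}{2}\rfloor, 3]_q$ code, and since this dimension meets the bound of Lemma~\ref{lem:op} verified in the first step, the code is optimal. The delicate part of the argument is the triple alignment of dimensions: that the Singleton-like bound for $\mathcal{F}$ under $\delta = 3$, the bound for $\mathcal{F}_2$ under $\delta = 2$, and the available width $n-k-2$ of $\mathcal{F}_1$ all collapse to the same value $\lfloor\frac{k-1}{2}\rfloor$. This boils down to a short list of elementary inequalities together with a separate handling of the degenerate case $k = 2$, where every relevant quantity is zero and the unique optimal code is $\{\mathbf{0}\}$.
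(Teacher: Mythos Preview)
Your proof is correct and uses the same core tool as the paper, Lemma~\ref{lem:composite_FDRMC}, to split the target distance $3=1+2$. The only difference is the choice of decomposition of $\mathcal{F}$: you take $\mathcal{F}_1$ to be the single top-left row segment (with $\delta_1=1$) and $\mathcal{F}_2$ the two-column strip beneath $\mathcal{D}$ (with $\delta_2=2$), whereas the paper takes $\mathcal{F}_1$ to be the L-shaped block on the first $n-k-1$ columns (with $\delta_1=2$) and $\mathcal{F}_2$ a single column of height $\lfloor\tfrac{k-1}{2}\rfloor$ (with $\delta_2=1$), using $\mathcal{D}$ as the upper part of the last column. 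Both decompositions invoke Lemma~\ref{lem:op_FDRMC} for the $\delta=2$ piece and a trivial code for the $\delta=1$ piece, and both meet the Singleton-like bound $v_1=\lfloor\tfrac{k-1}{2}\rfloor$; your version is arguably slightly cleaner since $\mathcal{F}_1$ is a full row and $\mathcal{F}_2$ is directly amenable to Lemma~\ref{lem:op_FDRMC} without transposing.
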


\begin{proof}
Let 

$$
\tikzset{every picture/.style={line width=0.75pt}} 
\begin{tikzpicture}[x=0.65pt,y=0.65pt,yscale=-1,xscale=1]

\draw   (377,84.5) .. controls (376.96,79.83) and (374.61,77.52) .. (369.94,77.56) -- (357.44,77.67) .. controls (350.77,77.72) and (347.42,75.42) .. (347.38,70.75) .. controls (347.42,75.42) and (344.11,77.78) .. (337.44,77.84)(340.44,77.81) -- (324.94,77.94) .. controls (320.27,77.98) and (317.96,80.33) .. (318,85) ;
\draw   (413,165.5) .. controls (417.67,165.59) and (420.05,163.31) .. (420.14,158.64) -- (420.3,150.89) .. controls (420.43,144.22) and (422.83,140.94) .. (427.5,141.03) .. controls (422.83,140.94) and (420.57,137.56) .. (420.7,130.89)(420.64,133.89) -- (420.86,123.14) .. controls (420.95,118.47) and (418.67,116.09) .. (414,116) ;

\draw (307,80.4) node [anchor=north west][inner sep=0.75pt]    {$\begin{array}{ c c c c }
\bullet  & \cdots  & \bullet  & \bullet \\
 &  &  & \bullet \\
 &  &  & \vdots \\
 &  &  & \bullet \\
\end{array}$};
\draw (314,50) node [anchor=north west][inner sep=0.75pt]   [align=left] {$n-k-1$};
\draw (436,125) node [anchor=north west][inner sep=0.75pt]    {$\lfloor \frac{k-1}{2} \rfloor $};
\draw (263,119.4) node [anchor=north west][inner sep=0.75pt]    {$\mathcal{F}_{1} =$};
\end{tikzpicture}.
$$
Since $ n - k - 1 \geq k - 1 $ and by Lemma~\ref{lem:op_FDRMC}, there exists an optimal $[\mathcal{F}_1,\left\lfloor \frac{k - 1}{2} \right\rfloor,2]_q$ code. Let 

$$
\tikzset{every picture/.style={line width=0.75pt}} 
\begin{tikzpicture}[x=0.65pt,y=0.65pt,yscale=-1,xscale=1]

\draw   (102,82.5) .. controls (106.67,82.59) and (109.05,80.31) .. (109.14,75.64) -- (109.3,67.89) .. controls (109.43,61.22) and (111.83,57.94) .. (116.5,58.03) .. controls (111.83,57.94) and (109.57,54.56) .. (109.7,47.89)(109.64,50.89) -- (109.86,40.14) .. controls (109.95,35.47) and (107.67,33.09) .. (103,33) ;

\draw (76,22.4) node [anchor=north west][inner sep=0.75pt]    {$\begin{array}{ c }
\bullet \\
\vdots \\
\bullet 
\end{array}$};
\draw (125,43) node [anchor=north west][inner sep=0.75pt]    {$\lfloor \frac{k-1}{2} \rfloor $};
\draw (34,51.4) node [anchor=north west][inner sep=0.75pt]    {$\mathcal{F}_{2} =$};
\end{tikzpicture},
$$
and there exists an optimal $[\mathcal{F}_2,\left\lfloor \frac{k - 1}{2} \right\rfloor,1]_q$ code by Lemma~\ref{lem:op_FDRMC}. Furthermore, there exists an $[\mathcal{F},\left\lfloor \frac{k - 1}{2} \right\rfloor,3]_q$ code by Lemma~\ref{lem:composite_FDRMC}. It is clear that  $v_{\text{min}}(\mathcal{F},3)=v_1(\mathcal{F},3)=\left\lfloor \frac{k - 1}{2} \right\rfloor$, so there exists an optimal $[\mathcal{F},\left\lfloor \frac{k - 1}{2} \right\rfloor,3]_q$ code.
\end{proof}

In the following, the new class of optimal FDRMC from Theorem~\ref{th:3} is applied to $\mathcal{C}_5$ of Theorem~\ref{th:1} to improve the lower bounds on the number of codewords for more CDCs.

\begin{theo}\label{th:4}
Let \( n \geq 2k+2 \), \( k \geq 2\delta + \left\lfloor \frac{\delta}{2} \right\rfloor - 1 \).
\begin{enumerate}[(1)]
    \item If \( \delta = 3 \), then
    \[
\begin{split}
    A_q(n, 2\delta, k) \geq & q^{(n-k)(k-\delta+1)} + q^{\left(n-k-\left\lceil \frac{\delta}{2} \right\rceil\right)(k-\delta+1) - \left\lfloor \frac{\delta}{2} \right\rfloor\left(\delta + \left\lfloor \frac{\delta}{2} \right\rfloor\right)} \\
    &+ \sum_{j=0}^{3} q^{(n-k)(k-\delta+1)- \delta^2 - j{\left\lceil \frac{\delta}{2} \right\rceil}^2 - j{\left\lfloor \frac{\delta}{2} \right\rfloor}^2}+q^{\left\lfloor \frac{k - 1}{2} \right\rfloor};
\end{split}
    \]
    
    \item If \( \delta \neq 3 \), then
    \[
    A_q(n, 2\delta, k) \geq q^{(n-k)(k-\delta+1)} + \sum_{j=0}^{2} q^{(n-k)(k-\delta+1)-\delta^2-j( \frac{\delta}{2})^2}+q^{\left\lfloor \frac{k - 1}{2} \right\rfloor}.
    \]
\end{enumerate}
\end{theo}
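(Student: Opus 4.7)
The plan is to augment the multilevel construction of Theorem~\ref{th:1} by adjoining a single additional identifying vector $v_7$ whose associated Ferrers diagram is precisely the one for which Theorem~\ref{th:3} produces an optimal FDRMC of dimension $\lfloor (k-1)/2 \rfloor$. The overall argument then becomes: (i) exhibit $v_7$; (ii) show that $S \cup \{v_7\}$ remains an $(n,2\delta,k)_2$ constant-weight code; (iii) lift the optimal FDRMC on $\mathcal{F}_{v_7}$ through the multilevel construction to obtain a new CDC $\mathcal{C}_{v_7}$ of size $q^{\lfloor(k-1)/2\rfloor}$; and (iv) take the union with the $\mathcal{C}_5$ built in Theorem~\ref{th:1}.

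To construct $v_7$ I would exploit the fact that in $\mathcal{F}_v$ the length of the $i$-th row equals the number of zeros to the right of the $i$-th leading one of $v$. Matching the row-length sequence $(n-k,\,2,\ldots,2,\,1,\ldots,1)$ of the diagram in Theorem~\ref{th:3} yields
\[
v_7 \;=\; \bigl(\,1,\; \underbrace{0,\ldots,0}_{n-k-2},\; \underbrace{1,\ldots,1}_{\lfloor(k-1)/2\rfloor},\; 0,\; \underbrace{1,\ldots,1}_{\lfloor(k-1)/2\rfloor},\; 0,\; \varepsilon\,\bigr),
\]
where $\varepsilon$ is empty when $k$ is odd and equals a final $1$ when $k$ is even (producing an empty trailing row of $\mathcal{F}_{v_7}$, which is suppressed in the picture of Theorem~\ref{th:3}). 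The strengthened hypothesis $n \ge 2k+2$, as opposed to $n \ge 2k$ in Theorem~\ref{th:1}, is exactly what guarantees that the $n-k-2$ leading zeros of $v_7$ extend past coordinate $k$, so that in the first $k$ positions of $v_7$ only coordinate $1$ carries a one.

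The combinatorially delicate step, and the main obstacle I expect, is then to verify $d_H(v_7,v) \ge 2\delta$ for every $v$ in the identifying-vector set $S = \{v_1,\ldots,v_6\}$ of Theorem~\ref{th:1} (four vectors $\{v_1,v_2,v_4,v_5\}$ when $\delta \ne 3$). I would split both vectors into the first $k$ and the last $n-k$ coordinates, use the observation above about the first $k$ coordinates of $v_7$, and lower bound disagreements on each block separately using the explicit shapes of $v_1,\ldots,v_6$. The target in each subcase is $\delta$ disagreements per block, and the hypothesis $n \ge 2k+2$ is precisely what creates enough room in the second block to accommodate the spread-out support of $v_7$.

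Once the CWC condition for $S\cup\{v_7\}$ is in place, Theorem~\ref{th:3} directly supplies an optimal $[\mathcal{F}_{v_7},\lfloor(k-1)/2\rfloor,\delta]_q$ code when $\delta = 3$; for $\delta \ne 3$ the same composite construction of Lemma~\ref{lem:composite_FDRMC}, applied to the two sub-diagrams of $\mathcal{F}_{v_7}$ with distance parameters summing to $\delta$, yields an $[\mathcal{F}_{v_7},\lfloor(k-1)/2\rfloor,\delta]_q$ code. Lifting this FDRMC through the echelon Ferrers form of $v_7$ via Lemma~\ref{lemma:multilevel} produces a subset $\mathcal{C}_{v_7} \subseteq \mathcal{G}_q(n,k)$ of size $q^{\lfloor(k-1)/2\rfloor}$. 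By Lemma~\ref{lemma:subspace_hamming}, the CWC property forces $d_S(\mathcal{U},\mathcal{V}) \ge 2\delta$ for any $\mathcal{U} \in \mathcal{C}_{v_7}$ and $\mathcal{V} \in \mathcal{C}_5$, while two distinct elements of $\mathcal{C}_{v_7}$ differ by a nonzero FDRMC element of rank at least $\delta$ and hence have subspace distance at least $2\delta$. Therefore $\mathcal{C}_5 \cup \mathcal{C}_{v_7}$ is an $(n,2\delta,k)_q$-CDC whose cardinality equals the Theorem~\ref{th:1} bound plus $q^{\lfloor(k-1)/2\rfloor}$, which is exactly the inequality claimed in parts (1) and (2).
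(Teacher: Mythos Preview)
Your proposal follows essentially the same route as the paper: you introduce the same extra identifying vector (the paper calls it $v'$), verify the constant-weight code condition against $S$, invoke Theorem~\ref{th:3} to obtain the FDRMC of dimension $\lfloor(k-1)/2\rfloor$, and take the union with $\mathcal{C}_5$. The paper splits the Hamming-distance count at coordinate $k+1$ rather than $k$, but that is cosmetic; your observation that $n\ge 2k+2$ forces only a single one of $v_7$ in the first $k$ coordinates is the same mechanism the paper uses.

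One point where you go beyond the paper, and where the claim does not hold as stated: for $\delta\ne 3$ you assert that Lemma~\ref{lem:composite_FDRMC}, applied to the two sub-diagrams of $\mathcal{F}_{v_7}$ with distance parameters summing to $\delta$, yields an $[\mathcal{F}_{v_7},\lfloor(k-1)/2\rfloor,\delta]_q$ code. For $\delta\ge 4$ this is impossible already by the Singleton-like bound of Lemma~\ref{lem:op}: removing the top row and the rightmost $\delta-2\ge 2$ columns of $\mathcal{F}_{v_7}$ leaves no dots, so $v_1(\mathcal{F}_{v_7},\delta)=0$ and no positive-dimensional $[\mathcal{F}_{v_7},\delta]_q$ code exists. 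The composite construction therefore cannot produce dimension $\lfloor(k-1)/2\rfloor$ once $\delta\ge 4$. (The paper's proof simply cites Theorem~\ref{th:3}, which only furnishes a distance-$3$ code, and leaves the $\delta\ne 3$ case without further justification; so the gap is shared, but your specific appeal to Lemma~\ref{lem:composite_FDRMC} is not a valid fix.) For $\delta=2$ the claim is fine since optimal distance-$2$ FDRMCs always exist and the optimal dimension here exceeds $\lfloor(k-1)/2\rfloor$.
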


\begin{proof}
Let \( S_1, S_2, S_3, \mathcal{C}_5 \) be defined in Theorem~\ref{th:1}. Given the identifying vector 
\[
v' = (1\underbrace{0\cdots0}_{n-k-2}\underbrace{1\cdots1}_{\left\lfloor \frac{k-1}{2} \right\rfloor}0\underbrace{1\cdots1}_{\left\lfloor \frac{k-1}{2} \right\rfloor}0\underbrace{1\cdots1}_{\left\lceil \frac{k-1}{2} \right\rceil-\left\lfloor \frac{k-1}{2} \right\rfloor})
\]
and set \( S' =S_1 \cup S_2 \cup S_3 \cup v'\). We now prove that \( S' \) is an \((n, 2\delta, k)_2 \) CWC. For any \( v \in S' \) of the form $ v = ( \underbrace{v^{(3)}}_{k+1}|\underbrace{v^{(4)}}_{n-k-1})$, clearly, the first \( k+1 \) positions of \( v' \) contain l one, while \( v_1, v_2, v_3 \) contain at least \( k-\delta+1 \) ones for any \( v_1 \in S_1 \), \( v_2 \in S_2 \) and \( v_3 \in S_3 \). Then 
\[
d_H ({v'}^{(3)}, v_i^{(3)} ) \geq k-\delta+1 - 1 = k-\delta \geq \delta \] 
for  $i \in \{1, 2, 3\}$. In the remaining \( n-k-1 \) positions, \( v' \) contains \( k-1 \) ones and \( v_1, v_2, v_3 \) contain at least \( \delta-1 \) ones, then $d_H ({v'}^{(4)}, v_i^{(4)} )  \geq (k-1) - (\delta - 1) = k - \delta \geq \delta $ for $ i \in \{1, 2, 3\}$. Hence, 
\[
d_H (v', v_i) = d_H ({v'}^{(3)}, v_i^{(3)}) + d_H ({v'}^{(4)}, v_i^{(4)}) \geq 2\delta
\]
for $ i \in \{1, 2, 3\}$. Regarding $v'$, the corresponding Ferrers diagram coincides exactly with the Ferrers diagram of Theorem~\ref{th:3}. Thus, there exists an optimal $[\mathcal{F},\left\lfloor \frac{k - 1}{2} \right\rfloor,3]_q$ code $\mathcal{C}_{\mathcal{F}_{v'}}$ by Theorem~\ref{th:3}. Set 
\[
\mathcal{C}_5 \cup \mathcal{C}_{\mathcal{F}_{v'}} \triangleq \mathcal{C}_7.
\]
If \( \delta = 3 \), then $ \mathcal{C}_7 $ is an $(n, N_2, 2\delta, k)_q$-CDC, where
\[
\begin{split}
    N_2 = & q^{(n-k)(k-\delta+1)} + q^{\left(n-k-\left\lceil \frac{\delta}{2} \right\rceil\right)(k-\delta+1) - \left\lfloor \frac{\delta}{2} \right\rfloor\left(\delta + \left\lfloor \frac{\delta}{2} \right\rfloor\right)} \\
    &+ \sum_{j=0}^{3} q^{(n-k)(k-\delta+1)- \delta^2 - j{\left\lceil \frac{\delta}{2} \right\rceil}^2 - j{\left\lfloor \frac{\delta}{2} \right\rfloor}^2}+q^{\left\lfloor \frac{k - 1}{2} \right\rfloor}.
\end{split}
\]
Otherwise, $ \mathcal{C}_7 $ is an $(n, N_2, 2\delta, k)_q$-CDC, where
    \[N_2 = q^{(n-k)(k-\delta+1)} + \sum_{j=0}^{2} q^{(n-k)(k-\delta+1)-\delta^2-j( \frac{\delta}{2})^2}+q^{\left\lfloor \frac{k - 1}{2} \right\rfloor}.\]
\end{proof}

Based on Theorem~\ref{th:4}, the following results are obtained. Through comparing with known bounds, our constructions yield tighter lower bounds of CDCs.

\begin{example}

With $n=15, \delta=3, k=6$, then
\[
A_q(15,6,6) \geq q^{36}+q^{27}+q^{24}+2q^{22}+q^{12}+q^2.
\]
When $q = 3$, then we have $A_3(15,6,6) \geq 150102606086671257$, which improves the lower bound $150102543990846750$ given in \cite{6}. For $q \geq 3$, this bound is better than the lower bound of $A_q (15,6,6)$ in \cite{6}.

With $n=16, \delta=3, k=6$, then
\[
A_q(16,6,6) \geq q^{40}+q^{31}+q^{28}+q^{26}+q^{21}+q^{16}+q^2.
\]
When $q = 3$, then we have $A_3(16,6,6) \geq 12158308561614895971$, which improves the lower bound $12158306011247867322$ given in \cite{6}. For $q \geq 3$, this bound is better than the lower bound of $A_q (16,6,6)$ in \cite{6}.

With $n=16, \delta=3, k=7$, then
\[
A_q(16,6,7) \geq q^{45}+q^{36}+2q^{31}+q^{26}+q^{21}+q^3.
\]
When $q = 3$, then we have $A_3(16,6,7) \geq 2954464039085249447217$, which improves the lower bound $2954462808812115984384$ given in \cite{table}. For $q \geq 3$, this bound is better than the lower bound of $A_q (16,6,7)$ in \cite{table}.

With $n=17, \delta=3, k=6$, then
\[
A_q(17,6,6) \geq q^{44}+q^{35}+q^{32}+q^{30}+q^{25}+q^{20}+q^2.
\]
When $q = 3$, then we have $A_3(17, 6, 6) \geq 984822993490806572931$, which improves the lower bound $984822786755034714042$ given in \cite{26}. For $q \geq 3$, this bound is better than the lower bound of $A_q (17, 6, 6)$ in \cite{26}.

With $n=17, \delta=3, k=7$, then
\[
A_q(17,6,7) \geq q^{50}+q^{41}+2q^{36}+q^{31}+q^{26}+q^3.
\]
When $q = 3$, then we have $A_3(17,6,7) \geq 717934761497715615667197$, which improves the lower bound $717934513945701079533438$ given in \cite{26}. For $q \geq 3$, this bound is better than the lower bound of $A_q (17,6,7)$ in \cite{26}.

With $n=18, \delta=3, k=7$, then
\[
A_q(18,6,7) \geq q^{55}+q^{46}+2q^{41}+q^{36}+q^{31}+q^3.
\]
When $q = 3$, then we have $A_3(18,6,7) \geq 174458147043944894607122337$, which improves the lower bound $174458086133951448364668984$ given in \cite{26}. For $q \geq 3$, this bound is better than the lower bound of $A_q (18,6,7)$ in \cite{26}.

With $n=19, \delta=3, k=7$, then
\[
A_q(19,6,7) \geq q^{60}+q^{51}+2q^{46}+q^{41}+q^{36}+q^3.
\]
When $q = 3$, then we have $A_3(19,6,7) \geq 42393329731678609389530721357$, which improves the lower bound $42393314923753645026362778948$ given in \cite{26}. For $q \geq 3$, this bound is better than the lower bound of $A_q (19,6,7)$ in \cite{26}.
\end{example}

To improve the lower bounds of CDCs with additional parameters, the coset construction in \cite{co1} is inserted into the CDCs of Theorem~\ref{th:4} under parameter constraints.

\begin{theo}\label{th:5}
Keep the notations as in Lemma~\ref{lemma:coset_construction}, where $n \geq 2k+2, \delta \geq 2, n_1 = k+1, n_2 \geq k_2$, $k_1 \geq \delta$ and $k_2 \geq \delta$. Define
    \[
    \mathcal{C}_8^i = \left\{ \mathrm{rs}\left(
    \begin{array}{@{}cc@{}}
    A_i & \varphi_{B_i}(H) \\
    0_{k_2 \times n_1} & B_i \\
    \end{array}
    \right)\,\bigg|\,  
    \tau^{-1}(A_i) \in \mathcal{A}_i, \, \tau^{-1}(B_i) \in \mathcal{B}_i, \, H \in \mathcal{H}  
    \right\}.
    \]  
Set \( \mathcal{C}_8=\bigcup_{1 \leq i \leq s} \mathcal{C}_8^i \) and for each $\mathcal{U} \in \mathcal{C}_8$, let $v(\mathcal{U})$ be the identifying vector of  $\mathcal{U}$. If the parameters satisfy the following conditions that
\begin{enumerate}[(1)]
    \item $|k - \delta + 1 - k_1| \geq \delta$;
    \item $v(\mathcal{U})$ has the form $(0, a_1, a_2, \cdots, a_{n-\delta-1}, \overbrace{0, \cdots, 0}^{L_1}, 1, \overbrace{0, \ldots, 0}^{ L_2})$, where $\sum_{i=1}^{n-\delta-1} a_i = k - 1$, $a_i \in \{0,1\}$ for $1 \leq i \leq n-\delta-1$, $L_1=\delta-1 - \left\lceil \frac{k-1}{2} \right\rceil + \left\lfloor \frac{k-1}{2} \right\rfloor$ and $L_2=\left\lceil \frac{k-1}{2} \right\rceil - \left\lfloor \frac{k-1}{2} \right\rfloor$,
\end{enumerate}
then $\mathcal{C}_7 \cup \mathcal{C}_8$ is an \((n, 2\delta, k)_q\)-CDC, where $\mathcal{C}_7$ is defined in Theorem~\ref{th:4}.  
\end{theo}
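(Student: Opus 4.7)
The plan is to reduce the proof to three checks: $\mathcal{C}_8$ is an $(n,2\delta,k)_q$-CDC (immediate from Lemma~\ref{lemma:coset_construction}); $\mathcal{C}_7$ is an $(n,2\delta,k)_q$-CDC (this is Theorem~\ref{th:4}); and for every $\mathcal{U}\in\mathcal{C}_8$ and $\mathcal{V}\in\mathcal{C}_7$, $d_S(\mathcal{U},\mathcal{V})\geq 2\delta$. Only the last is nontrivial, and I would split it according to which ``layer'' of $\mathcal{C}_7=\mathcal{C}_5\cup\mathcal{C}_{\mathcal{F}_{v'}}$ contains $\mathcal{V}$: either the multilevel code $\mathcal{C}_5$ built from the identifier set $S=S_1\cup S_2\cup S_3$ of Theorem~\ref{th:1}, or the lifted optimal FDRMC $\mathcal{C}_{\mathcal{F}_{v'}}$ associated with the identifying vector $v'$ of Theorem~\ref{th:3}.

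For $\mathcal{V}\in\mathcal{C}_5$, I would apply Lemma~\ref{lemma:multi_coset} directly. A quick inspection of the vectors in $S$ shows that the partial sum $x$ of the first $n_1=k+1$ coordinates of $v(\mathcal{V})$ equals $k$ (when $v(\mathcal{V})=v_1\in S_1$) or is at least $k-\delta+1$ (otherwise). Since $k_1\leq k-\delta$ because $k_2\geq\delta$, hypothesis (1) $|k-\delta+1-k_1|\geq\delta$ forces $x-k_1\geq\delta$ in every case, hence $2|x-k_1|\geq 2\delta$, and Lemma~\ref{lemma:multi_coset} gives the desired bound.

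For $\mathcal{V}\in\mathcal{C}_{\mathcal{F}_{v'}}$, Lemma~\ref{lemma:multi_coset} is no longer strong enough, because the first $k+1$ entries of $v'$ carry only a single $1$; this is precisely why hypothesis (2) pins down the shape of $v(\mathcal{U})$. I would instead invoke Lemma~\ref{lemma:subspace_hamming} to reduce the task to $d_H(v(\mathcal{U}),v')\geq 2\delta$, and then split the $n$ coordinates into three blocks: position $1$, the middle $\{2,\ldots,n-\delta\}$, and the tail $\{n-\delta+1,\ldots,n\}$. Position $1$ contributes exactly $1$ to $d_H$ since $v'_1=1$ and $v(\mathcal{U})_1=0$. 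In the tail, $v(\mathcal{U})$ reads $0^{L_1}\,1\,0^{L_2}$ while $v'$ reads $1^{L_1}\,0\,1^{L_2}$: the unique zero of $v'$ sits at position $n-\epsilon$ where $\epsilon=\lceil(k-1)/2\rceil-\lfloor(k-1)/2\rfloor=L_2$, which is exactly the position $n-L_2$ where $v(\mathcal{U})$ has its single $1$, so the two patterns are positionwise complementary and the tail contributes a full $\delta$. In the middle, a weight count finishes the job: $v(\mathcal{U})$ has $k-1$ ones there, while $v'$ has only $k-\delta$ (its $k$ ones split as one at position $1$, exactly $L_1+L_2=\delta-1$ in the tail, and the remainder in the middle), so $d_H\geq (k-1)-(k-\delta)=\delta-1$. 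Summing, $d_H\geq 1+(\delta-1)+\delta=2\delta$, as needed.

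The main obstacle is the positional bookkeeping in the tail: the middle $1^a$-block of $v'$, of length $a=\lfloor(k-1)/2\rfloor$, must extend far enough to the right to fully cover the $L_1$ consecutive zeros of the tail of $v(\mathcal{U})$, i.e., one needs $a\geq L_1=\delta-1-\epsilon$. This reduces to $\lceil(k-1)/2\rceil\geq\delta-1$, which follows from the standing hypothesis $k\geq 2\delta+\lfloor\delta/2\rfloor-1$. The constants $L_1$ and $L_2$ in hypothesis (2) are chosen precisely so that the $1$ of $v(\mathcal{U})$'s tail lands on the $0$ of $v'$'s tail and the complementarity is exact; carefully verifying this alignment is the only delicate part of the argument.
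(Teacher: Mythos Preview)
Your proof is correct and follows essentially the same strategy as the paper: apply Lemma~\ref{lemma:coset_construction} for $\mathcal{C}_8$, use Lemma~\ref{lemma:multi_coset} with condition~(1) for the $\mathcal{C}_5$-part of $\mathcal{C}_7$, and compute $d_H(v(\mathcal{U}),v')\geq 2\delta$ directly for the $\mathcal{C}_{\mathcal{F}_{v'}}$-part. The only cosmetic difference is your three-block split $\{1\}\cup\{2,\ldots,n-\delta\}\cup\{n-\delta+1,\ldots,n\}$ giving contributions $1+(\delta-1)+\delta$, whereas the paper groups position~$1$ with the last $\delta-1$ positions to get $\delta+\delta$; your tail complementarity check and the needed inequality $\lfloor(k-1)/2\rfloor\geq L_1$ match the paper's $2\lfloor(k-1)/2\rfloor\geq\delta-1-\lceil(k-1)/2\rceil+\lfloor(k-1)/2\rfloor$ exactly.
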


\begin{proof}
By Lemma~\ref{lemma:coset_construction}, $\mathcal{C}_8$ is an $(n, 2\delta, k)_q$-CDC. Below we prove that \(\mathcal{C}_7 \cup \mathcal{C}_8\) is an \( (n, 2\delta, k)_q \)-CDC, where \( \mathcal{C}_7 = \mathcal{C}_5 \cup \mathcal{C}_{\mathcal{F}_{v'}} \) is defined in Theorem~\ref{th:4}. It is first proved that \( \mathcal{C}_5 \cup \mathcal{C}_8 \) is an \( (n, 2\delta, k)_q \)-CDC. For any $\mathcal{V} \in \mathcal{C}_5$ and $\mathcal{U} \in \mathcal{C}_8$, let $v(\mathcal{V})$ and $v(\mathcal{U})$ be the identifying vectors of $\mathcal{\mathcal{V}}$ and $\mathcal{U}$, respectively, where $v(\mathcal{U})$ satisfies conditions (1) and (2). Considering the first \( n_1 \) positions, $v(\mathcal{V})$ has at least $k-\delta+1$ ones and $v(\mathcal{U})$ has $k$ ones, then \( \mathcal{C}_5 \cup \mathcal{C}_8 \) is an \( (n, 2\delta, k)_q \)-CDC by condition (1) and Lemma~\ref{lemma:multi_coset}. Next we prove that \( \mathcal{C}_{\mathcal{F}_{v'}} \cup \mathcal{C}_8 \) also is an \( (n, 2\delta, k)_q \)-CDC, where $v' = (1\underbrace{0\cdots0}_{n-k-2}\underbrace{1\cdots1}_{\left\lfloor \frac{k-1}{2} \right\rfloor}0\underbrace{1\cdots1}_{\left\lfloor \frac{k-1}{2} \right\rfloor}0\underbrace{1\cdots1}_{\left\lceil \frac{k-1}{2} \right\rceil-\left\lfloor \frac{k-1}{2} \right\rfloor})$. Due to $ k \geq \delta$, then 
\[
2\left\lfloor \tfrac{k-1}{2} \right\rfloor \geq \delta-1-\left\lceil \tfrac{k-1}{2} \right\rceil + \left\lfloor \tfrac{k-1}{2} \right\rfloor.
\]
 Furthermore, by considering the first position and last $\delta-1$ positions, $v(\mathcal{U})$ and $v'$ differ in at least $\delta$ components. Moreover, in the remaining positions, they differ in at least $\delta$ components. Therefore, $d_H (v(\mathcal{U}), v') \geq 2\delta$.
\end{proof}

The following corollary is obtained by Theorem~\ref{th:5}.

\begin{coro}\label{cor:5}
Keep the notations as in Theorem~\ref{th:5}.
\begin{enumerate}[(1)]
    \item If \( \delta = 3 \), then
    \[
\begin{split}
    A_q(n, 2\delta, k) \geq & q^{(n-k)(k-\delta+1)} + q^{\left(n-k-\left\lceil \frac{\delta}{2} \right\rceil\right)(k-\delta+1) - \left\lfloor \frac{\delta}{2} \right\rfloor\left(\delta + \left\lfloor \frac{\delta}{2} \right\rfloor\right)} \\
    &+ \sum_{j=0}^{3} q^{(n-k)(k-\delta+1)- \delta^2 - j{\left\lceil \frac{\delta}{2} \right\rceil}^2 - j{\left\lfloor \frac{\delta}{2} \right\rfloor}^2}+q^{\left\lfloor \frac{k - 1}{2} \right\rfloor}+|\mathcal{H}| \cdot \sum_{i=1}^s|\mathcal{A}_i| \cdot |\mathcal{B}_i|;
\end{split}
    \]
    \item If \( \delta \neq 3 \), then
    \[
\begin{split}
    A_q(n, 2\delta, k) \geq &q^{(n-k)(k-\delta+1)} + \sum_{j=0}^{2} q^{(n-k)(k-\delta+1)-\delta^2-j( \frac{\delta}{2})^2}+q^{\left\lfloor \frac{k - 1}{2} \right\rfloor}\\
&+|\mathcal{H}| \cdot \sum_{i=1}^s|\mathcal{A}_i| \cdot |\mathcal{B}_i|.
\end{split}
    \]
\end{enumerate}
\end{coro}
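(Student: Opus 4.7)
The plan is to treat this as a pure counting statement on top of Theorem~\ref{th:5}. Since Theorem~\ref{th:5} already guarantees that $\mathcal{C}_7 \cup \mathcal{C}_8$ is an $(n,2\delta,k)_q$-CDC, it remains only to count $|\mathcal{C}_7 \cup \mathcal{C}_8|$ and observe that it equals $|\mathcal{C}_7| + |\mathcal{C}_8|$. Disjointness is automatic: for any $\mathcal{U}\in\mathcal{C}_7$ and $\mathcal{V}\in\mathcal{C}_8$ we have $d_S(\mathcal{U},\mathcal{V})\geq 2\delta > 0$, so $\mathcal{U}\neq\mathcal{V}$ as subspaces.

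First I would import the cardinality of $\mathcal{C}_7$ from the proof of Theorem~\ref{th:4}, namely $|\mathcal{C}_7|=N_2$, which produces the first two (respectively three) summands appearing in each of cases (1) and (2) together with the term $q^{\lfloor (k-1)/2\rfloor}$ coming from the optimal $[\mathcal{F},\lfloor (k-1)/2\rfloor,3]_q$ code of Theorem~\ref{th:3}. Next I would compute $|\mathcal{C}_8|$. By construction $\mathcal{C}_8=\bigcup_{1\leq i\leq s}\mathcal{C}_8^i$; within each $\mathcal{C}_8^i$, the triple $(A_i,B_i,H)\in \tau(\mathcal{A}_i)\times \tau(\mathcal{B}_i)\times \mathcal{H}$ determines the codeword uniquely (the RREF blocks $A_i$ and $B_i$ are recoverable from the generator matrix, and so is $\varphi_{B_i}(H)$, hence $H$), so $|\mathcal{C}_8^i|=|\mathcal{A}_i|\cdot|\mathcal{B}_i|\cdot|\mathcal{H}|$. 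For different indices $i\neq j$ the codewords are distinct because Lemma~\ref{lemma:coset_construction} guarantees subspace distance at least $2\delta>0$ between any two distinct codewords of the coset construction, so the union is also disjoint, giving $|\mathcal{C}_8|=|\mathcal{H}|\cdot\sum_{i=1}^{s}|\mathcal{A}_i|\cdot|\mathcal{B}_i|$.

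Adding $|\mathcal{C}_7|$ and $|\mathcal{C}_8|$ then yields the two displayed bounds, split by whether $\delta=3$ or $\delta\neq 3$ exactly as in Theorem~\ref{th:4}. I expect no real obstacle here, since the non-trivial distance verifications have already been carried out in Theorem~\ref{th:5}; the only mildly delicate point is making sure the $\mathcal{C}_8^i$ are mutually disjoint and that the parameters $\mathcal{H}$, $(\mathcal{A}_i)$, $(\mathcal{B}_i)$ inherited from Lemma~\ref{lemma:coset_construction} are admissible under the hypotheses $n_1=k+1$, $n_2\geq k_2$, $k_1,k_2\geq\delta$, which is precisely what Theorem~\ref{th:5} assumes. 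Thus the corollary follows directly as a cardinality-sum consequence of Theorem~\ref{th:4} and Theorem~\ref{th:5}.
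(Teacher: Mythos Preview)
Your proposal is correct and matches the paper's approach: the paper does not give a separate proof for this corollary at all, merely stating that it ``is obtained by Theorem~\ref{th:5}'', so the intended argument is precisely the cardinality computation $|\mathcal{C}_7\cup\mathcal{C}_8|=|\mathcal{C}_7|+|\mathcal{C}_8|=N_2+|\mathcal{H}|\sum_{i=1}^{s}|\mathcal{A}_i|\cdot|\mathcal{B}_i|$ that you spell out. Your explicit justification of disjointness and of $|\mathcal{C}_8^i|=|\mathcal{A}_i|\cdot|\mathcal{B}_i|\cdot|\mathcal{H}|$ is more careful than what the paper writes, but it is exactly what is implicitly being used.
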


The new class of $ (19,6,8)_q $-CDCs by Theorem~\ref{th:5} is constructed as follows.
\begin{example}
Let \( n = 19 \), \( \delta = 3 \), \( k = 8 \) with \( n_1 =9, n_2 = 10 \), \( \delta_1 = 1, \delta_2 = 2 \), \( k_1 = 3 \), \( k_2 = 5 \) in Theorem~\ref{th:5}. Then \( \mathcal{C}_7\) is a \((19, N_2, 6, 8)_q\)-CDC where \( N_2 = q^{66} + q^{57} +q^{52}+q^{50}+q^{47}+q^{42}+q^{3} \). The explicit construction of \( \mathcal{C}_8 \) is as follows.

Let $\mathcal{H}$ be a $[3 \times 5, 3]_q$ MRD code, then $|\mathcal{H}|=q^5$. By Lemma~\ref{lemma:cascaded_codes}, we obtain a list of $(9, 6, 3)_q$-CDCs $(\mathcal{A}_i)_{1 \leq i \leq q^{10}}$ with distance 2 (Table~\ref{table:8}) and a list of $(10,6,5)_q$-CDCs $(\mathcal{B}_i)_{1 \leq i \leq q^{5}+3q^{4}}$ with distance 4 (Table~\ref{table:9}). Furthermore, these CDCs are reordered and illustrated in Table~\ref{table:10}.  

\begin{table}[H]
  \centering
  \caption{A list of $(9,6,3)_q$-CDCs with distance 2}\label{table:8}
\vspace{-10pt}
  \begin{tabular}{|c|c c|c c|}
    \hline
    $(9,6,3)_2$ CWC & & & \multicolumn{2}{c|}{$(9,6,3)_q$-CDCs $(\mathcal{A}_i)$} \\
    \hline
     & $D_{v_i}$ & $s_{v_i}$ & size & number \\
    \hline
    $U_1 \quad  (011100000)$ & $q^5$ & $q^{10}$ & $q^{5}+1$ & $q^6$\\
    $\quad \quad (000011100)$ & $1$ & $q^6$ & $q^5$ & $q^{10} - q^6 $ \\
    \hline
  \end{tabular}
\end{table}

\begin{table}[H]
  \centering
  \caption{A list of $(8,6,5)_q$-CDCs with distance 4}\label{table:9}
\vspace{-10pt}
  \begin{tabular}{|c|c c|c c|c c|}
    \hline
    $(10,6,5)_2$ CWC & & & \multicolumn{2}{c|}{$(10,6,5)_q$-CDCs} &  \multicolumn{2}{c|}{reordered $(10,6,5)_q$-CDCs $(\mathcal{B}_i)$} \\
    \hline
     & $D_{v_i}$ & $s_{v_i}$ & size & number & size & number\\
    \hline
    $V_1 \quad (1111000010)$ & $q^{11}$ & $q^{5}$ & $q^{11}+q^3$ & $q^4 $ & $q^{11}+q^3$ & $q^{4}$\\
    $\quad \quad (1000111010)$ & $q^{3}$ & $q^{4}$ & $ q^{11}  $ & $q^5-q^4$ & $ q^{11}$ & $q^5-q^4$\\
    \cline{1-5}
    $V_2 \quad (0110110010)$ & $q^6 $ & $q^{4}$ & $q^6 $ & $q^{4}$ & $q^6 $ & $q^{4}$\\
    \cline{1-5}
    $V_3 \quad (0101101010)$ & $q^6 $ & $q^{4}$ & $q^6 $ & $q^{4}$ & $q^6 $ & $q^{4}$\\
    \cline{1-5}
    $V_4 \quad (0011011010)$ & $ q^2 $ & $q^{4}$ & $ q^2 $ & $q^{4}$ & $ q^2 $ & $q^{4}$\\
    \hline
  \end{tabular}
\end{table}

\begin{table}[H]
  \centering
  \caption{}\label{table:10}
\vspace{-10pt}
  \begin{tabular}{|c|c|c|c|c|c|c|}
    \hline
    &$|\mathcal{A}_i|$ & $q^5+1$ & $q^5+1$ & $q^5+1$ & $q^5+1$ & $q^5$ \\
    \cline{2-7}
    $q = 2$ & $|\mathcal{B}_i|$ & $q^{11}+q^3$ & $q^{11}$ & $q^6$ & $q^4$ & $q^2$ \\
    \cline{2-7}
    &number & $q^4$ & $q^{5}-q^4$ & $q^4$ & $q^4$ & $q^4$ \\
    \hline
    &$|\mathcal{A}_i|$ & $q^5+1$ & $q^5+1$ & $q^5+1$ & $q^5+1$ & $q^5+1$ \\
    \cline{2-7}
    $q \geq 3$ & $|\mathcal{B}_i|$ & $q^{11}+q^3$ & $q^{11}$ & $q^6$ & $q^4$ & $q^2$ \\
    \cline{2-7}
    &number & $q^4$ & $q^{5}-q^4$ & $q^4$ & $q^4$ & $q^4$ \\
    \hline
  \end{tabular}
\end{table}

The number of $\mathcal{A}_i$ is $q^{10}$ and the number of $\mathcal{B}_i$ is $q^{5}+3q^4$, then \( s = \min\{q^{10},q^{5}+3q^4\}\). To obtain large CDCs, by applying Lemma~\ref{lemma:rearrangement}, \( \mathcal{C}_8 = \bigcup_{i=1}^{s} \mathcal{C}_8^i \) has cardinality
\[
\begin{split}
|\mathcal{C}_8| &= |\mathcal{H}| \cdot \sum_{i=1}^{s} |\mathcal{C}_8^i|= |\mathcal{H}| \cdot \sum_{i=1}^{s} |\mathcal{A}_i| \cdot |\mathcal{B}_i| \\
&=
\begin{cases} 
\displaystyle q^{26}+q^{21}+q^{20}+q^{18}+q^{17}+q^{16}+q^{15}+q^{13}+q^{12}, & \text{if } q = 2; \\ 
\displaystyle q^{26}+q^{21}+q^{20}+q^{18}+q^{17}+q^{16}+q^{15}+q^{12}+q^{11}, & \text{if } q \geq 3.
\end{cases}
\end{split}
\]

Finally, $\mathcal{C}_7 \cup \mathcal{C}_8$ is a $(19, 6, 8)_q$-CDC and
\[
\begin{split}
A_q(19,6,8)
& \geq |\mathcal{C}_7|+|\mathcal{C}_8|\\
& = q^{66} + q^{57} +q^{52}+q^{50}+q^{47}+q^{42}+q^{26}+q^{21}+q^{20}+q^{18}+q^{17}\\
& + q^{16}+q^{15}+q^{12}+q^{11}+q^{3},
\end{split}
\]
where $q \geq 3$. When $q = 3$, we have $A_3(19,6,8) \geq 30904731631209804712703574912729$, which improves the lower bound $30904724644711433087390807634522$ given in \cite{26}. For $q \geq 3$, this bound is better than the lower bound of $A_q (19,6,8)$ in \cite{26}.
\end{example}

\section{Conclusions}

The paper aims to construct new classes of CDCs to improve the lower bounds of $A_q(n,d,k)$. The main contributions of this paper are summarized as follows. Firstly, by cosets of optimal FDRMCs, the construction of parallel cosets of optimal FDRMCs is proposed by using the list of CDCs and inverse list of CDCs. Then a new class of $ (18,8,9)_q $-CDCs is obtained by combining the parallel cosets of optimal FDRMCs with parallel linkage construction. Next, we present a new set of identifying vectors and establish new lower bounds of $ (19,8,9)_q $-CDCs in Theorem~\ref{th:1} through the multilevel construction which reduces the complexity of selection for identifying vectors. By inserting the coset construction into Theorem~\ref{th:1}, a new construction of CDCs is provided to improve the lower bounds on the cardinality of some CDCs. Furthermore, a new class of optimal FDRMCs is constructed and the new optimal FDRMCs are applied to Theorem~\ref{th:1}, thereby further increasing the number of codewords for more CDCs. Finally, under the specific parameter constraints, the coset construction is inserted into Theorem~\ref{th:4}, yielding a new class of large CDCs.

For the constructions in this paper, the lower bounds for CDCs with minimum distances $d = 6$ and $d = 8$ are established, which is larger than the previously
best-known results (see Table~\ref{table:11}). The choosing of better set of identifying vectors and verification of optimality for FDRMCs are critical to the constructions in this paper. If more effective sets of identifying vectors and additional new classes of optimal FDRMCs can be provided, further improvements to the lower bounds of $A_q(n,d,k)$ for more CDCs could be achieved.

\section*{Acknowledgement}
G. Wang and H. Yao are supported by the National Natural Science Foundation of China (No. 12301670), the Natural Science Foundation of Tianjin (No. 23JCQNJC00050), the Fundamental Research Funds for the Central Universities of China (No. 3122024PT24) and the Graduate Student Research and Innovation Fund of Civil Aviation University of China (No. 2024YJSKC06001). F.-W Fu is supported by the National Key Research and Development Program of China (Grant No. 2018YFA0704703), the National Natural Science Foundation of China (Grant No. 61971243), the Natural Science Foundation of Tianjin (20JCZDJC00610), the Fundamental Research Funds for the Central Universities of China (Nankai University) and the Nankai Zhide Foundation. 
\appendix 

\begin{table}[H]
\centering
\caption{New lower bounds of $A_q(n,2\delta,k)$}\label{table:11}
\vskip 2mm 
\setlength{\tabcolsep}{10pt}
\vspace{-10pt}
{
\begin{tabular}{|c|c|c|c|>{\centering\arraybackslash}m{1cm}|}

\cline{1-5}
$A_q(n,2\delta,k)$&  New lower bounds &Old lower bounds&Differences & Refs. \\
\cline{1-5}
$A_2(18,8,9)$&\makecell[c]{1801521539\underline{91}\\\underline{16937}}&\makecell[c]{1801521539\underline{81}\\\underline{01558}}&\makecell[c]{1015379}&\\
\cline{1-4}
$A_3(18,8,9)$&\makecell[c]{5814973938041\\76\underline{70685308834}}&\makecell[c]{5814973938041\\76\underline{67212878011}}&\makecell[c]{3472430823}&\\
\cline{1-4}
$A_4(18,6,9)$&\makecell[c]{324518553767\\84298642\underline{4312}\\\underline{397880897}}&\makecell[c]{324518553767\\84298642\underline{3213}\\\underline{960023182}}&\makecell[c]{1098437857715}& \\
\cline{1-4}
$A_5(18,8,9)$&\makecell[c]{555111512317\\35878357960\underline{2}\\\underline{12196595695126}}&\makecell[c]{555111512317\\35878357960\underline{1}\\\underline{16859681752031}}&\makecell[c]{95336913943095}& \cite{25}\\
\cline{1-4}
$A_7(18,8,9)$&\makecell[c]{431811456739\\659181762301\\6095\underline{36509153}\\\underline{0833971362}}& \makecell[c]{431811456739\\659181762301\\6095\underline{28530401}\\\underline{2098975554}}&\makecell[c]{797875187349\\95808}&\\
\cline{1-4}
$A_8(18,8,9)$&\makecell[c]{584600654932\\363583793403\\430292\underline{508651}\\\underline{1686985425409}}&\makecell[c]{584600654932\\363583793403\\430292\underline{393362}\\\underline{5366753377384}}&\makecell[c]{115288632023\\2048025}&\\
\cline{1-4}
$A_9(18,8,9)$&\makecell[c]{338139191352\\272842462028\\02470185\underline{2687}\\\underline{107871571291}\\\underline{3542}}&\makecell[c]{338139191352\\272842462028\\02470185\underline{1471}\\\underline{361914779407}\\\underline{9924}}&\makecell[c]{121574595679\\18833618}&\\
\cline{1-5}
$A_3(19,8,9)$&\makecell[c]{423911592\underline{601}\\\underline{372232099951}\\\underline{20164}}& \makecell[c]{423911592\underline{599}\\\underline{871254994836}\\\underline{52096}}&\makecell[c]{150097710511\\468068}&\\
\cline{1-4}
$A_4(19,8,9)$&\makecell[c]{132922799609\\440\underline{560516369}\\\underline{2592688267264}}&\makecell[c]{132922799609\\440\underline{088272515}\\\underline{2129005125632}}&\makecell[c]{472243854046\\3683141632}&\\
\cline{1-4}
$A_7(19,8,9)$&\makecell[c]{508021860739\\6386\underline{52025471}\\\underline{871534256456}\\\underline{974722786020}\\\underline{804}}& \makecell[c]{508021860739\\6386\underline{38374792}\\\underline{724238783200}\\\underline{605413176246}\\\underline{272}}&\makecell[c]{136506791472\\954732563693\\09609774532}&\cite{table}\\
\cline{1-4}
$A_8(19,8,9)$&\makecell[c]{153249554086\\5894302876\underline{54}\\\underline{228073872583}\\\underline{745307525329}\\\underline{0860544} }&\makecell[c]{153249554086\\5894302876\underline{21}\\\underline{776216572459}\\\underline{761245803047}\\\underline{4985472}}&\makecell[c]{324518573001\\239840617222\\815875072}&\\
\hline
$A_3(17,6,8)$&\makecell[c]{5815\underline{270487450}\\\underline{2104749268072}}& \makecell[c]{5815\underline{186345194}\\\underline{6414791142287}}&\makecell[c]{841422555689\\958125785}&\\
\cline{1-4}
$A_4(17,6,8)$&\makecell[c]{324519\underline{792884}\\\underline{131441190199}\\\underline{176986624}}& \makecell[c]{324519\underline{094951}\\\underline{964764830545}\\\underline{503899935}}&\makecell[c]{697932166676\\359653673086\\689}&\\
\cline{1-4}
$A_5(17,6,8)$&\makecell[c]{555111\underline{796624}\\\underline{2891363829615}\\\underline{9010253906250}}& \makecell[c]{555111\underline{600407}\\\underline{3007983442483}\\\underline{7423236913732}}&\makecell[c]{196216988338\\038713215870\\16992518}&\\
\hline
\end{tabular}}
\end{table}

\begin{table}[H]
	\centering
	
	\vskip 2mm \setlength{\tabcolsep}{10pt}
{
\begin{tabular}{|c|c|c|c|>{\centering\arraybackslash}m{1cm}|}
\hline
$A_7(17,6,8)$&\makecell[c]{4318114\underline{67440}\\\underline{983810704691}\\\underline{555602040067}\\\underline{5343538188}\\}&  \makecell[c]{4318114\underline{58814}\\\underline{229328190145}\\\underline{779776047452}\\\underline{2447137650}\\}&\makecell[c]{8626754482\\514545775\\825992615\\2896400538}& \cite{11}\\
\cline{1-4}
$A_8(17,6,8)$&	\makecell[c]{58460065\underline{9288}\\\underline{110467644413}\\\underline{889455604591}\\\underline{4458116063232}}& \makecell[c]{58460065\underline{5642}\\\underline{087187407545}\\\underline{566975906539}\\\underline{0165175356426}}&\makecell[c] {3646023280\\2368683224\\7969805242\\92940706806
}&\\
\cline{1-4}
$A_9(17,6,8)$&\makecell[c]{33813919\underline{222}\\\underline{50839547150}\\\underline{90928115048}\\\underline{34902356584}\\\underline{30212094}}& \makecell[c]{33813919\underline{147}\\\underline{48407703492}\\\underline{58063849227}\\\underline{15712541982}\\\underline{93516660}}&\makecell[c]{75024318436\\5832864265\\8211918981\\460136695434}&\\
\cline{1-5}
$A_3(15,6,6)$&	\makecell[c]{150102\underline{606086}\\\underline{671257}} &\makecell[c]{150102\underline{543990}\\\underline{846750}}&\makecell[c]{62095824507}&\\
\cline{1-4}
$A_4(15,6,6)$&\makecell[c]{4722384\underline{81392}\\\underline{7520272400}}& \makecell[c]{ 4722384\underline{77884}\\\underline{1908199452}}&\makecell[c]{35085612072948}&\\
\cline{1-4}
$A_5(15,6,6)$	&\makecell[c]{145519227\underline{4332}\\\underline{0465332031275}}&\makecell[c]{ 145519227\underline{3855}\\\underline{7090682988320}}&\makecell[c]{476337464904\\2955}&\\
\cline{1-4}
$A_7(15,6,6)$ & 
\makecell[c]{26517309117\underline{7}\\\underline{141670880487}\\\underline{7198893}} & 
\makecell[c]{26517309117\underline{6}\\\underline{359901081761}\\\underline{6918746}} & 
\makecell[c]{781769798726\\0280147} &\cite{6} \\
\cline{1-4}
$A_8(15,6,6)$&	\makecell[c]{324518556081\\\underline{148306447942}\\\underline{411092032}}&\makecell[c]{324518556081\\\underline{000753445053}\\\underline{205203320}}&\makecell[c]{147553002889\\205888712}&\\
\cline{1-4}
$A_9(15,6,6)$&\makecell[c]{225283996031\\\underline{706473997617}\\\underline{29727028495}}&\makecell[c]{225283996031\\\underline{686780291297}\\\underline{80303636252}}&\makecell[c]{196937063194\\9423392243}&\\
\hline
$A_3(16,6,6)$ & \makecell[c]{1215830\underline{85616}\\\underline{14895971}} & \makecell[c]{1215830\underline{60112}\\\underline{47867322}} & \makecell[c]{2550367028649}& \\
\cline{1-4}
$A_4(16,6,6)$&\makecell[c]{120893050\underline{786}\\\underline{6243608870928}}& \makecell[c]{120893050\underline{335}\\\underline{ 8636753567772}}&\makecell[c]{450760685530\\3156}&\\
\cline{1-4}
$A_5(16,6,6)$&\makecell[c]{909495171\underline{308}\\\underline{565155029296}\\\underline{8775}}& \makecell[c]{909495171\underline{159}\\\underline{508342711658}\\\underline{2070}}&\makecell[c]{149056812317\\6386705}&\\
\cline{1-4}
$A_7(16,6,6)$&\makecell[c]{63668059191\underline{5}\\\underline{378459604872}\\\underline{6483402851}}& \makecell[c]{63668059191\underline{4}\\\underline{439657074085}\\\underline{0492991138}}&\makecell[c]{938802530787\\5990411713}&\cite{26}\\
\cline{1-4}
$A_8(16,6,6)$&\makecell[c]{132922800570\\\underline{808124097924}\\\underline{0495393800256}}& \makecell[c]{132922800570\\\underline{777900046828}\\\underline{1490544937336}}&\makecell[c]{302240510959\\004848862920}&\\
\cline{1-4}
$A_9(16,6,6)$&\makecell[c]{147808829796\\3\underline{96156617366}\\\underline{471197246851}\\\underline{303}}& \makecell[c]{147808829796\\3\underline{89695427596}\\\underline{956712706655}\\\underline{942}}&\makecell[c]{646118976951\\4484540195361}&\\
\cline{1-5}
$A_3(16,6,7)$ & \makecell[c]{295446\underline{403908}\\\underline{5249447217}} & \makecell[c]{295446\underline{280881}\\\underline{2115984384}} & \makecell[c]{123027313346\\2833}& \\
\cline{1-4}
$A_4(16,6,7)$&\makecell[c]{12379447\underline{7087}\\\underline{974317907299}\\\underline{5392}}& \makecell[c]{12379447\underline{6166}\\\underline{987754305381}\\\underline{9904}}&\makecell[c]{920986563601\\9175488}&\\
\cline{1-5}
\end{tabular}}
\end{table}

\begin{table}[H]
	\centering
	
	\vskip 2mm \setlength{\tabcolsep}{10pt}
{
\begin{tabular}{|c|c|c|c|>{\centering\arraybackslash}m{1cm}|}

\cline{1-5}
$A_5(16,6,7)$&\makecell[c]{284217239\underline{916}\\\underline{339521408081}\\\underline{05468875}}& \makecell[c]{284217239\underline{823}\\\underline{266882297455}\\\underline{25522432}}&\makecell[c]{93072639110\\62579946443}& \\
\cline{1-4}
$A_7(16,6,7)$&\makecell[c]{107006907075\\\underline{644439368118}\\\underline{347289028841}\\\underline{893}}& \makecell[c]{107006907075\\\underline{328954409717}\\\underline{156935706345}\\\underline{472}}&\makecell[c]{315484958401\\190353322496\\421}&\cite{table}\\
\cline{1-4}
$A_8(16,6,7)$&\makecell[c]{43556143290\underline{4}\\\underline{184843246079}\\\underline{208945148860}\\\underline{83072}}& \makecell[c]{43556143290\underline{3}\\\underline{986769817386}\\\underline{765344223516}\\\underline{42624}}&\makecell[c]{198073428692\\443600925344\\40448}&\\
\cline{1-4}
$A_9(16,6,7)$&\makecell[c]{872796359061\\6\underline{87501814071}\\\underline{709119593732}\\\underline{5903887}}& \makecell[c]{872796359061\\6\underline{11247377647}\\\underline{646651075755}\\\underline{6649984}}&\makecell[c]{762544364240\\624685179769\\253903}&\\
\cline{1-5}
$A_3(17,6,6)$ & \makecell[c]{984822\underline{993490}\\\underline{806572931}} & \makecell[c]{984822\underline{786755}\\\underline{034714042}} & \makecell[c]{206735771858\\889}& \\
\cline{1-4}
$A_4(17,6,6)$&\makecell[c]{3094862\underline{10013}\\\underline{758363870953}\\\underline{488}}& \makecell[c]{3094862\underline{08859}\\\underline{711457728282}\\\underline{652}}&\makecell[c]{115404690614\\2670836}&\\
\cline{1-4}
$A_5(17,6,6)$&\makecell[c]{56843448\underline{2067}\\\underline{853221893310}\\\underline{5468775}}& \makecell[c]{56843448\underline{1974}\\\underline{691165142485}\\\underline{0957070}}&\makecell[c]{931620567508\\254511705}& \\
\cline{1-4}
$A_7(17,6,6)$&\makecell[c]{152867010118\\\underline{882368151129}\\\underline{922866501276}\\\underline{51}}& \makecell[c]{152867010118\\\underline{656961337693}\\\underline{834729514940}\\\underline{02}}&\makecell[c]{225406813436\\088136986336\\49}& \cite{26}\\
\cline{1-4}
$A_8(17,6,6)$&\makecell[c]{54445179113\underline{8}\\\underline{030076305096}\\\underline{906913300558}\\\underline{6496}}& \makecell[c]{54445179113\underline{7}\\\underline{906278523294}\\\underline{194583219170}\\\underline{7512}}&\makecell[c]{123797781802\\712330081387\\8984}&\\
\cline{1-4}
$A_9(17,6,6)$&\makecell[c]{969773732294\\1\underline{55183566541}\\\underline{417525136590}\\\underline{867623}}& \makecell[c]{969773732294\\1\underline{12791690369}\\\underline{937653123006}\\\underline{406102}}&\makecell[c]{423918761714\\798720135844\\61521}&\\
\cline{1-5}
$A_3(17,6,7)$ & \makecell[c]{717934\underline{76149}\\\underline{77156156671}\\\underline{97}} & \makecell[c]{717934\underline{51394}\\\underline{57010795334}\\\underline{38}} & \makecell[c]{24755201453\\6133759}& \\
\cline{1-4}
$A_4(17,6,7)$&\makecell[c]{12676554\underline{453}\\\underline{80857015370}\\\underline{747215936}}& \makecell[c]{12676554\underline{371}\\\underline{38715070300}\\\underline{646782528}}&\makecell[c]{82421419450\\70100433408}&\\
\cline{1-4}
$A_5(17,6,7)$&\makecell[c]{888178874\underline{73}\\\underline{85610044002}\\\underline{5329589843875}}& \makecell[c]{888178874\underline{47}\\\underline{68854330508}\\\underline{8386195735750}}&\makecell[c]{26167557134\\93694339410\\8125}& \\
\cline{1-4}
$A_7(17,6,7)$&\makecell[c]{17984650872\underline{2}\\\underline{035609245996}\\\underline{506288670773}\\\underline{9931193}}& \makecell[c]{17984650872\underline{1}\\\underline{543261077131}\\\underline{432064842750}\\\underline{7089598}}&\makecell[c]{492348168865\\074223828023\\2841595}& \cite{26}\\
\cline{1-5}
\end{tabular}}
\end{table}

\begin{table}[H]
	\centering
	
	\vskip 2mm \setlength{\tabcolsep}{10pt}
{
\begin{tabular}{|c|c|c|c|>{\centering\arraybackslash}m{1cm}|}

\cline{1-5}
$A_8(17,6,7)$&\makecell[c]{14272477033\underline{4}\\\underline{043289434875}\\\underline{235187146378}\\\underline{7153326592}}& \makecell[c]{14272477033\underline{3}\\\underline{982450431728}\\\underline{689308408589}\\\underline{8498974208}}&\makecell[c]{60839003146\\54587873778\\88654352384}&\\
\cline{1-4}
$A_9(17,6,7)$&\makecell[c]{515377522062\\\underline{335852946191}\\\underline{203518028903}\\\underline{157255577471}}& \makecell[c]{515377522062\\\underline{293302837556}\\\underline{983424271177}\\\underline{106514588358}}&\makecell[c]{425501086342\\200937577260\\50740989113}&\\
\cline{1-5}
$A_3(18,6,7)$ & \makecell[c]{174458\underline{147043}\\\underline{944894607122}\\\underline{337}} & \makecell[c]{174458\underline{086133}\\\underline{951448364668}\\\underline{984}} & \makecell[c]{609099934462\\42453353}& \\
\cline{1-4}
$A_4(18,6,7)$&\makecell[c]{12980791\underline{7606}\\\underline{999758373964}\\\underline{5149052992}}& \makecell[c]{12980791\underline{6760}\\\underline{321574330640}\\\underline{4506875456}}&\makecell[c]{846678184043\\3240642177536}&\\
\cline{1-4}
$A_5(18,6,7)$&\makecell[c]{277555898\underline{355}\\\underline{800313875079}\\\underline{154968261718}\\\underline{875}}& \makecell[c]{277555898\underline{273}\\\underline{931997862851}\\\underline{979819594173}\\\underline{250}}&\makecell[c]{818683160122\\271751486675\\45625}& \\
\cline{1-4}
$A_7(18,6,7)$&\makecell[c]{30226802720\underline{9}\\\underline{125248459746}\\\underline{328119368969}\\\underline{85017796293}}& \makecell[c]{30226802720\underline{8}\\\underline{297537838299}\\\underline{555148538388}\\\underline{66597817940}}&\makecell[c]{827710621446\\772970830581\\18419978353}& \cite{26}\\
\cline{1-4}
$A_8(18,6,7)$&\makecell[c]{467680527430\\\underline{593050820199}\\\underline{170661241253}\\\underline{774401889899}\\\underline{52}}& \makecell[c]{467680527430\\\underline{393663434529}\\\underline{588211578061}\\\underline{956672312202}\\\underline{24}}&\makecell[c]{199387385669\\582449663191\\817729577697\\28}&\\
\cline{1-4}
$A_9(18,6,7)$&\makecell[c]{304325273002\\5\underline{88697806196}\\\underline{443765360887}\\\underline{025327845510}\\\underline{39087}}& \makecell[c]{304325273002\\5\underline{63570083080}\\\underline{541783083282}\\\underline{130947691001}\\\underline{99966}}&\makecell[c]{251277231159\\019822776048\\943801545083\\9121}&\\
\cline{1-5}
$A_3(19,6,7)$ & \makecell[c]{423933\underline{297316}\\\underline{786093895307}\\\underline{21357}} & \makecell[c]{423933\underline{149237}\\\underline{536450263627}\\\underline{78948}} & \makecell[c]{148079249643\\63167942409}& \\
\cline{1-4}
$A_4(19,6,7)$&\makecell[c]{13292330\underline{7629}\\\underline{567752574939}\\\underline{6632630198336}}& \makecell[c]{13292330\underline{6762}\\\underline{526364036239}\\\underline{2736535656000}}&\makecell[c]{867041388538\\700389609454\\2336}&\\
\cline{1-4}
$A_5(19,6,7)$&\makecell[c]{867362182\underline{361}\\\underline{875980859622}\\\underline{359275817871}\\\underline{093875}}& \makecell[c]{867362182\underline{106}\\\underline{035125792941}\\\underline{607845067641}\\\underline{048250}}&\makecell[c]{255840855066\\680751430750\\230045625}& \\
\cline{1-4}
$A_7(19,6,7)$&\makecell[c]{5080218733\underline{30}\\\underline{376805086295}\\\underline{653670223427}\\\underline{627194096531}\\\underline{993}}& \makecell[c]{5080218733\underline{28}\\\underline{985670761663}\\\underline{997418589797}\\\underline{038070776963}\\\underline{944}}&\makecell[c]{139113432463\\165625163363\\058912331956\\8049}& \cite{26}\\
\cline{1-5}
\end{tabular}}
\end{table}

\begin{table}[H]
	\centering
	
	\vskip 2mm \setlength{\tabcolsep}{10pt}
{
\begin{tabular}{|c|c|c|c|>{\centering\arraybackslash}m{1cm}|}

\cline{1-5}
$A_8(19,6,7)$&\makecell[c]{153249555228\\\underline{456730892762}\\\underline{864242275534}\\\underline{036796011280}\\\underline{5970432}}& \makecell[c]{153249555228\\\underline{391395614936}\\\underline{944645131131}\\\underline{639634566138}\\\underline{9971968}}&\makecell[c]{653352778259\\195971444023\\971614451415\\998464}&\\
\cline{1-4}
$A_9(19,6,7)$&\makecell[c]{179701030455\\2\underline{98600167580}\\\underline{938079007950}\\\underline{179585839495}\\\underline{4264002271}}& \makecell[c]{179701030455\\2\underline{83762496487}\\\underline{559260774270}\\\underline{525668143762}\\\underline{3562556750}}&\makecell[c]{148376710933\\788182336796\\539176957330\\701445521}&\\
\cline{1-5}
$A_3(19,6,8)$ & \makecell[c]{309047\underline{316312}\\\underline{098047127035}\\\underline{74912729}} & \makecell[c]{309047\underline{246447}\\\underline{114330873908}\\\underline{07634522}} & \makecell[c]{698649837162\\5312767278207}& \\
\cline{1-4}
$A_4(19,6,8)$&\makecell[c]{54445386\underline{6149}\\\underline{233614204609}\\\underline{209399128568}\\\underline{6336}}& \makecell[c]{54445386\underline{4011}\\\underline{610047167549}\\\underline{954073075190}\\\underline{5792}}&\makecell[c]{213762356703\\705925532605\\33780544}&\\
\cline{1-4}
$A_5(19,6,8)$&\makecell[c]{1355253409\underline{72}\\\underline{726839936512}\\\underline{988075619745}\\\underline{19335937625}}& \makecell[c]{1355253409\underline{49}\\\underline{709248369776}\\\underline{453457075313}\\\underline{42150218750}}&\makecell[c]{230175915667\\365346185444\\3177185718875}& \\
\cline{1-4}
$A_7(19,6,8)$&\makecell[c]{59768265375\underline{3}\\\underline{591744195505}\\\underline{215756103519}\\\underline{686849900095}\\\underline{34834449}}& \makecell[c]{59768265375\underline{2}\\\underline{693264671475}\\\underline{524357890584}\\\underline{532511927153}\\\underline{68746610}}&\makecell[c]{898479524029\\691398212935\\154337972941\\66087839}& \cite{26}\\
\cline{1-4}
$A_8(19,6,8)$&\makecell[c]{401734514057\\9\underline{95696026883}\\\underline{807366764104}\\\underline{231585069398}\\\underline{499558162944}}& \makecell[c]{401734514057\\9\underline{02970381580}\\\underline{059532262902}\\\underline{425878332092}\\\underline{722483822592}}&\makecell[c]{927256453037\\478345012018\\057067373057\\77074340352}&\\
\cline{1-4}

$A_9(19,6,8)$&\makecell[c]{955004953261\\\underline{902203514935}\\\underline{252229977723}\\\underline{629972735205}\\\underline{787694503979}\\\underline{521}}& \makecell[c]{955004953261\\\underline{859955471378}\\\underline{337350051110}\\\underline{369915665833}\\\underline{762290880889}\\\underline{782}}&\makecell[c]{422480435569\\148799266132\\600570693720\\254036230897\\39}&\\
\cline{1-5}
\end{tabular}}
\end{table}

\end{document}